\font\scs=cmcsc10 	%Small capitals for theorems%
\def\PI{P$_{\mbox{\rm\scriptsize I}}$}
\def\PsmallI{P$_{\mbox{\rm\tiny I}}$}
\def\PII{P$_{\mbox{\rm\scriptsize II}}$}
\def\PItwo{P$_{\mbox{\rm\scriptsize I}}^2$}
\def\PsmallItwo{P$_{\mbox{\rm\tiny I}}^2$}
\def\Re{\mbox{\rm Re}\,}
\def\Im{\mbox{\rm Im}\,}
\def\scriptRe{{\scriptstyle{\rm Re}}\,}
\def\varkappa{\mbox{\msbm\char'173}}      %\varkappa
\theoremstyle{plain}
\newtheorem{thm}{Theorem}[section]
\newtheorem{RHP}{Riemann-Hilbert problem}
\newtheorem{con}[thm]{Conjecture}
\theoremstyle{definition}
\newtheorem{defn}{Definition}[section]
\theoremstyle{remark}
\newtheorem{rem}{Remark}[section]
\numberwithin{equation}{section}
\numberwithin{figure}{section}
\begin{document}
\thispagestyle{empty}

\title[On the tritronqu\'ee solutions of \PsmallItwo]
{On the tritronqu\'ee solutions of \PItwo}
\author{Tamara Grava}
\address{SISSA, Via Bonomea 265, 34136, Trieste, Italy and Department of Mathematics, Bristol University, UK}
\email{grava@sissa.it, tamara.grava@bristol.ac.uk}
\author{Andrei Kapaev}
\address{SPSU, Phys.\ Dept., Ulyanovskaya 3, 198504, 
St.~Petersburg, Russia}
\email{kapaev55@mail.ru}
\author{Christian Klein}
\address{Institut de Math\'ematiques de Bourgogne, Universit\'e 
de Bourgogne, 9 avenue Alain Savary, 21078 Dijon Cedex, France}
\email{Christian.Klein@u-bourgogne.fr}

\begin{abstract}
For equation \PsmallItwo, the second member in the \PsmallI\ hierarchy,
we prove existence of various degenerate solutions depending on 
the complex parameter $t$ and evaluate the asymptotics
in the complex $x$ plane for $|x|\to\infty$ and $t=o(x^{2/3})$. Using this result, we 
identify the most degenerate 
solutions $u^{(m)}(x,t)$, $\hat u^{(m)}(x,t)$, $m=0,\dots,6$, called 
{\em tritronqu\'ee}, describe the quasi-linear Stokes phenomenon and 
find the large $n$ asymptotics of the  coefficients in a formal 
expansion of these solutions. 
We supplement our findings by a numerical study of the tritronqu\'ee solutions.
\end{abstract}
\subjclass[2000]{33E17, 33F05}
\keywords{Painlev\'e equations, tritronqu\'ee solutions, 
Riemann-Hilbert problem, numerical methods}
\maketitle

\section{Introduction}

Equation \PItwo, the second member in the hierarchy of ODEs 
associated with the classical first Painlev\'e equation \PI, $y_{xx}=6y^2+x$, 
cf.\ \cite{I}, is the 4th order ODE
\begin{equation}\label{p12}
u_{xxxx}
+10u_x^2
+20uu_{xx}
+40(u^3-6tu+6x)=0
\end{equation}
depending on $t\in\mathbb{C}$ parametrically. In the last decades, this 
ODE has attracted significant attention \cite{BMP,CG,D1,D2} 
justified by its various  applications in  mathematics and physics.

An important  class of applications of \PItwo\ concerns the 
description of some critical regimes in random matrix models, 
as well as in the asymptotics of semi-classical orthogonal polynomials
and related Fredholm determinants, see e.g.\ \cite{CV, CIK}.
Furthermore a particular solution to the   \PItwo\  equation is conjectured to describe a   certain class of critical regimes to solutions of Hamiltonian 
PDEs \cite{Sul0, D1,D2}.  This conjecture is known as the 
\emph{universality conjecture} for Hamiltonian PDEs. So far it has been proved only for the 
Korteweg-de Vries equation   (KdV)
\begin{equation}\label{KdV}
u_t+uu_x+\tfrac{1}{12}u_{xxx}=0,
\end{equation}
and for its hierarchy \cite{CG, CG2}.
It is  known  that the KdV equation is compatible with   the \PItwo\ 
equation.  
Its solutions allow one to construct a 4-parameter 
family of so-called isomonodromic solutions to the KdV equation. 

It is a remarkable fact that, from the point of view of  physical
applications, the most interesting solutions to the classical Painlev\'e 
equations and their higher order analogs are those with a quasi-stationary 
behavior. For instance, applying equation \PItwo\ to string theory, 
Brezin,  Marinari, Parisi \cite{BMP} and Moore \cite{Moore} argued that, 
for $t=0$, there exists a regular solution $U_0(x,t=0)$ to (\ref{p12}) 
real on the real line with the asymptotic behavior 
% \begin{equation}
% \label{asym0}
$$U_0(x,t=0)\simeq \pm |6x|^{\frac{1}{3}},\quad x\rightarrow \pm \infty.$$
% \end{equation}

Dubrovin \cite{D1} conjectured the existence of the  solution 
$U_0(x,t)$, which has to be  pole free and regular on the real axis, for any $t\in{\mathbb R}$. 
The uniqueness 
of the real and regular on the real line solution to \PItwo\ for $t=0$ 
was proved in \cite{K}. The existence of such a solution was established by 
Claeys and Vanlessen in \cite{CV} for any $t$.

Below, we call  the solution $U_0(x,t)$  {\em tritronqu\'ee}. This term originates from 
the classical paper of P.~Boutroux \cite{Boutroux} devoted  to the 
asymptotic analysis of solutions to the \PI\ equation. In particular, he has shown 
that, though the generic asymptotic solutions to \PI\ are described
using the modulated elliptic Weierstra\ss $\wp$-function, 
there exist five special directions at infinity along which the elliptic 
asymptotics degenerates to trigonometric ones. According to Boutroux, 
such trigonometric asymptotic solutions are called ``tronqu\'ee''. 
``Bitronqu\'ee'' solutions are those 1-parameter solutions whose 
leading order algebraic asymptotic term, $y_{as}\sim\pm\sqrt{-x/6}$, 
admits an analytic continuation from the special ray into one of 
the adjacent complex sectors. ``Tritronqu\'ee'' solutions are 
particular 0-parameter solutions admitting an analytic continuation 
from the ray to the interior of both the adjacent complex sectors.
Remarkably, the latter asymptotic solutions remain quasi-stationary
in {\em four} of a total of five complex sectors separated by the above 
mentioned special rays.

In the present paper, we describe a set of similar solutions to 
equation \PItwo\ using   extensively the Riemann-Hilbert (RH) problem
approach. We prove that the physically interesting solution $U_0(x,t)$  real and
regular on the real line solution  has  an extension to the complex $x$ plane with  uniform 
algebraic asymptotics in the union of two sectors of the complex plane 
\begin{multline}\label{U0_sector}
U_0(x,t)\simeq-\sqrt[3]{6}\,x^{1/3},\quad
x\to\infty,
\\
\arg x\in
\bigl[-\tfrac{3\pi}{7}-\tfrac{3}{7}\arctan\tfrac{1}{\sqrt5},
\tfrac{3\pi}{7}+\tfrac{3}{7}\arctan\tfrac{1}{\sqrt5}\bigr]\cup
\\
\cup
\bigl[3\pi-\tfrac{3}{7}\arctan\tfrac{1}{\sqrt5},
3\pi+\tfrac{3}{7}\arctan\tfrac{1}{\sqrt5}\bigr],
\end{multline}
see Figure~\ref{fig_U0}.
%%%%%%%%%%%%%%%%%%%%%%%%%%%%%%%
\begin{figure}[htb]
\begin{center}
\mbox{\epsfig{figure=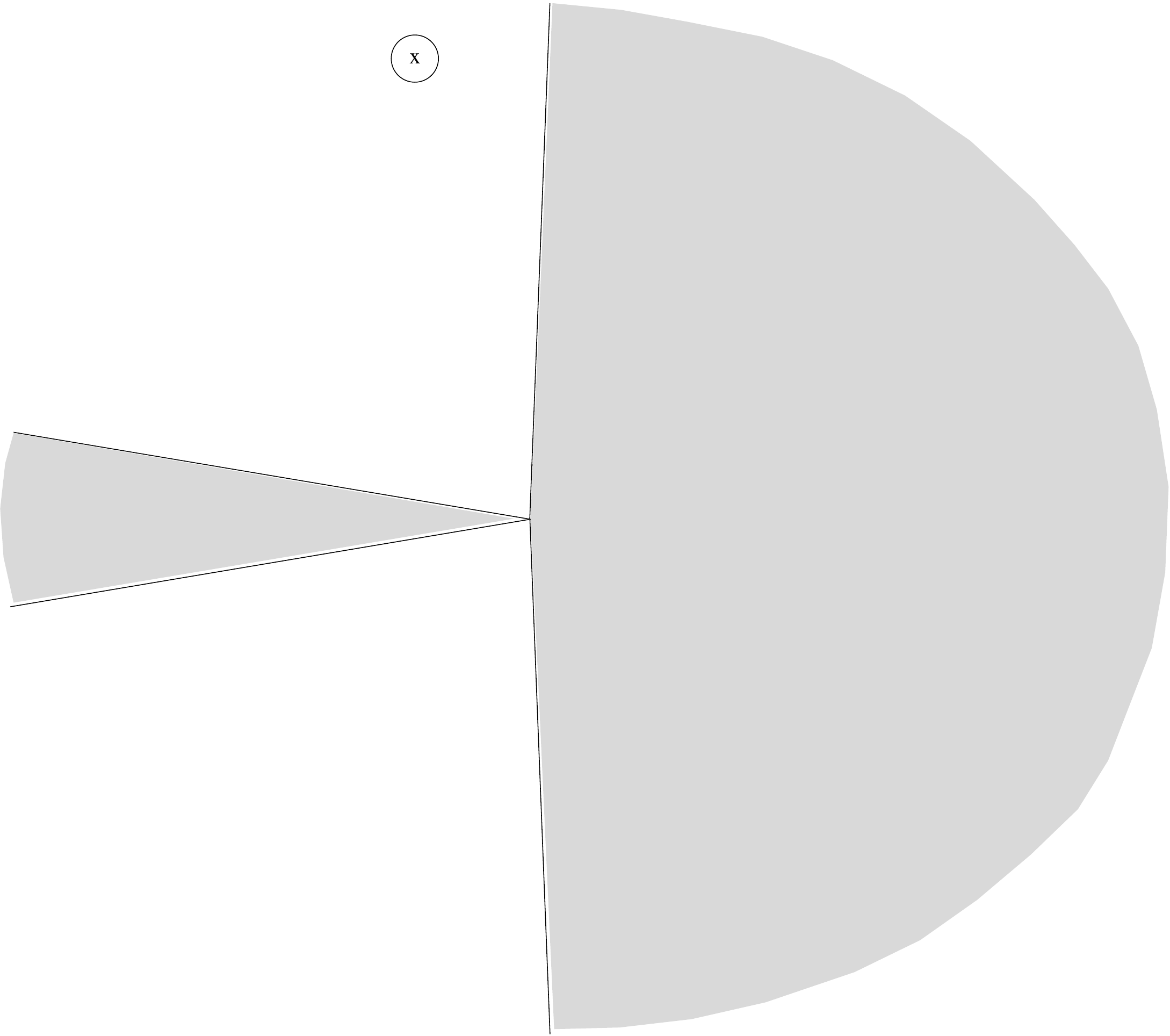,width=0.5\textwidth}}
\end{center}
\caption{Sector for the algebraic asymptotic behavior
of the solution $U_0(x)$ as $t=0$.}
\label{fig_U0}
\end{figure}
%%%%%%%%%%%%%%%%%%%%%%%%%%%%%%%

We also prove the existence and uniqueness of the 
solution $V_0(x,t)$ to \PItwo\ with the uniform algebraic 
asymptotics in the sector of the $x$ complex plane
\begin{multline}\label{V0_sector}
V_0(x,t)\simeq-\sqrt[3]{6}\,x^{1/3},\quad
x\to\infty,
\\
\shoveleft{
\arg x\in
\bigl[3\pi-\tfrac{6\pi}{7}+\tfrac{3}{7}\arctan\tfrac{1}{\sqrt5},
3\pi+\tfrac{6\pi}{7}-\tfrac{3}{7}\arctan\tfrac{1}{\sqrt5}\bigr],
}\hfill
\end{multline}
see Figure~\ref{fig_V0}.
%%%%%%%%%%%%%%%%%%%%%%%%%%%%%%%
\begin{figure}[htb]
\begin{center}
\mbox{\epsfig{figure=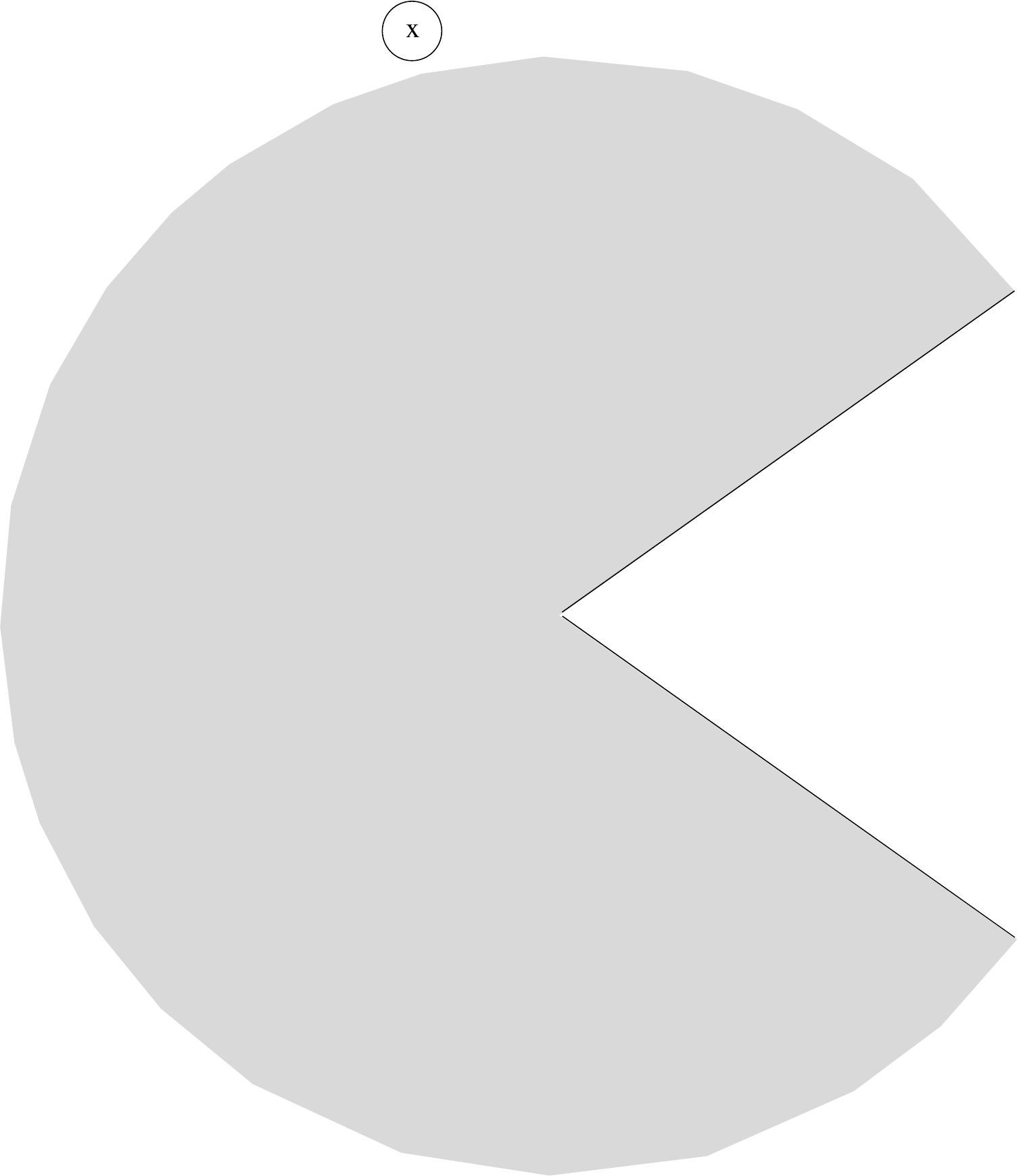,width=0.5\textwidth}}
\end{center}
\caption{Sector for the algebraic asymptotic behavior
of the solution $V_0(x)$ as $t=0$.}
\label{fig_V0}
\end{figure}
%%%%%%%%%%%%%%%%%%%%%%%%%%%%%%%

Similarly to $U_0(x,t)$, the solution $V_0(x,t)$ is real on the real line
but, in contrast to $U_0(x,t)$, it is singular on the positive part of the real line.
In the interior of the overlapping complex sector containing the negative real line,
the solutions $U_0(x,t)$ and $V_0(x,t)$ differ by exponentially small terms
while in two other overlapping complex sectors, $U_0(x,t)$ and $V_0(x,t)$
have different {\em leading} asymptotic behavior due to different 
choices of the branches of the cubic root of $x$.

With the results of Shimomura \cite{shimomura} on the Painlev\'e 
property of \PItwo, the solutions $U_0(x,t)$ and $V_0(x,t)$, whose existence
for large $x$ in the above sectors is proved below, extend to a globally defined
meromorphic function on the complex plane. We illustrate the behavior of 
$U_0(x,t)$, $V_0(x,t)$ numerically. The results can be summarized in 
the following conjecture:  
\begin{con}
    The solutions $U_{0}$ and $V_{0}$ are pole free in the whole 
    sectors of the complex plane specified in (\ref{U0_sector}) and 
    (\ref{V0_sector}) respectively. 
\end{con}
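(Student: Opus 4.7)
The proposed approach is to extend the Riemann--Hilbert (RH) analysis developed earlier in the paper for $|x|\to\infty$ to all finite $x$ within the sectors described by (\ref{U0_sector}) and (\ref{V0_sector}). The key observation is that existence of $U_0(x,t)$ and $V_0(x,t)$ at a given point $x$ is equivalent to solvability of the associated RH problem whose Stokes data encode the tritronqu\'ee character of the solution. Since by Shimomura's Painlev\'e property every solution of \PItwo\ is globally meromorphic, poles of $U_0$ or $V_0$ occur precisely at the (isolated) values of $x$ where the RH problem fails to be solvable; so the conjecture reduces to showing that the RH problem is solvable for \emph{every} $x$ in the closed sector.

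Concretely, the plan has three steps. First, perform the $g$-function transformation adapted to the sector direction and to the algebraic leading term $-\sqrt[3]{6}\,x^{1/3}$, so that the jump matrices on the lenses tend exponentially fast to the identity for large $|x|$. This is essentially the change of variables underlying the asymptotic proof already carried out. Second, establish a \emph{vanishing lemma}: the homogeneous RH problem, with identity normalization at infinity and the tritronqu\'ee jumps on the finite contour, admits only the trivial solution for every $x$ in the sector. Third, combine the vanishing lemma with the standard Fredholm alternative for the singular integral operator attached to the RH problem (Zhou's theorem) to conclude unique solvability, and hence absence of poles, for every $x$ in the sector.

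The heart of the difficulty, and the main obstacle, is the vanishing lemma for intermediate $|x|$. For large $|x|$, exponential damping in the Lax operator dominates and the argument reduces to the asymptotic analysis already in the paper. For $x$ in a compact region, no such damping is available, and one must exploit the simple structure of the tritronqu\'ee Stokes data (all Stokes multipliers in the relevant adjacent sectors vanish, leaving only one nontrivial generator) together with the Schwarz-type symmetry inherited from reality of $U_0,V_0$ on $\mathbb{R}$ for $t\in\mathbb{R}$. A viable strategy is to fold the contour via this symmetry into a configuration in which the resulting jump on the real axis is self-adjoint and positive; a standard Zhou-type argument then forces any homogeneous solution, extended across the real contour, to vanish.

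As a complementary route, one may try to prove the conjecture first at $t=0$ and then extend in $t$. At $t=0$ the additional scaling symmetry $x\mapsto \lambda^{7}x$, $u\mapsto\lambda^{2}u$ of \PItwo\ reduces the sector to a single angular domain and brings the analysis close to the Dubrovin--type conjecture for \PI, for which the Costin--Huang--Tanveer technology (combining RH vanishing with a quasi-solution error bound) is available and should transplant to our higher-order setting. Extension to all $t\in\mathbb{C}$ with $t=o(x^{2/3})$ would then proceed by isomonodromic deformation in $t$, using continuity of the Fredholm determinant of the singular integral operator and the fact that the Stokes data of $U_0$ and $V_0$ are independent of $t$.
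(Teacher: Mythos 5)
You should first be clear about the status of the statement: in the paper it is a \emph{Conjecture}, not a theorem. The paper proves solvability of the RH problem, and hence existence and pole-freeness of the tritronqu\'ee solutions, only for $|x|>\rho_0(t)$ inside the sectors (Theorems~\ref{phi=0_red_RHP12_solvability} and~\ref{phi=3pi_red_RHP32_solvability} and their rotations in Section~\ref{um_proliferation}); the full-sector claim (\ref{U0_sector})--(\ref{V0_sector}) is supported only by the numerics of Section~6. So your text is an attempt at an open problem, and judged as such it is a research program, not a proof. The framework you set up is the correct one and matches the paper's: by Shimomura's Painlev\'e property $U_0,V_0$ are globally meromorphic, the pole set coincides with the non-solvability set of the RH problem with the fixed tritronqu\'ee Stokes data, and Zhou's Fredholm alternative reduces unique solvability to a vanishing lemma for the homogeneous problem. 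But the decisive step --- the vanishing lemma at finite $x$ --- is exactly what you do not prove, and the mechanism you sketch for it cannot work as stated.

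Concretely: (i) any vanishing lemma valid at every point of the closed sectors must use the condition on $\arg x$ in an essential, quantitative way, because the same RH problem with the same Stokes data is certainly \emph{not} solvable for all $x\in{\mathbb C}$ --- $U_0$ and $V_0$ have poles in the complementary elliptic regions (the grey sectors of Figures~\ref{fig12}--\ref{fig14a}); your folding/positivity argument makes no reference to $\arg x$ at all, so if it were valid it would prove too much. (ii) The Schwarz-type symmetry you invoke relates the problem at $(x,t)$ to the problem at $(\bar x,\bar t)$, since the jumps enter conjugated by $e^{\theta(\lambda)\sigma_3}$ with $\theta$ depending on $x$; a self-adjoint, positive folded jump is therefore conceivable at best on the real $x$-axis with $t$ real, while the sectors in (\ref{U0_sector})--(\ref{V0_sector}) extend through non-real directions up to $\arg x=3\pi$, where no antiholomorphic symmetry fixes the problem. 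This is precisely why the analogous conjecture for the \PI\ tritronqu\'ee was not settled by a vanishing lemma but by the Costin--Huang--Tanveer quasi-solution construction with explicit error bounds; transplanting that machinery to the fourth-order equation (\ref{p12}), uniformly in the extra parameter $t$, is a substantial open project which your final paragraph names but does not begin. A smaller but genuine error along the way: \PItwo\ at $t=0$ has no continuous scaling symmetry $x\mapsto\lambda^7x$, $u\mapsto\lambda^2u$ --- under that substitution the terms $u_{xxxx}$ and $u_x^2$ in (\ref{p12}) scale as $\lambda^{-26}$ and $\lambda^{-10}$ respectively --- and only the discrete rotation $x\mapsto e^{i\frac{2\pi}{7}n}x$, $u\mapsto e^{-i\frac{4\pi}{7}n}u$ already exploited in Section~\ref{um_proliferation} survives (it closes only because $e^{i\frac{2\pi}{7}n}$ is a seventh root of unity), so the promised reduction of the sector ``to a single angular domain'' is not available. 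In sum, your proposal correctly identifies the reduction and the obstacles but leaves the conjecture exactly as open as the paper does.
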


We show that $V_0(x,t)$ and $U_0(x,t)$ are the members of the families of
0-parameter solutions $\{V_m(x,t)\}_{m=0}^6$ and $\{U_m(x,t)\}_{m=0}^6$ 
which we call the tritronqu\'ee solutions of type~I and type~II, respectively. 
The solutions of each family can be constructed using a $7$-th order rotational 
symmetry applied to  $V_0(x,t)$ and $U_0(x,t)$, respectively.

In the overlapping domains where the above mentioned solutions
exhibit identical power series expansion, we compute the exponentially
small differences between them and thus establish the so-called 
quasi-linear Stokes phenomenon for equation \PItwo.

The structure of the paper is the following. In Section~2, we describe 
the RH problem implied by the linear system in the auxiliary ``spectral'' variable 
whose isomonodromy deformations are controlled by \PItwo. Section~3 is 
devoted to a short description of the ``spectral'' curve associated with 
\PItwo\ and to the model algebraic curve used to construct the asymptotic 
solution to \PItwo\ of our interest. In Section~4, assuming the triviality of 
some of the Stokes multipliers, we prove the asymptotic solvability 
of the RH problem as the pair $(x,t)$ belongs to a particular domain
of ${\mathbb C}^2$ and find 2-parameter families of the quasi-stationary 
solutions to \PItwo. In Section~5, we find 1- and 0-parameter intersections 
of the found 2-parameter families called below the bi- and tritronqu\'ee 
solutions and describe the quasi-linear analog of the Stokes phenomenon.
Finally, in Section~6, we present a numerical  study of 
the tritronqu\'ee solutions $V_0(x,t)$ and $U_0(x,t)$.

\section{Riemann-Hilbert problem associated with equation \PItwo}
\label{RHproblem}
Equation  \PItwo\ admits an isomonodromy interpretation.
Namely it can be expressed as the compatibility of 
a linear systems for a complex $2\times 2$ matrix valued function
$\Psi=\Psi(\lambda, x,t)$, cf.\ \cite{K,Cl},
\begin{align}
\label{A_p12}
&\Psi_{\lambda}=A\Psi,\\
\label{B_p12}
&\Psi_x=B\Psi,
\end{align}
where 
% \begin{equation*}\label{ABC_p12}
\begin{align}\nonumber
\begin{split}
A=& \tfrac{1}{60}\bigl[-u_x\lambda  -3uu_x-\tfrac{1}{4}u_{xxx}\bigr]\sigma_3+\tfrac{1}{30}\bigl[ \lambda^2+u\lambda+\tfrac{3}{2}u^2+\tfrac{1}{4}u_{xx}-15t\bigr]\sigma_+
\\
&\hskip-0.3in
+\tfrac{1}{30}
\bigl[
\lambda^3
-u\lambda^2
-(\tfrac{1}{2}u^2+\tfrac{1}{4}u_{xx}+15t)\lambda
+2u^3
-\tfrac{1}{4}u_x^2
+\tfrac{1}{2}uu_{xx}
+30x
\bigr]\sigma_-,
\end{split}
\\
\nonumber
B=& \begin{pmatrix}0&1\\
\lambda-2u&0\end{pmatrix}
\end{align}
% \end{equation*}
and
$\sigma_3=(\begin{smallmatrix}1&0\\0&-1\end{smallmatrix})$,
$\sigma_+=(\begin{smallmatrix}0&1\\0&0\end{smallmatrix})$,
$\sigma_-=(\begin{smallmatrix}0&0\\1&0\end{smallmatrix})$.
Indeed the  compatibility condition of equations (\ref{A_p12}) and 
(\ref{B_p12}) gives  $A_x-B_{\lambda}+[A,B]=0$ which implies
the \PItwo\ equation (\ref{p12}), while the compatibility of 
(\ref{A_p12}) with the linear equation
\begin{equation}
\label{C_p12}
\begin{split}
\Psi_t&=C\Psi\\
C&=\tfrac{1}{3}\begin{pmatrix}
\tfrac{1}{2}u_x&-\lambda-u\\
-(\lambda^2-u\lambda-2u^2-\tfrac{1}{2}u_{xx})&-\tfrac{1}{2}u_x
\end{pmatrix}
\end{split}
\end{equation}
gives $A_t-C_{\lambda}+[A,C]=0$ which implies the KdV equation (\ref{KdV}).
%The system also implies the PDE following from (\ref{p12}) and (\ref{KdV}),
%\begin{multline}\label{other_PDEs}
%u_{xt}
%+\tfrac{1}{6}u_x^2
%-\tfrac{2}{3}uu_{xx}
%+20tu
%-\tfrac{10}{3}u^3
%-20x=0,
%\\
%u_{xxt}
%+12uu_t
%+\tfrac{1}{3}uu_{xxx}
%-\tfrac{1}{3}u_{xx}u_x
%+20tu_x
%+2u^2u_x
%-20
%=0,
%\\
%u_{xxxt}
%+12u_xu_t
%+12uu_{xt}
%-\tfrac{1}{3}u_{xx}^2
%-\tfrac{14}{3}u^2u_{xx}
%+20tu_{xx}
%+\tfrac{2}{3}uu_x^2
%+80tu^2
%-\tfrac{40}{3}u^4
%-80xu
%=0,
%\end{multline}

The primary object of our study is equation (\ref{A_p12}) with 
the irregular singularity at $\lambda=\infty$.
The canonical solutions of (\ref{A_p12}) satisfying (\ref{B_p12}) and
(\ref{C_p12}) are uniquely characterized by their asymptotics
\begin{equation}
\label{Psi_k_def}
\begin{split}
\Psi_k(\lambda)&=\lambda^{-\frac{1}{4}\sigma_3}\dfrac{\sigma_3+\sigma_1}{\sqrt2}
\left(I +\dfrac{u}{2\lambda}\sigma_1
+\dfrac{i u_x}{4\lambda^{\frac{3}{2}}}\sigma_2
+{\mathcal O}(\lambda^{-2})\right)e^{(\theta(\lambda)
-\frac{H_1}{\sqrt{\lambda}}
-\frac{H_0}{3\lambda^{3/2}}
)\sigma_3},
\\
\theta(\lambda)&=
\tfrac{\lambda^{\frac{7}{2}}}{105}
-\tfrac{t}{3}\lambda^{\frac{3}{2}}
+x\lambda^{\frac{1}{2}},
\;\;
\lambda\to\infty,\quad
\arg\lambda\in
\bigl(
-\tfrac{3\pi}{7}+\tfrac{2\pi}{7}k,
\tfrac{\pi}{7}+\tfrac{2\pi}{7}k\bigr),\quad
k\in{\mathbb Z},
\end{split}
\end{equation}
where
 $\sigma_1=\sigma_++\sigma_-$ and $i\sigma_2=\sigma_+-\sigma_-$.
Here $H_0$ and $H_1$ are the functions of the coefficients $x,t,u$ and 
derivatives of $u$ related to the Hamiltonians associated with \PItwo,
\begin{equation}\label{H1}
H_1=xu
+\tfrac{1}{24}u^4
-\tfrac{1}{2}tu^2
+\tfrac{1}{24}uu_{x}^2
+\tfrac{1}{240}u_xu_{xxx}
-\tfrac{1}{480}u_{xx}^2,
\end{equation}
\begin{multline}\label{H0}
H_0=
\tfrac{1}{1920}u_{xxx}^2
+\tfrac{1}{80}uu_xu_{xxx}
+\tfrac{1}{16}u^2u_x^2
+\tfrac{1}{10}u^5
+\tfrac{1}{24}u^3u_{xx}
+\tfrac{1}{240}uu_{xx}^2
\\
-\tfrac{1}{480}u_x^2u_{xx}
-\tfrac{1}{4}u_x
+\tfrac{3}{2}xu^2
+\tfrac{1}{4}x u_{xx}
-tu^3
-\tfrac{1}{4}tuu_{xx}
+\tfrac{1}{8}tu_x^2.
\end{multline}
Observe that $(H_1)_x=u$ and $(H_0)_x=\frac{3}{2}u^2$. 
Note also that the differential system for the 
compatibility conditions (\ref{p12}), (\ref{KdV}) %, (\ref{other_PDEs})
of the overdetermined system (\ref{A_p12})--(\ref{C_p12})
is equivalent to the Hamiltonian system in two time variables
(cf.\ \cite{Sul}),
\begin{multline}\nonumber
\frac{dq_j}{dt_k}=\frac{\partial{\mathcal H}_k}{\partial p_j},\quad
\frac{dp_j}{dt_k}=-\frac{\partial{\mathcal H}_k}{\partial q_j},\quad
k,j=1,2,\quad
t_1=x,\quad
t_2=t,
\\
q_1=u,\quad
p_1=\tfrac{1}{240}(u_{xxx}+8uu_x),\quad
q_2=\tfrac{1}{240}(u_{xx}+6u^2),\quad
p_2=u_x,
\\
{\mathcal H}_1=-H_1,\quad
{\mathcal H}_2=H_0.
\end{multline}

The ``ratios" of the canonical solutions
called the Stokes matrices,
\begin{multline}\nonumber
\Psi_{k+1}(\lambda)=\Psi_k(\lambda)S_k,\quad
S_{2k-1}=I+s_{2k-1}\sigma_+,\quad
S_{2k}=I+s_{2k}\sigma_-,
\end{multline}
are the first integrals of \PItwo\ (\ref{p12}) since 
they depend neither on $x$ nor on $t$. 
The Stokes multipliers satisfy a number of algebraic relations 
defining a 4-dimensional complex manifold,
\begin{equation}\label{scalar_cyclic_rel}
s_{k+7}=s_k,\quad
s_k+s_{k+2}+s_ks_{k+1}s_{k+2}=-i(1+s_{k+4}s_{k+5}),\quad
k\in{\mathbb Z}.
\end{equation}
Considering these quantities as the functions of the parameters 
$x,t,u$ and derivatives of $u$, we observe the rotational symmetry,
cf.\ \cite{K},
\begin{equation}\nonumber
x\mapsto\tilde x=e^{i\frac{2\pi}{7}n}x,\quad
t\mapsto\tilde t=e^{i\frac{6\pi}{7}n}t,\quad
u\mapsto\tilde u=e^{-i\frac{4\pi}{7}n}u,\quad
n\in{\mathbb Z},
\end{equation}
and thus
\begin{equation}\label{sk_rot_symm}
s_{k-2n}(
e^{i\frac{2\pi}{7}n}x,
e^{i\frac{6\pi}{7}n}t,
e^{-i\frac{4\pi}{7}n}u)=
s_k(x,t,u),\quad
n\in{\mathbb Z}.
\end{equation}
Another symmetry is related to the complex conjugation, 
\begin{equation}\nonumber
\overline{S_{-k}(\bar x,\bar t,\bar u)}=S_k^{-1}(x,t,u),\quad
\overline{s_{-k}(\bar x,\bar t,\bar u)}=-s_k(x,t,u).
\end{equation}

Now, we are prepared to formulate the RH
problem for the integration of the \PItwo\ equation, 
\begin{RHP}\label{initial_RHP}
Given the complex values of the parameters $x$, $t$ and $s_k$, 
$k\in{\mathbb Z}$, satisfying (\ref{scalar_cyclic_rel}), find the 
piece-wise holomorphic $2\times2$ matrix function $\Psi(\lambda)$ 
with the properties:
\begin{enumerate}
\item 
the limit
\begin{equation}\nonumber
\lim_{\lambda\to\infty}
\lambda^{1/2}
\bigl(
\tfrac{1}{\sqrt2}(\sigma_3+\sigma_1)
\lambda^{\sigma_3/4}\Psi(\lambda)
e^{-\theta\sigma_3}-I
\bigr),\quad
\theta=\tfrac{1}{105}\lambda^{\frac{7}{2}}
-\tfrac{1}{3}t\lambda^{\frac{3}{2}}
+x\lambda^{\frac{1}{2}},
\end{equation}
exists and is diagonal;
\item
at the origin, $\Psi(\lambda)$ is bounded;
\item 
on the union of the eight rays 
$\gamma=\rho\cup\bigl(\cup_{k=1}^7\gamma_{k-4}\bigr)$, where
$\gamma_k=\bigl\{\lambda\in{\mathbb C}\colon
\arg\lambda=\tfrac{2\pi}{7}k\bigr\}$, $k=-3,-2,\dots,2,3$, and 
$\rho=\bigl\{\lambda\in{\mathbb C}\colon
\arg\lambda=\pi\bigr\}$, all oriented towards infinity,
the following jump condition holds true,
\begin{equation}\nonumber
\Psi_+(\lambda)=\Psi_-(\lambda)S(\lambda),
\end{equation}
where $\Psi_+(\lambda)$ and $\Psi_-(\lambda)$ are limits of 
$\Psi(\lambda)$ on $\gamma$ from the left and from the right,
respectively, and where the piece-wise constant matrix $S(\lambda)$ 
is given by the following equations,
\begin{subequations}\label{jump_matrices}
\begin{align}\nonumber
&S(\lambda)\bigr|_{\lambda\in\gamma_k}=S_k,\quad
S_{2k}=I+s_{2k}\sigma_-,\quad
S_{2k-1}=I+s_{2k-1}\sigma_+,
\\
\nonumber
&S(\lambda)\bigr|_{\rho}=i\sigma_1.
\end{align}
\end{subequations}
\end{enumerate}
\end{RHP}

%%%%%%%%%%%%%%%%%%%%%%%%%%%%%%%
\begin{figure}[htb]
\begin{center}
\mbox{\epsfig{figure=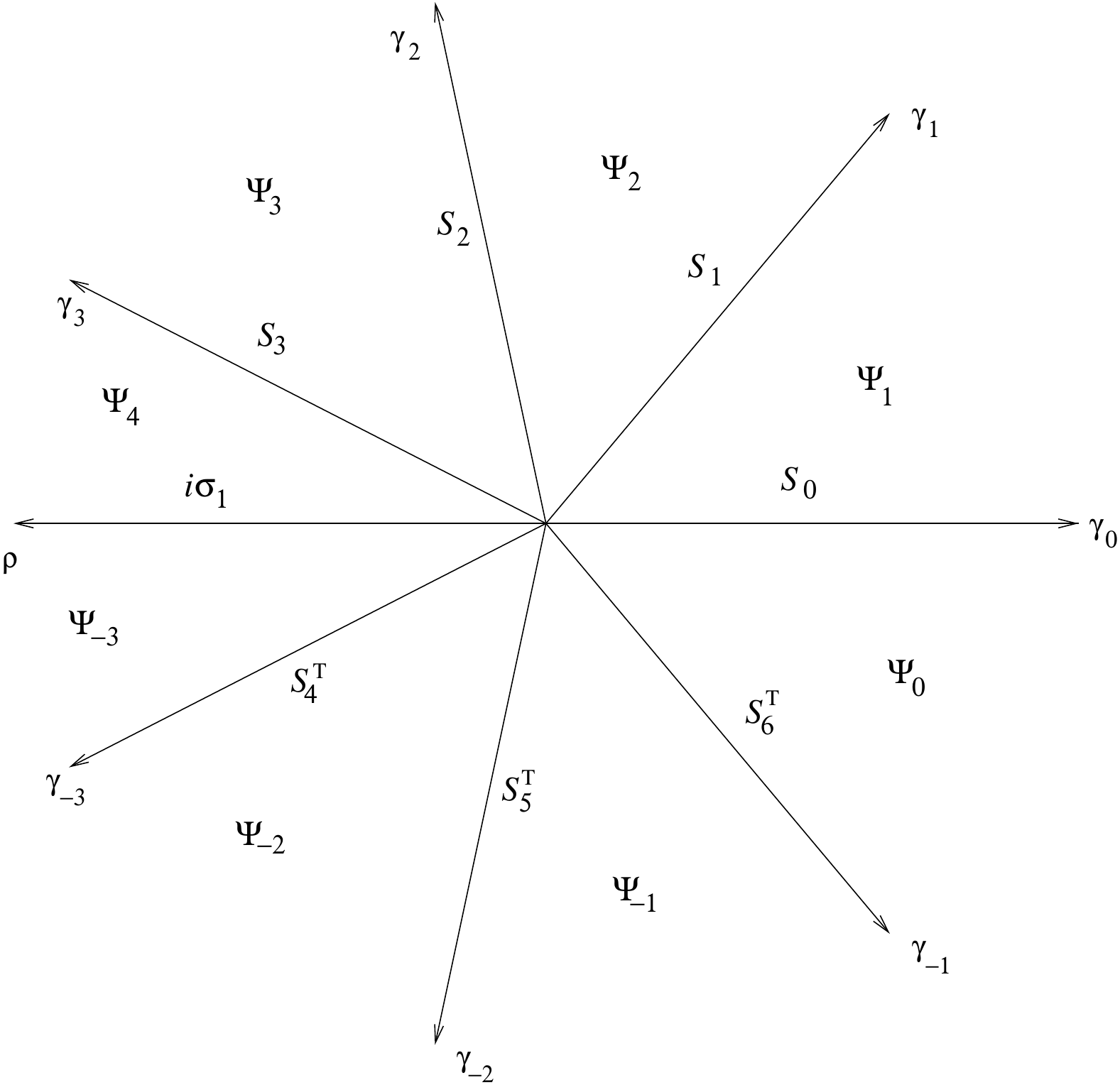,width=0.6\textwidth}}
\end{center}
\caption{The jump contour $\gamma$ for the RH problem~\ref{initial_RHP}
and canonical solutions $\Psi_j(\lambda)$, $j=-3,-2,\dots,3,4$.}
\label{fig1}
\end{figure}
%%%%%%%%%%%%%%%%%%%%%%%%%%%%%%%

The solution of the RH problem~\ref{initial_RHP}, if it  exists,  is unique. 
Having the solution $\Psi(\lambda)$ of the RH problem~\ref{initial_RHP}, 
the value of the Painlev\'e function $u$ follows from the asymptotics
(\ref{Psi_k_def}),
\begin{equation}\nonumber
u=2\lim_{\lambda\to\infty}
\lambda
\bigl(
\tfrac{1}{\sqrt2}(\sigma_3+\sigma_1)
\lambda^{\sigma_3/4}\Psi(\lambda)
e^{-\theta\sigma_3}
\bigr)_{12}.
\end{equation}

\section{``Spectral'' and model curves}
\label{discriminant_set}

The spectral curve is defined by the characteristic 
equation, $\det(\mu-A(\lambda))=0$. Explicitly, it takes the form
\begin{equation}
\label{spectral_curve_p12_as}
\begin{split}
30\mu^2&=
\tfrac{1}{30}\lambda^5
-t\lambda^3
+x\lambda^2
+(H_1+\frac{15}{2}t^2)\lambda
+(H_0-15tx+\frac{u_x}{4})
=\tfrac{1}{30}
\prod_{k=1}^5(\lambda-\lambda_k),
\end{split}
\end{equation}
where $H_1$ and $H_0$ are defined in (\ref{H1}) and (\ref{H0}) respectively.

The {\em model} algebraic curve used below is a special case
of (\ref{spectral_curve_p12_as}) with two double and 
one simple branch point, cf.\ \cite{K, Cl},
\begin{equation}\label{2+2_deg}
\lambda_1=\lambda_2\neq\lambda_3=\lambda_4\neq\lambda_5,\quad
\mbox{i.e.}\quad
\mu^2=\tfrac{1}{900}
(\lambda-\lambda_1)^2
(\lambda-\lambda_3)^2
(\lambda-\lambda_5).
\end{equation}
Identifying the leading order coefficients in 
(\ref{spectral_curve_p12_as}) and (\ref{2+2_deg}),
we find the conditions,
\begin{multline}\label{t_x_from_lambda_k}
2\lambda_1+2\lambda_3+\lambda_5=0,\quad
\lambda_1^2+4\lambda_1\lambda_3+2\lambda_1\lambda_5
+2\lambda_3\lambda_5+\lambda_3^2=-30t,
\\
2\lambda_1^2\lambda_3
+2\lambda_1\lambda_3^2
+4\lambda_1\lambda_3\lambda_5
+\lambda_1^2\lambda_5
+\lambda_3^2\lambda_5
=-30x.
\end{multline}
Thus the double branch points $\lambda_1$ and $\lambda_3$
satisfy the quadratic equation,
\begin{equation}\label{lambda13_eq_sol}
\lambda^2+\tfrac{1}{2}\lambda_5\lambda+\tfrac{3}{8}\lambda_5^2-15t=0,
\quad
\mbox{i.e.}
\quad
\lambda_{1,3}=
\tfrac{1}{4}
\bigl(
-\lambda_5
\pm i\sqrt{5}\sqrt{\lambda_5^2-48t}
\bigr),
\end{equation}
while the simple branch point $\lambda_5$ satisfies the cubic equation,
\begin{equation}\label{lambda5_eq}
\lambda_5^3
-24t\lambda_5
+48x=0.
\end{equation}
Therefore the model algebraic curve 
(\ref{spectral_curve_p12_as})--(\ref{2+2_deg}) is determined 
by the values of $t$ and $x$ up to a discrete ambiguity in 
the determination of $\lambda_5$ via (\ref{lambda5_eq}).

\section{Asymptotic solution of the reduced RH problems}\label{RHP_deg}

\subsection{0-parameter reduced RH problems}

\begin{thm}\label{phi=0_red_RHP12_solvability}
For the Stokes multipliers
\begin{equation}\label{spm2_spm1=0}
s_{\pm2}=s_{\pm1}=0,\quad
s_{\pm3}=s_0=-i,
\end{equation}
there exists a closed domain $\omega_0\subset{\mathbb C}^2$
such that the RH problem~\ref{initial_RHP} is solvable
for $\forall(x,t)\in\omega_0$. As $t=0$, the domain $\omega_0$
is the sector
\begin{equation}\nonumber
t=0\colon\quad
\omega_0=\bigl\{
x\in{\mathbb C}\colon
|x|>\rho_0,\quad
\arg x\in[-\alpha_0,\alpha_0]
\bigr\},\quad
\end{equation}
where $\alpha_0=\tfrac{3\pi}{7}-\tfrac{3}{7}\arctan\tfrac{1}{\sqrt5}$
and the positive constant $\rho_0$ is large enough. 
If $x\neq\pm2\sqrt3t^{3/2}$, and if all the roots 
of the cubic polynomial $P_3(v_0):=v_0^3-6tx^{-2/3}v_0+6$ are simple, 
then the relevant solution $u^{(0)}(x,t)$ of equation {\em\PItwo} has the 
asymptotics
\begin{equation}\label{u_s12=0_as}
u^{(0)}(x,t)=x^{1/3}v_0
+{\mathcal O}(x^{-5/6}),\quad
v_0^3-6tx^{-2/3}v_0+6=0,
\end{equation}
where the root $v_0=v_0(tx^{-2/3})$ is chosen in such a way that
\begin{equation*}
u^{(0)}(x,t)\simeq-\sqrt[3]{6}|x|^{1/3},\quad
t\to0,\quad
x\to+\infty.
\end{equation*}
\end{thm}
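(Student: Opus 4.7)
The plan is to apply the Deift--Zhou nonlinear steepest descent method to RH problem~\ref{initial_RHP} with the Stokes data (\ref{spm2_spm1=0}). Since $s_{\pm 1}=s_{\pm 2}=0$, the jump matrix $S(\lambda)$ reduces to the identity on the four rays $\gamma_{\pm 1},\gamma_{\pm 2}$, so that the nontrivial jumps live only on $\gamma_{\pm 3},\gamma_0$ and the negative real axis $\rho$. I would then introduce a $g$-function built from the model algebraic curve (\ref{2+2_deg}) of Section~\ref{discriminant_set}, namely $g(\lambda)=\int^\lambda\mu(s)\,ds$ with $\mu$ as in (\ref{2+2_deg}) and with branch points $\lambda_5$, $\lambda_{1,3}$ determined by (\ref{lambda5_eq}) and (\ref{lambda13_eq_sol}). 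The conjugation $\Psi\mapsto\Psi\, e^{-g\sigma_3}$ absorbs the leading exponential $e^{\theta\sigma_3}$ from (\ref{Psi_k_def}), so that on the three remaining rays the transformed jumps are controlled by $e^{\pm 2(g-\theta)\sigma_3}$.

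The second step is the standard opening of lenses along the two short cuts joining $\lambda_5$ to each of $\lambda_1$ and $\lambda_3$. On the resulting contour, the jumps factorize and, away from the three branch points, differ from the identity by terms exponentially small in $|x|$. A global outer parametrix is built from the rational parametrization of the model curve, while Airy parametrices are inserted in small disks around $\lambda_5$ (and, after a local conformal change reflecting the factor $(\lambda-\lambda_1)(\lambda-\lambda_3)$ in (\ref{2+2_deg}), around $\lambda_{1,3}$ as well). A small-norm argument then produces an error of the form $R=I+\mathcal{O}(|x|^{-7/6})$, and reading off the sub-leading coefficient of $\Psi$ via (\ref{Psi_k_def}) gives $u^{(0)}(x,t)=\tfrac12\lambda_5+\mathcal{O}(|x|^{-5/6})$. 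This is exactly (\ref{u_s12=0_as}), since the rescaling $\lambda_5=2v_0 x^{1/3}$ turns (\ref{lambda5_eq}) into $v_0^3-6tx^{-2/3}v_0+6=0$.

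The analytic heart of the argument is the construction of the domain $\omega_0$, which amounts to verifying the correct sign structure of $\Re(g-\theta)$ on the opened contour: positive on the upper lens boundaries and negative on the lower, and matching signs on the three Stokes rays so that the off-diagonal jump entries decay exponentially. For $t=0$, (\ref{lambda5_eq}) gives $\lambda_5=-(48x)^{1/3}$ and (\ref{lambda13_eq_sol}) places $\lambda_{1,3}$ at $\tfrac14(-1\pm i\sqrt5)\lambda_5$, so the entire branch-point configuration rigidly rotates with $\arg x$. A direct computation comparing the seven Stokes rays of $\theta$ at infinity with the values of $g$ at $\lambda_5,\lambda_{1,3}$ shows that the favorable sign structure holds for a sector about the positive real axis and breaks down precisely when a separatrix of $\Re(g-\theta)=0$ first collides with one of the remaining rays $\gamma_{\pm 3}$ or $\gamma_0$; tracking this collision gives the stated boundary $\alpha_0=\tfrac{3\pi}{7}-\tfrac{3}{7}\arctan\tfrac{1}{\sqrt5}$. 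The selection $v_0\to-\sqrt[3]{6}$ as $x\to+\infty$, $t\to 0$ fixes the negative real root of the cubic, pinning down the branch.

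The main obstacle is precisely this signature analysis, because the zero set of $\Re(g-\theta)$ has no closed form and its global topology has to be tracked as $(x,t)$ varies, with particular care near the boundary where two separatrices coalesce on a Stokes ray. The excluded loci $x=\pm 2\sqrt3\,t^{3/2}$ and the discriminantal set of $P_3(v_0)$ are precisely those points at which $\lambda_5$ coalesces with one of $\lambda_{1,3}$; there the Airy parametrix and the outer parametrix degenerate simultaneously, so they must be excluded for the small-norm estimate to close. Once the signature is established away from these loci, the rest of the argument is a routine application of the nonlinear steepest descent machinery.
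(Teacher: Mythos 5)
Your overall skeleton is right and matches the paper's in outline: a single ``$g$-function'' $F(\lambda)=\int_{\lambda_5}^{\lambda}\mu\,d\xi$ built from the model curve, an Airy parametrix at $\lambda_5$ via the conformal map $z=(\tfrac32F)^{2/3}$, a small-norm argument, the conclusion $u^{(0)}=\tfrac12\lambda_5+{\mathcal O}(x^{-5/6})$, and the rescaling $\lambda_5=2v_0x^{1/3}$ turning (\ref{lambda5_eq}) into the cubic in (\ref{u_s12=0_as}). But your second step rests on a misreading of the model curve. In (\ref{2+2_deg}) the points $\lambda_{1,3}$ are \emph{double} zeros of $\mu^2$, so $\mu=\tfrac{1}{30}(\lambda-\lambda_1)(\lambda-\lambda_3)(\lambda-\lambda_5)^{1/2}$ is unbranched there; $F$ has simple quadratic saddle points at $\lambda_{1,3}$, not band endpoints. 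There are no ``short cuts joining $\lambda_5$ to each of $\lambda_1$ and $\lambda_3$,'' hence no lenses to open around them, and an Airy parametrix cannot be matched at $\lambda_{1,3}$: Airy requires a $3/2$-power phase, while $F-F(\lambda_j)\sim c(\lambda-\lambda_j)^2$ there. With the Stokes data (\ref{spm2_spm1=0}) the transformed jump graph (Figure~\ref{fig7}) has only the semi-infinite cut from $\lambda_5$ plus exponentially small ray jumps; in the paper the saddles $\lambda_{1,3}$ enter solely through Condition~1 (strict monotonicity of $\Re F$ along the infinite contour branches) and through the signs of $\Re F(\lambda_{1,3})$, which is what delimits the domain. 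A two-cut ansatz with genuine branch points at $\lambda_{1,3}$ corresponds to a different (elliptic-regime) curve and its outer parametrix would not have the correct jump structure for this RH problem.

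Your boundary mechanism is also inconsistent with the stated $\alpha_0$. Tracking where the favorable sign structure ``first breaks down'' yields $\arg x=\tfrac{3\pi}{7}+\tfrac{3}{7}\arctan\tfrac{1}{\sqrt5}$, where $\Re F(\lambda_1)=0$; the paper indeed proves solvability for these Stokes data in that wider sector (up to $\epsilon$). The narrower $\alpha_0=\tfrac{3\pi}{7}-\tfrac{3}{7}\arctan\tfrac{1}{\sqrt5}$, where $\Re F(\lambda_3)=0$, is a deliberate restriction making both $\Re F(\lambda_1)\geq0$ and $\Re F(\lambda_3)\geq0$ on $\omega_0$, cf.\ (\ref{phi=0_Re_int_inequality}) --- a choice made for the later $2$-parameter perturbation and tritronqu\'ee analysis, which your collision criterion cannot produce (it gives the plus sign, not the minus). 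You also conflate the two degeneration loci: $x=\pm2\sqrt3\,t^{3/2}$ corresponds to $\lambda_5^2=48t$, where $\lambda_1$ and $\lambda_3$ merge \emph{with each other} (quadruple point), while the vanishing discriminant of $P_3$ corresponds to $\lambda_5^2=8t$, i.e.\ $x=\pm\tfrac{2\sqrt2}{3}t^{3/2}$, where $\lambda_5$ collides with one of $\lambda_{1,3}$ (triple point); by Remark~\ref{multiple_branch} the local models at such points involve the \PI\ and \PII\ parametrices, not a degenerating Airy one, and only subleading terms are affected. A minor bookkeeping slip in the same vein: after conjugation by the model the ray jumps carry $e^{\mp2F}$, the combination $g-\theta$ appearing only in the normalization at infinity.
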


\begin{thm}\label{phi=3pi_red_RHP32_solvability}
For the Stokes multipliers
\begin{equation}\label{spm3_spm2=0}
s_{\pm3}=s_{\pm2}=0,\quad
s_{\pm1}=s_0=-i,
\end{equation}
there exists a closed domain $\hat\omega_0\subset{\mathbb C}^2$
such that the RH problem~\ref{initial_RHP} is solvable
for $\forall(x,t)\in\hat\omega_0$. As $t=0$, the domain 
$\hat\omega_0$ satisfies
\begin{equation}\nonumber
t=0\colon\quad
\hat\omega_0=\bigl\{
x\in{\mathbb C}\colon
|x|>\hat\rho_0,\quad
\arg x\in[3\pi-\beta_0,3\pi+\beta_0]
\bigr\},\quad
\end{equation}
where $\beta_0=\tfrac{3}{7}\arctan\tfrac{1}{\sqrt5}$
and the positive constant $\hat\rho_0$ is sufficiently large. 
If $x\neq\pm2\sqrt3t^{3/2}$, and if all the roots of the cubic 
polynomial $P_3(v_0):=v_0^3-6tx^{-2/3}v_0+6$ are simple, 
then the relevant solution $\hat u^{(0)}(x,t)$ of equation 
{\em\PItwo }\ has the asymptotics
\begin{equation}\label{u_s32=0_as}
\hat u^{(0)}(x,t)=x^{1/3}v_0
+{\mathcal O}(x^{-5/6}),\quad
v_0^3-6tx^{-2/3}v_0+6=0,
\end{equation}
where the root $v_0=v_0(tx^{-2/3})$ is chosen in such a way that
\begin{equation*}
\hat u^{(0)}(x,t)\simeq\sqrt[3]{6}|x|^{1/3},\quad
t\to0,\quad
x\to-\infty.
\end{equation*}
\end{thm}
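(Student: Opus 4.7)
The plan is to mirror the proof of Theorem~\ref{phi=0_red_RHP12_solvability}, adapting the nonlinear steepest-descent analysis of RH problem~\ref{initial_RHP} to the new Stokes data. Because $s_{\pm 3}=s_{\pm 2}=0$, the jump matrices $S_{\pm 3}$ and $S_{\pm 2}$ are the identity, so $\Psi(\lambda)$ extends holomorphically across the rays $\gamma_{\pm 3}$ and $\gamma_{\pm 2}$. The reduced jump contour is supported on $\gamma_{-1},\gamma_0,\gamma_1$ (carrying $I+(-i)\sigma_\pm$) and on $\rho$ (carrying $i\sigma_1$), giving a RH problem with the same structural skeleton as in the previous theorem but with the roles of ``active'' and ``passive'' rays cyclically shifted.

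I would then construct a parametrix from the model algebraic curve $\mu^2=\tfrac{1}{900}(\lambda-\lambda_1)^2(\lambda-\lambda_3)^2(\lambda-\lambda_5)$ of Section~\ref{discriminant_set}, with branch points fixed by \eqref{lambda13_eq_sol}--\eqref{lambda5_eq}. Among the three roots of the cubic \eqref{lambda5_eq} one selects $\lambda_5$ corresponding to the branch of $x^{1/3}$ appropriate to $\arg x\in[3\pi-\beta_0,3\pi+\beta_0]$, so that at $t=0$ the leading term $x^{1/3}v_0$ specializes to $+\sqrt[3]{6}|x|^{1/3}$ as $x\to-\infty$ along the third sheet of the universal cover of the punctured $x$-plane. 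Introducing the $g$-function $g(\lambda)=\int_{\lambda_5}^{\lambda}\mu\,d\lambda'$ and performing the standard conjugation $\Psi\mapsto\Psi e^{-g\sigma_3}$, one checks that the level curves of $\Re g$ permit a deformation of $\gamma$ that renders the remaining jumps either constant along the ``main arc'' between $\lambda_1$ and $\lambda_3$ or exponentially close to the identity along the ``lenses''. The sector $\hat\omega_0$ is by definition the set of $(x,t)$ for which this sign configuration persists; at $t=0$ the half-opening $\beta_0=\tfrac{3}{7}\arctan\tfrac{1}{\sqrt 5}$ emerges as the critical angle at which a zero curve of $\Re g$ first grazes one of the rays $\gamma_{\pm 1}$. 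Solving the resulting small-norm RH problem and extracting the $(1,2)$ entry of the $\lambda^{-1}$ coefficient via the formula at the end of Section~\ref{RHproblem} gives \eqref{u_s32=0_as} with the stated $\mathcal O(x^{-5/6})$ remainder.

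The main obstacle is the explicit sign analysis of $\Re g$ over the complex $\lambda$-plane, uniform in $(x,t)$ varying in a tubular neighborhood of the limit sector. One must track the global topology of $\{\Re g=0\}$ through two kinds of transitions: the collision of a branch point with a Stokes ray (which fixes the critical angle $\beta_0$) and the merging of roots of $P_3$ or of solutions of \eqref{lambda5_eq} (which accounts for the excised locus $\{x=\pm 2\sqrt 3\,t^{3/2}\}$). All other ingredients---the RH factorization, the local parametrices near $\lambda_1,\lambda_3,\lambda_5$, and the extraction of $\hat u^{(0)}$ from $\Psi$---are straightforward adaptations of the corresponding steps for Theorem~\ref{phi=0_red_RHP12_solvability}.
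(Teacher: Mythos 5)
Your overall skeleton (contour reduction for trivial Stokes multipliers, the degenerate model curve, the $g$-function $g=F$ of (\ref{F_def}), a small-norm argument, extraction of $\hat u^{(0)}$ from the $\lambda^{-1}$ coefficient) is the same as the paper's, which in fact proves Theorems~\ref{phi=0_red_RHP12_solvability} and~\ref{phi=3pi_red_RHP32_solvability} by one common argument. However, the steepest-descent configuration you describe in the middle of the proof is the wrong one for these Stokes data, and the step built on it would fail. In the model curve (\ref{2+2_deg}) the points $\lambda_1,\lambda_3$ are \emph{double} points: $\mu=\tfrac{1}{30}(\lambda-\lambda_1)(\lambda-\lambda_3)(\lambda-\lambda_5)^{1/2}$ is single-valued there, so there is no branch cut and no ``main arc'' joining $\lambda_1$ and $\lambda_3$, nothing to carry a constant jump there, hence no lenses and no local parametrices at $\lambda_1,\lambda_3$. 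A band between $\lambda_1$ and $\lambda_3$ with constant jump is the configuration of the \emph{elliptic} regime (the grey sectors of Figures~\ref{fig12}--\ref{fig14a}), which is precisely what is excluded from $\hat\omega_0$. After conjugation by $e^{-F\sigma_3}$ the surviving jumps are $I-ie^{-2F}\sigma_-$ on $\gamma_0$ and $I-ie^{2F}\sigma_+$ on $\gamma_{\pm1}$, and they must all be made \emph{uniformly exponentially small}; the only constant jump, $i\sigma_1$ on the cut $(-\infty,\lambda_5]$, is solved exactly by the outer factor $(\lambda-\lambda_5)^{-\sigma_3/4}\tfrac{1}{\sqrt2}(\sigma_3+\sigma_1)e^{F\sigma_3}$, and the single local parametrix is the Airy one at the simple branch point $\lambda_5$, cf.\ (\ref{Z_RH_def}), (\ref{tilde_Psi_def}). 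The double points enter only as saddle points of $F$: the real obstruction is the paper's Condition~1 (strict monotonicity of $\Re F$ along the infinite contour branches), which can be arranged exactly when the saddle values satisfy the sign conditions $\Re F(\lambda_1)\le0$ and $\Re F(\lambda_3)\le0$ of (\ref{phi=pi_Re_int_inequality}).

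Two further corrections. First, $\beta_0$ is not ``the critical angle at which a zero curve of $\Re g$ first grazes one of the rays $\gamma_{\pm1}$'': at $t=0$ the RH problem with data (\ref{spm3_spm2=0}) is solvable in the much wider sector $\arg x\in[3\pi-\tfrac{6\pi}{7}+\tfrac{3}{7}\arctan\tfrac{1}{\sqrt5}+\epsilon,\,3\pi+\tfrac{6\pi}{7}-\tfrac{3}{7}\arctan\tfrac{1}{\sqrt5}-\epsilon]$, and the half-width $\beta_0=\tfrac37\arctan\tfrac1{\sqrt5}$ is a deliberate \emph{restriction} to the subsector where both $\Re F(\lambda_1)\le0$ and $\Re F(\lambda_3)\le0$ hold (this restriction is what makes the later bi- and tritronqu\'ee constructions of Section~\ref{um_proliferation} work), so locating the first failure of the contour deformation identifies a different, larger sector than the stated $\hat\omega_0$. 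Second, the hypotheses ``$x\neq\pm2\sqrt3\,t^{3/2}$'' and ``$P_3$ has simple roots'' encode exactly the non-degeneracy (\ref{non-degeneracy_conditions}): $x=\pm2\sqrt3\,t^{3/2}$ corresponds to $\lambda_5^2=48t$, the collision $\lambda_1=\lambda_3$ of the two double points, while a double root of $P_3$ corresponds to $\lambda_5^2=8t$, a degeneration of (\ref{lambda5_eq}); only at such points would parametrices other than the Airy one arise, and they are Painlev\'e-type, not Airy (Remark~\ref{multiple_branch}). With the band/lens picture and the $\lambda_{1,3}$-parametrices removed and replaced by the sign analysis of $\Re F(\lambda_{1,3})$, your outline coincides with the paper's proof.
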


\begin{proof}
The proofs of both theorems~\ref{phi=0_red_RHP12_solvability}
and~\ref{phi=3pi_red_RHP32_solvability} are almost identical
and follow the line explained in \cite{kap_p1_quasi, FIKN}.
Introduce the Wronski matrix
\begin{equation}\nonumber
Z_0(z)=
\begin{pmatrix}
v_2(z)&v_1(z)\\
v_2'(z)&v_1'(z)
\end{pmatrix}
,\quad
v_1(z)=\sqrt{2\pi}\,\mbox{\rm Ai}(z),\quad
v_2(z)=\sqrt{2\pi}\,e^{i\frac{\pi}{6}}\mbox{\rm Ai}(e^{i\frac{2\pi}{3}}z),\quad
\end{equation}
with $\mbox{\rm Ai}(z)$ standing for the classical Airy function satisfying
the asymptotic condition
\begin{equation}\nonumber
\mbox{\rm Ai}(z)=
\tfrac{1}{2\sqrt{\pi}}
z^{-1/4}e^{-\frac{2}{3}z^{3/2}}
(1+{\mathcal O}(z^{-3/2})),\quad
z\to\infty,\quad
\arg z\in(-\pi,\pi).
\end{equation}
Define also the functions $Z_k(z)$, $k=-1,1,2$,
\begin{multline}\nonumber
Z_{-1}(z)=Z_0(z)G_{-1}^{-1},\quad
Z_1(z)=Z_0(z)G_0,\quad
Z_2(z)=Z_1(z)G_1,\quad
\\
\shoveleft{
G_1=G_{-1}=I-i\sigma_-,\quad
G_0=I-i\sigma_+.
}\hfill
\end{multline}
Using the Stokes phenomenon of the Airy function described 
in \cite{BE}, all the introduced above matrix functions have 
the asymptotics
\begin{multline}\label{Zk_as}
Z_k(z)=z^{-\sigma_3/4}
\tfrac{1}{\sqrt2}(\sigma_3+\sigma_1)
\bigl(I+{\mathcal O}(z^{-3/2})\bigr)
e^{\frac{2}{3}z^{3/2}\sigma_3},
\\
z\to\infty,\quad
z\in\omega_k=\bigl\{
z\in{\mathbb C}\colon
\arg z\in\bigl(-\pi+\tfrac{2\pi}{3}k,
\tfrac{\pi}{3}+\tfrac{2\pi}{3}k\bigr)
\bigr\}.
\end{multline}
The piece-wise holomorphic function $Z(z)$,
\begin{equation}\label{Z_RH_def}
Z(z)=
\begin{cases}
Z_{-1}(z),\quad
\arg z\in(-\pi,-\tfrac{2\pi}{3}),
\\
Z_0(z),\quad
\arg z\in(-\tfrac{2\pi}{3},0),
\\
Z_1(z),\quad
\arg z\in(0,\tfrac{2\pi}{3}),
\\
Z_2(z),\quad
\arg z\in(\tfrac{2\pi}{3},\pi),
\end{cases}
\end{equation}
has the uniform asymptotics (\ref{Zk_as})
as $z\to\infty$, is discontinuous across the oriented rays $\arg z=\tfrac{2\pi}{3}k$, $k=0,\pm1$, and 
$\arg z=\pi$ towards 
infinity and satisfies by definition the jump conditions,
\begin{multline}\nonumber
Z_+(z)=Z_-(z)G_k,\quad
\arg z=\tfrac{2\pi}{3}k,\quad
k=0,\pm1,
\\
G_{0}=I-i\sigma_-,\quad
G_1=G_{-1}=I-i\sigma_+,
\\
\shoveleft{
Z_+(z)=Z_-(z)i\sigma_1,\quad
\arg z=\pi,
}\hfill
\end{multline}
see Figure~\ref{fig6}.
%%%%%%%%%%%%%%%%%%%%%%%%%%%%%
\begin{figure}[htb]
\begin{center}
\mbox{\epsfig{figure=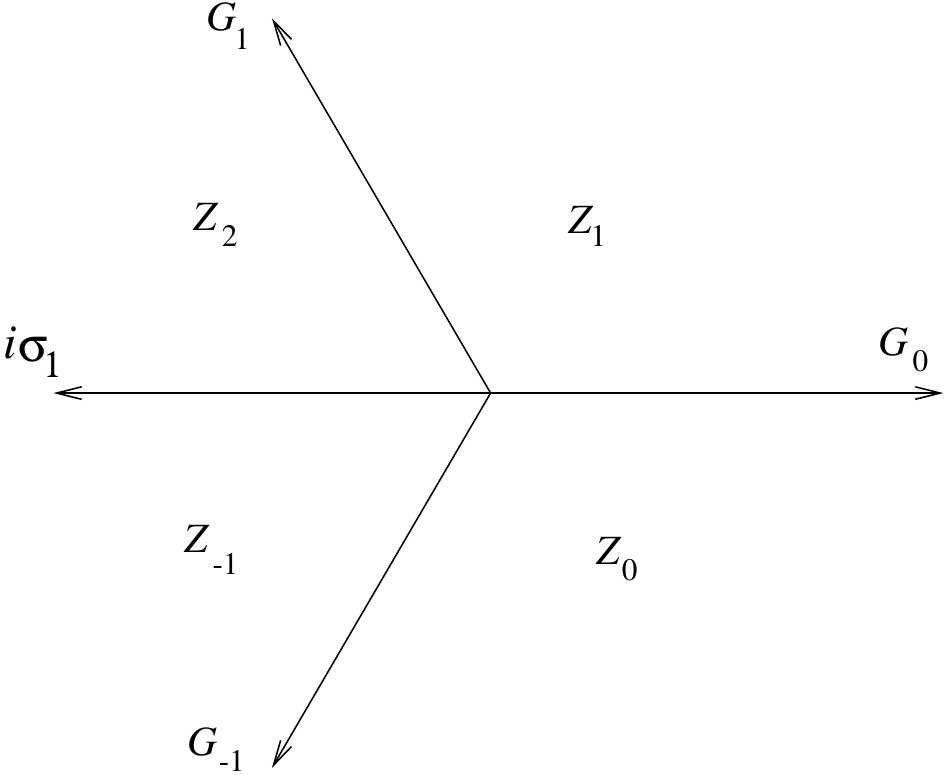,width=0.5\textwidth}}
\end{center}
\caption{The jump contour for the RH problem solved by
the collection of Airy functions.}
\label{fig6}
\end{figure}
%%%%%%%%%%%%%%%%%%%%%%%%%%%%%

On the complex  $\lambda$ plane cut along $(-\infty, \lambda_5)$,
define the function $F(\lambda)$ as the integral of  the model curve (\ref{2+2_deg}),
\begin{equation}\label{F_def}
F(\lambda):=\int_{\lambda_5}^{\lambda}\mu(\xi)d\xi=
\tfrac{1}{105}(\lambda-\lambda_5)^{\frac{7}{2}}
+\tfrac{1}{30}\lambda_5(\lambda-\lambda_5)^{\frac{5}{2}}
+(\tfrac{1}{24}\lambda_5^2-\tfrac{1}{3}t)
(\lambda-\lambda_5)^{\frac{3}{2}}.
\end{equation}
As $\lambda\to\infty$, this function has the asymptotics
\begin{multline}\nonumber
F(\lambda)=
\tfrac{1}{105}\lambda^{\frac{7}{2}}
-\tfrac{1}{3}t\lambda^{\frac{3}{2}}
+x\lambda^{\frac{1}{2}}
+F_{-1}\lambda^{-\frac{1}{2}}
+\tfrac{1}{3}F_{-2}\lambda^{-\frac{3}{2}}
+{\mathcal O}(\lambda^{-\frac{5}{2}}),
\\
F_{-1}=\tfrac{1}{16}\lambda_5(t\lambda_5-6x),\quad
F_{-2}=\tfrac{1}{40}\lambda_5^2(2t\lambda_5-9x).
\end{multline}

The conditions $\lambda_5^2-8t=0$ and $\lambda_5^2-48t=0$ correspond
to a triple and a quadruple branch point, respectively. For
\begin{equation}\label{non-degeneracy_conditions}
(\lambda_5^2-8t)(\lambda_5^2-48t)\neq0,
\end{equation}
define the mapping 
$\lambda\mapsto z(\lambda)$ by the equation,
% \begin{multline*}
% \tfrac{2}{3}z^{3/2}=F(\lambda)=
% \\
% =(\lambda-\lambda_5)^{\frac{3}{2}}
% \tfrac{1}{24}(\lambda_5^2-8t)
% \bigl(
% 1
% +\tfrac{8}{35(\lambda_5^2-8t)}(\lambda-\lambda_5)^2
% +\tfrac{4\lambda_5}{5(\lambda_5^2-8t)}(\lambda-\lambda_5)
% \bigr),
% \end{multline*}
% or, explicitly,
\begin{multline}\label{z_zeta_mapping}
z(\lambda)=\bigl(\tfrac{3}{2}F(\lambda)\bigr)^{2/3}
=(\lambda-\lambda_5)
2^{-\frac{8}{3}}(\lambda_5^2-8t)^{\frac{2}{3}}
\times
\\
\times
\bigl(
1
+\tfrac{4\lambda_5}{5(\lambda_5^2-8t)}(\lambda-\lambda_5)
+\tfrac{8}{35(\lambda_5^2-8t)}(\lambda-\lambda_5)^2
\bigr)^{\frac{2}{3}},
\end{multline}
In the disc 
$|\lambda-\lambda_5|\leq R|x|^{1/3}<\min_{j=1,3}|\lambda_j-\lambda_5|$,
the mapping (\ref{z_zeta_mapping}) is bi-holomorphic.
Assuming that (\ref{non-degeneracy_conditions}) holds true,
define the piece-wise holomorphic function $\tilde\Psi(\lambda)$,
\begin{multline}\label{tilde_Psi_def}
\tilde\Psi(\lambda)=
\begin{cases}
(\lambda-\lambda_5)^{-\frac{1}{4}\sigma_3}
z^{\frac{1}{4}\sigma_3}Z(z(\lambda)),\quad
|\lambda-\lambda_5|<R|x|^{1/3},
\\
(\lambda-\lambda_5)^{-\frac{1}{4}\sigma_3}
\tfrac{1}{\sqrt2}(\sigma_3+\sigma_1)
e^{F(\lambda)\sigma_3},\quad
|\lambda-\lambda_5|>R|x|^{1/3},
\end{cases}
\end{multline}
where $z=z(\lambda)$ is as in (\ref{z_zeta_mapping}), and where
the root $(\lambda-\lambda_5)^{1/4}$ is defined on the plane cut along 
the level line $\Re F(\lambda)=0$ asymptotic to the ray $\arg\lambda=\pi$. 
Definition (\ref{tilde_Psi_def}) together with (\ref{Z_RH_def}),
(\ref{Zk_as}) and (\ref{z_zeta_mapping}) implies that, across the 
circle $|\lambda-\lambda_5|=R|x|^{1/3}$ with clock-wise orientation, the model function 
$\tilde\Psi(\lambda)$ has the jump
\begin{equation}\label{tilde_Psi_at_circle_jump}
\tilde\Psi_-(\lambda)\tilde\Psi_+^{-1}(\lambda)=
I+{\mathcal O}(x^{-1}\sigma_-)
+{\mathcal O}(x^{-7/6}I)
+{\mathcal O}(x^{-7/6}\sigma_{3})
+{\mathcal O}(x^{-4/3}\sigma_+).
\end{equation}

We look for the solution $\Psi^{(0)}(\lambda)$ of the RH 
problem~\ref{initial_RHP} in the form of the product
\begin{equation}\label{chi_def}
\Psi^{(0)}(\lambda)=
(I-\tilde H_1\sigma_-)
\chi(\lambda)\tilde\Psi(\lambda).
\end{equation}
Denote by $u^{(0)}$ the solution of \PItwo\ corresponding 
to $\Psi^{(0)}(\lambda)$, and  by $H_1^{(0)}$, $H_0^{(0)}$ 
the Hamiltonian functions (\ref{H1}) and (\ref{H0}) evaluated 
at $u^{(0)}$. In terms of the above introduced functions,
the asymptotics of $\chi(\lambda)$ as $\lambda\to\infty$ is 
given by
\begin{multline}\label{chi_as}
\chi(\lambda)=
I
+\lambda^{-1}
\bigl[
\tfrac{1}{2}(
u^{(0)}
-\tfrac{1}{2}\lambda_5
+\tilde H_1^2
)\sigma_3
-\tilde H_1\sigma_+
\\
+(
u^{(0)}\tilde H_1
+\tfrac{1}{4}u_x^{(0)}
-\tfrac{1}{3}\tilde H_0
+\tfrac{1}{2}\tilde H_1^3
)\sigma_-
\bigr]
+{\mathcal O}(\lambda^{-3/2}),
\\
\tilde H_1=H_1^{(0)}+F_{-1},\quad
\tilde H_0=H_0^{(0)}+F_{-2}.
\end{multline}

Combining these formulas, the function 
$\chi(\lambda)$ satisfies the RH problem,
\begin{description}
\item[(i)]
$\chi(\lambda)\to I$ as $\lambda\to\infty$;
\item[(ii)]
$\chi(\lambda)$ is discontinuous across the oriented contour $\ell$ 
in Figure~\ref{fig7}, moreover
\begin{equation}\label{chi_jumps}
\chi_+(\lambda)=\chi_-(\lambda){\mathcal S}(\lambda),\quad
\lambda\in\ell,\quad
{\mathcal S}(\lambda)=
\tilde\Psi_-(\lambda)S\tilde\Psi_+^{-1}(\lambda).
\end{equation}
\end{description}
%%%%%%%%%%%%%%%%%%%%%%%%%%%%
\begin{figure}[htb]
\begin{center}
\mbox{\epsfig{figure=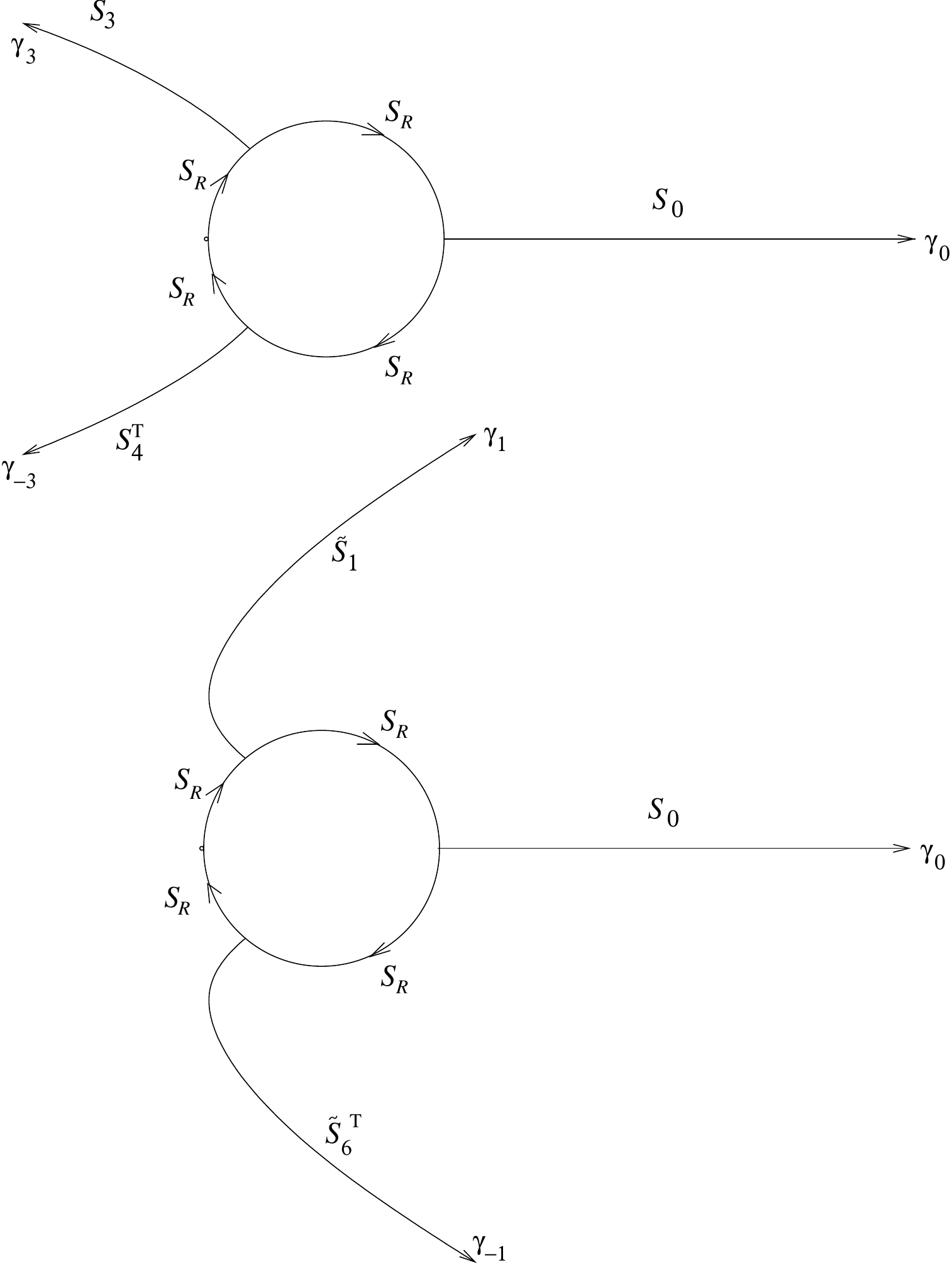,width=0.5\textwidth}}
\end{center}
\caption{Jump contour for the correction function $\chi(\lambda)$
in theorem~\ref{phi=0_red_RHP12_solvability} (above)
and in theorem~\ref{phi=3pi_red_RHP32_solvability}
(below).}
\label{fig7}
\end{figure}
%%%%%%%%%%%%%%%%%%%%%%%%%%%%

Observe first of all that $\Re F(\lambda)\to+\infty$ as 
$\lambda\to\infty$, $\lambda\in\gamma_0$, 
and $\Re F(\lambda)\to-\infty$ as $\lambda\to\infty$, 
$\lambda\in\gamma_k$, $k=\pm3,\pm1$. Therefore the jump matrices
${\mathcal S}(\lambda)$ in (\ref{chi_jumps}) are exponentially 
small as $\lambda\to\infty$. This fact however does not guarantee
that the jumps of $\chi(\lambda)$ are {\em uniformly} small as
$\lambda\in\ell$ and $|x|\to\infty$. Impose the following

\smallskip
{\scs Condition~1}: the infinite branches of the jump graphs
in Figure~\ref{fig7} are such that $\Re F(\lambda)$ 
is {\em strictly} monotonic along each infinite branch.

\begin{rem}
The case of non-strictly monotonic $\Re F(\lambda)$ is more
involved and will be considered in the next section.
\end{rem}

Under Condition~1, the jump matrices across the infinite 
branches of the jump graph are estimated uniformly in $\lambda$ 
as follows,
\begin{equation}\label{chi_gamma3_jumps_estim}
\lambda\in\gamma_{0,\pm3,\pm1},
\
|\lambda-\lambda_5|>R|x|^{1/3}
\colon\
\|{\mathcal S}_{j}(\lambda)-I\|\leq
C_1e^{-c_2|x|^{2/3}\cdot|\lambda-\lambda_5|^{3/2}}
|\lambda-\lambda_5|^{1/2},
\end{equation}
with some constant $C_1>0$. The constant $c_2>0$
is determined by the least value of $|\Re F(\lambda)|$
on the infinite branch of the jump graph,
\begin{equation*}
c_2=
2R^{-3/2}
\min_{\lambda\in\gamma_k}|x^{-7/6}\Re F(\lambda)|,\quad
k=0,\pm3,\pm1.
\end{equation*}

The matrix ${\mathcal S}_R(\lambda)$ across the circle
$|\lambda-\lambda_5|=R|x|^{1/3}$ is estimated using
(\ref{tilde_Psi_at_circle_jump}),
\begin{multline}\nonumber
\lambda\in C_R\colon\quad
|({\mathcal S}_R(\lambda))_{21}|\leq
C_{21}|x|^{-1},
\\
|({\mathcal S}_R(\lambda)_{jj}-1|\leq
C_{jj}|x|^{-7/6},\quad
|({\mathcal S}_R(\lambda))_{12}|\leq
C_{12}|x|^{-4/3},
\end{multline}
with some constants $C_{ij}>0$.

Estimates (\ref{chi_gamma3_jumps_estim}) and 
(\ref{tilde_Psi_at_circle_jump}) imply the solvability of 
the RH problem for large enough $|x|$ in some domains of 
${\mathbb C}^2\ni(x,t)$, see \cite{FIKN}, consistent with the 
{\scs Condition}~1. Now, we are going to construct 
certain {\em closed simply connected domains\ } $\omega_0$ 
and $\hat\omega_0$ consistent with {\scs Condition}~1.

At $t=0$, using (\ref{lambda5_eq}) 
and (\ref{lambda13_eq_sol}),
\begin{equation}\nonumber
\lambda_5=-2\sqrt[3]{6}x^{1/3},\quad
\lambda_{1,3}=-\tfrac{1}{4}(1\mp i\sqrt5)\lambda_5.
\end{equation}
Therefore, if $t=0$ and $\arg x=0$, the configuration of 
the steepest descent lines $\Im F(\lambda)=0$ coincides with the 
configuration of the infinite tails in the jump graph shown
in Figure~\ref{fig7} (above), and {\scs Condition}~1 is satisfied.
In what follows, it is important that $\Re F(\lambda_1)>0$
and $\Re F(\lambda_3)>0$ for $t=0$ and $\arg x=0$.

If $t=0$ and 
\begin{equation}\nonumber
\arg x=\tfrac{3}{7}\arctan\tfrac{1}{\sqrt5}
\end{equation}
then $\Im F(\lambda_1)=0$, i.e.\ the steepest descent path 
$\Im F(\lambda)\equiv0$ meets the double branch point $\lambda_1$ 
and breaks down at this point.
Nevertheless, since $\Re F(\lambda)$ remains strictly monotonic along it, 
the jump curve $\gamma_0$ can be chosen as the steepest descent line
$\Im F(\lambda)\equiv0$, and {\scs Condition}~1 is still satisfied.

Further increase of $\arg x$ forces the splitting of the steepest descent 
path $\Im F(\lambda)\equiv0$ into two branches. One of these branches emanates 
from a point of the circle $C_R$ and approaches the ray 
$\arg\lambda=-\tfrac{4\pi}{7}$, while another branch comes from 
infinity at the direction $\arg\lambda=-\tfrac{4\pi}{7}$, passes 
through $\lambda_1$ and then approaches the ray $\arg\lambda=0$.

As the result, the continuous jump curve $\gamma_0$ cannot coincide with the
steepest descent path. Nevertheless, taking into account that
$\Re F(\lambda_5)=0$ and $\Re F(\lambda)\to+\infty$ as $\lambda\to+\infty$,
it is possible to chose the continuous line $\gamma_0$ satisfying {\scs Condition}~1 
as soon as the inequality $\Re F(\lambda_1)>0$ holds. Thus the direction
\begin{equation}\nonumber
\arg x=\tfrac{3\pi}{7}+\tfrac{3}{7}\arctan\tfrac{1}{\sqrt5}
\quad
\mbox{corresponding to}
\quad
\Re F(\lambda_1)=0,
\end{equation}
gives us an upper bound for the sector of directions consistent with
{\scs Condition}~1 and hence in accordance with the solvability of
the RH problem~\ref{initial_RHP} with the Stokes data 
(\ref{spm2_spm1=0}). 

Using similar considerations for the negative values of $\arg x$, 
we prove the solvability of the RH problem~\ref{initial_RHP} with 
the Stokes data (\ref{spm2_spm1=0}) in the closed sector
\begin{equation}\nonumber
\arg x\in[-\tfrac{3\pi}{7}-\tfrac{3}{7}\arctan\tfrac{1}{\sqrt5}+\epsilon,
\tfrac{3\pi}{7}+\tfrac{3}{7}\arctan\tfrac{1}{\sqrt5}-\epsilon],\quad
|x|>\rho_0(\epsilon),\quad
\epsilon>0.
\end{equation}

Observing that $\arg x=\tfrac{3\pi}{7}-\tfrac{3}{7}\arctan\tfrac{1}{\sqrt5}$
corresponds to the equality $\Re F(\lambda_3)=0$, it is more convenient 
to restrict ourselves to the less wide sector $\omega_0$ of the complex $x$ plane,
\begin{equation}\label{Re_int>=0_around_phi=0}
\omega_0|_{t=0}=\{x\in\mathbb{C}\colon\quad
|x|>\rho_0,\quad
\arg x\in[-\alpha_0,\alpha_0]\},\quad
\alpha_0=\tfrac{3\pi}{7}-\tfrac{3}{7}\arctan\tfrac{1}{\sqrt5}.
\end{equation}
(The reason for the above made choice (\ref{Re_int>=0_around_phi=0}) of 
$\omega_0$ will become clear in Section~\ref{um_proliferation}.)
The above definition of $\omega_0$ admits a straightforward extension 
to $t\neq0$,
\begin{equation}\label{phi=0_Re_int_inequality}
\omega_0=\{(x,t)\in\mathbb{C}^2\colon\quad
|x|>\rho_0(t),\quad 
\Re F(\lambda_1)\geq 0
\quad\mbox{and}\quad
\Re F(\lambda_3)\geq 0\}.
\end{equation}
 
Similarly, it is possible to prove the solvability of the RH 
problem~\ref{initial_RHP} with the Stokes data (\ref{spm3_spm2=0})
at $t=0$ in the sector
\begin{equation}\nonumber
\arg x\in[
3\pi-\tfrac{6\pi}{7}+\tfrac{3}{7}\arctan\tfrac{1}{\sqrt5}+\epsilon,
3\pi+\tfrac{6\pi}{7}-\tfrac{3}{7}\arctan\tfrac{1}{\sqrt5}-\epsilon],\quad
|x|>\hat\rho_0(\epsilon),\quad
\epsilon>0.
\end{equation}
However, similarly to the choice of $\omega_0$ 
(\ref{Re_int>=0_around_phi=0}), it is more convenient 
to restrict ourselves to the less wide 
domain $\hat{\omega}_0$,
\begin{equation}\label{Re_int<=0_around_phi=3pi}
\hat{\omega}_0|_{t=0}=
\{x\in\mathbb{C}\colon\
|x|>\rho_0,\
\arg x\in\bigl[
3\pi-\beta_0,
3\pi+\beta_0
\bigr]\},\quad
\beta_0=\tfrac{3}{7}\arctan\tfrac{1}{\sqrt5}.
\end{equation}
This definition extends to any $t$ in an arbitrary simply connected
domain of the complex $t$ plane containing $t=0$,
\begin{equation}\label{phi=pi_Re_int_inequality}
\hat{\omega}_0=
\{(x,t)\in\mathbb{C}^2\colon\quad
|x|>\rho_0(t),\quad
\Re F(\lambda_1)\leq 0
\quad\mbox{and}\quad
\Re F(\lambda_3)\leq 0\}.
\end{equation}

In both the definitions (\ref{phi=0_Re_int_inequality}) and 
(\ref{phi=pi_Re_int_inequality}), the dependence of $\rho_0(t)$
on $t$ can be given using the appropriate rescaling, see e.g.\
\cite{K1, K2},
\begin{equation}\label{rho0(t)}
\rho_0(t)=\rho_0(1+|t|^{3/2}),\quad
\rho_0=const.
\end{equation}

\begin{rem}\label{sclim}
In the case of unbounded $t$, it is possible to recast
the whole problem taking $|t|$ as a large parameter. 
However, it is more convenient to introduce
a small parameter $\delta>0$ and to scale variables as 
$x=\delta^{-3/2}(x_0+\delta \xi)$,
$t=\delta^{-1}(t_0+\delta\tau)$, $x_0,t_0=const$,
$|x_0|+|t_0|\neq0$, cf.\ \cite{K1, K2}. 
This approach enables us to keep
$x$ and $t$ on equal footing and e.g.\ to study
the case of bounded $x$ and large $t$.
The scaling limit analysis described in \cite{K1, K2} 
yields the domains of solvability (\ref{phi=0_Re_int_inequality}) 
and (\ref{phi=pi_Re_int_inequality}) with the condition (\ref{rho0(t)}) on 
$|x|$ replaced by a condition of the 
form $\delta<\delta_0$.
The relevant boundary conditions (\ref{Re_int>=0_around_phi=0}) 
and (\ref{Re_int<=0_around_phi=3pi}) are replaced, respectively,
by similar conditions taken at $t_0=0$ or, equivalently, at 
$x_0\to\infty$. In the present paper, however, we do not 
develop such a generalized approach in detail.
\end{rem}

Finally, we find the asymptotic behavior of the  solution of \PItwo. 
Consider the singular integral equation equivalent to 
the RH problem for $\chi(\lambda)$,
\begin{equation*}
\chi(\lambda)=I
+\frac{1}{2\pi i}\int_{\ell}
\frac{\chi_{_-}(z)({\mathcal S}(z)-I)}{z-\lambda}\,dz,
\end{equation*}
where
\begin{equation*}
\chi_{_-}(\lambda)=I-\frac{1}{2}\chi_{_-}(\lambda)({\mathcal S}(\lambda)-I)
+\frac{P.V.}{2\pi i}\int_{\ell}
\frac{\chi_{_-}(z)({\mathcal S}(z)-I)}{z-\lambda}\,dz,
\end{equation*}
where $P.V.$ stands for principal value.

As $|x|$ is sufficiently large and $(x,t)\in\omega_0$ (respectively,
$(x,t)\in\hat\omega_0$),
the norm of the integral operator is less than one, and the integral 
equation can be solved iteratively. Thus, using the asymptotics 
(\ref{tilde_Psi_at_circle_jump}) and (\ref{chi_as}),
\begin{multline}\label{chi_as_computed}
(
u^{(0)}\tilde H_1
+\tfrac{1}{4}u_x^{(0)}
-\tfrac{1}{3}\tilde H_0
+\tfrac{1}{2}\tilde H_1^3
)\sigma_-
+\tfrac{1}{2}(
u^{(0)}
-\tfrac{1}{2}\lambda_5
+\tilde H_1^2
)\sigma_3
-\tilde H_1\sigma_+
=
\\
={\mathcal O}(x^{-2/3}\sigma_-)
+{\mathcal O}(x^{-5/6}\sigma_3)
+{\mathcal O}(x^{-1}\sigma_+),
\end{multline}
we find
\begin{equation}\label{H1(0)_u(0)_as0}
\tilde{H}_1={\mathcal O}(x^{-1}),\quad
u^{(0)}=\tfrac{1}{2}\lambda_5
+{\mathcal O}(x^{-5/6}).
\end{equation}
This yields the asymptotics (\ref{u_s12=0_as}) for $u^{(0)}(x,t)$
and (\ref{u_s32=0_as}) for $\hat u^{(0)}(x,t)$.
\end{proof}

\begin{rem}
For bounded $t$ as $|x|\to\infty$, the asymptotics of $u^{(0)}(x,t)$
and $\hat u^{(0)}(x,t)$ in (\ref{H1(0)_u(0)_as0}) yields 
the asymptotic formula from \cite{BMP, CV},
\begin{equation}\label{u0_CV}
u^{(0)}(x,t)\simeq\tfrac{1}{2}\lambda_5=
-\sqrt[3]{6}\,x^{1/3}
-\tfrac{2}{\sqrt[3]{6}}tx^{-1/3}
+{\mathcal O}(t^2x^{-1}).
\end{equation}
The expression (\ref{H1(0)_u(0)_as0})
extends the formula (\ref{u0_CV}) to the unbounded 
values of $t$, see Remark~\ref{sclim}.  For instance,
as $t\to-\infty$, the domain $\hat\omega_0$ contains
a neighborhood of $x=0$, see Figures~\ref{fig14} 
and~\ref{fig14a}, and it is possible to find the asymptotics
of $\hat u^{(0)}(x,t)$ for the values of $x$ satisfying
$|x|\ll(-t)^{3/2}$,
\begin{equation*}
\hat u^{(0)}(x,t)=xt^{-1}
+{\mathcal O}(x^3t^{-4}).
\end{equation*}
\end{rem}

\begin{rem}\label{multiple_branch}
The violation of the conditions $\lambda_5^2-8t\neq0$ 
and $\lambda_5^2-48t\neq0$, see (\ref{non-degeneracy_conditions}),
can take place at the boundary of the domains $\omega_0$ and
$\hat\omega_0$, see Figures~\ref{fig14} and~\ref{fig14a}. 
At such points, the construction of the model $\Psi$
function involves the use of the $\Psi$ functions associated with 
the first and the second Painlev\'e transcendents, respectively.
This does not affect the leading order term in the asymptotics 
(\ref{H1(0)_u(0)_as0}), while the lower order terms should 
be modified.
\end{rem}

\break
\subsection{2-parameter perturbations of the 0-parameter 
Riemann-Hilbert problem}
\subsubsection{Perturbation $s_2,s_{-2}\neq0$ of the RH problem~\ref{initial_RHP}
satisfying (\ref{spm2_spm1=0})}

\begin{thm}\label{s2_perturb_solvability}
For the Stokes multipliers
\begin{equation}\label{spm1=0}
s_{-1}=s_1=0,\quad
s_{-3}=s_3=s_{-2}+s_0+s_2=-i,
\end{equation}
and $(x,t)\in\omega_0$, the RH problem~\ref{initial_RHP}
is solvable. Assuming that $(x,t)\in\omega_0$ are such that
$\lambda_{1,3}\neq\lambda_5$ and $\lambda_1\neq\lambda_3$, 
the $x$ large asymptotics of the corresponding solution $u(x,t)$ 
of equation {\em\PItwo } is as follows,
\begin{multline}\label{u_as_eval}
u=u^{(0)}
-is_2\tfrac{1}{2\sqrt{\pi}}
(F''(\lambda_3))^{-1/2}
e^{-2F(\lambda_3)}
(1+{\mathcal O}(x^{-1/6}))
\\
-is_{-2}\tfrac{1}{2\sqrt{\pi}}
(F''(\lambda_1))^{-1/2}
e^{-2F(\lambda_1)}
(1+{\mathcal O}(x^{-1/6})),
\end{multline}
where $u^{(0)}\simeq-\sqrt[3]{6}\,x^{1/3}$ is the solution 
of {\em\PItwo } (\ref{u_s12=0_as}) corresponding 
to the Stokes multipliers
$s_{-2}=s_{-1}=s_1=s_2=0$, $s_{-3}=s_0=s_3=-i$.
The function $F(\lambda)$ is defined in (\ref{F_def}).
The branches of $F(\lambda_{1,3})$ and
$(F''(\lambda_{1,3}))^{-1/2}$ are fixed
by their values at $t=0$,
\begin{multline*}
F(\lambda_{1,3})\bigr|_{t=0}=
\tfrac{1}{2}
h_{\mp}
\tfrac{6}{7}
x^{7/6},
\\
\shoveleft{
(F''(\lambda_{1,3}))^{-1/2}\bigr|_{t=0}=
% \sqrt{30}
% (\lambda_{1,3}-\lambda_{3,1})^{-1/2}
% (\lambda_{1,3}-\lambda_5)^{-1/4}\bigr|_{t=0}
% \\
% =
\pm i
2\sqrt{\pi}
A_{\mp}
x^{-1/4},
}\hfill
\end{multline*}
where the upper (resp., lower) subscript on the right hand side
corresponds to $\lambda_1$ (resp., to $\lambda_3$) on the left hand side,
and
\begin{equation}\label{h_pm_def}
h_{\sigma}
=
5^{1/4}
3^{5/12}
2^{11/12}
e^{\sigma\frac{i}{2}\arctan\frac{1}{\sqrt5}},\
A_{\sigma}=
\tfrac{\sqrt[8]{15}}{\sqrt[8]{32}\sqrt{\pi}}
e^{\sigma(i\frac{\pi}{4}-i\frac{1}{4}\arctan\frac{1}{\sqrt5})},\
\sigma\in\{+,-\}.
\end{equation}

\end{thm}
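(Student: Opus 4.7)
The plan is to run the same construction as in Theorem~\ref{phi=0_red_RHP12_solvability} and treat the nontrivial multipliers $s_{\pm2}$ as a small perturbation. I would use the very same model function $\tilde\Psi(\lambda)$ from (\ref{tilde_Psi_def}) and the same substitution $\Psi(\lambda)=(I-\tilde H_1\sigma_-)\chi(\lambda)\tilde\Psi(\lambda)$, so that $\chi$ solves an RH problem on the contour $\ell$ of Figure~\ref{fig7} enlarged by the two extra rays $\gamma_{\pm2}$. On these new rays the jump matrix is
$$
{\mathcal S}_{\pm2}(\lambda)=\tilde\Psi_-(\lambda)S_{\pm2}\tilde\Psi_+^{-1}(\lambda)
=I+s_{\pm2}\,e^{-2F(\lambda)}\,\bigl[\sigma_-+O(\tilde H_1)\bigr],
$$
i.e. the perturbation lives in the $\sigma_-$ direction and is weighted by $e^{-2F(\lambda)}$. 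The old jumps on $\gamma_0,\gamma_{\pm1},\gamma_{\pm3}$ and on the circle $C_R$ are unchanged, and they remain exponentially small and $O(x^{-1})$ respectively, exactly as in (\ref{chi_gamma3_jumps_estim})--(\ref{tilde_Psi_at_circle_jump}).

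The key observation is that the double branch points $\lambda_1,\lambda_3$ of the model curve (\ref{2+2_deg}) are saddle points of $F$ (since $F'=\mu$ vanishes there), and by the very definition (\ref{phi=0_Re_int_inequality}) of $\omega_0$ we have $\Re F(\lambda_{1,3})\geq0$. I would therefore deform $\gamma_2$ (resp.\ $\gamma_{-2}$) through $\lambda_3$ (resp.\ $\lambda_1$) along the corresponding steepest descent path of $\Re F$, so that $|e^{-2F(\lambda)}|$ attains its maximum precisely at the saddle and decays Gaussian-like away from it. On the deformed contour the jump is pointwise bounded by $|s_{\pm2}|\,e^{-2\Re F(\lambda_{1,3})}$ and its $L^2$-norm scales as $|s_{\pm2}|\,(F''(\lambda_{1,3}))^{-1/2}e^{-2\Re F(\lambda_{1,3})}$. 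Combined with the existing estimates this yields a small-norm RH problem for $\chi$, which is therefore uniquely solvable by the standard Neumann series argument; this establishes the solvability statement.

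To extract the asymptotics (\ref{u_as_eval}) I would expand $\chi(\lambda)=I+\chi_1/\lambda+O(\lambda^{-3/2})$ with
$$
\chi_1=-\frac{1}{2\pi i}\int_\ell\chi_-(z)\bigl({\mathcal S}(z)-I\bigr)\,dz,
$$
and evaluate the new contributions from $\gamma_{\pm2}$ by Laplace/Watson at the saddles $\lambda_{1,3}$. Using $\chi_-(\lambda_j)=I+O(x^{-5/6})$ (from the unperturbed theorem) and the local quadratic expansion $F(\lambda)=F(\lambda_j)+\tfrac12 F''(\lambda_j)(\lambda-\lambda_j)^2+\dots$, each saddle gives
$$
\chi_1\bigr|_{\gamma_{\pm 2}\text{-piece}}=
-\frac{s_{\pm2}}{2\sqrt{\pi}}\bigl(F''(\lambda_{1,3})\bigr)^{-1/2}e^{-2F(\lambda_{1,3})}\bigl(1+O(x^{-1/6})\bigr)\,\sigma_- ,
$$
the relative $O(x^{-1/6})$ coming from the cubic term in the Taylor expansion of $F$ controlled by $F'''(\lambda_j)/F''(\lambda_j)^{3/2}$ (with $F''(\lambda_j)\sim x^{1/2}$). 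Feeding this $\sigma_-$ perturbation back into the expansion (\ref{chi_as_computed}) and linearizing around the $0$-parameter solution $u^{(0)}$ then produces the claimed formula (\ref{u_as_eval}) after the diagonalization induced by the $(I-\tilde H_1\sigma_-)$ prefactor. The explicit values of $F(\lambda_{1,3})|_{t=0}$ and $(F''(\lambda_{1,3}))^{-1/2}|_{t=0}$, and hence of $h_\sigma$ and $A_\sigma$ in (\ref{h_pm_def}), come from plugging $\lambda_5|_{t=0}=-2\sqrt[3]{6}\,x^{1/3}$ and $\lambda_{1,3}|_{t=0}=-\tfrac14(1\mp i\sqrt5)\lambda_5$ into (\ref{F_def}) and (\ref{2+2_deg}) and fixing branches so as to match the real, exponentially small behavior on the positive real $x$-axis.

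The main obstacle will be the global control of the contour deformation over the whole sector $\omega_0$: as $\arg x$ varies between $-\alpha_0$ and $\alpha_0$ the saddles $\lambda_{1,3}$ rotate and the steepest descent graph of $\Re F$ reconfigures, and one must check that the deformed rays $\gamma_{\pm2}$ can always be drawn so that $\Re F$ is strictly monotone away from the saddles (a {\scs Condition~1}-type property for the perturbed problem). The secondary nuisance is the careful bookkeeping of the $1/4$- and $1/2$-power branches of $F$ and $\sqrt{F''}$ at $\lambda_{1,3}$, which must be tracked continuously from $\arg x=0$ in order to recover the precise phases $e^{\sigma\frac{i}{2}\arctan(1/\sqrt5)}$ and $e^{\sigma(i\pi/4-i\arctan(1/\sqrt5)/4)}$ in (\ref{h_pm_def}).
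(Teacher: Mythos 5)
There is a genuine gap, and it sits exactly where the theorem is hardest: uniform solvability on the \emph{closed} domain $\omega_0$. Your small-norm/Neumann-series step requires the jump to be small in $L^\infty$, but by the definition (\ref{phi=0_Re_int_inequality}) the boundary $\partial\omega_0$ contains points where $\Re F(\lambda_1)=0$ or $\Re F(\lambda_3)=0$; there $|e^{-2F(\lambda_{1,3})}|=1$, your pointwise bound $|s_{\pm2}|e^{-2\Re F(\lambda_{1,3})}$ is $O(1)$ at the saddle, and the contraction argument fails precisely on the anti-Stokes rays where the theorem is later used (the bitronqu\'ee/tritronqu\'ee matchings and the quasi-linear Stokes phenomenon of Section~5 live on these boundaries, where the correction term in (\ref{u_as_eval}) is oscillatory rather than exponentially small). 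The paper's proof contains a device your proposal lacks: since $B_{\pm2}(\lambda)$ is conjugate to the nilpotent matrix $\tfrac{1}{2}(\sigma_3+i\sigma_2)$, the explicit Cauchy-kernel factors $\hat X_{\pm2}$ of (\ref{hat_Xpm2_def}), built on the \emph{frozen} matrices $B_2=B_2(\lambda_3)$ and $B_{-2}=\hat X_2(\lambda_1)B_{-2}(\lambda_1)\hat X_2^{-1}(\lambda_1)$, solve the frozen-coefficient jumps exactly; after the factorization $X=Y\hat X_{-2}\hat X_2$ of (\ref{Y_def}), the residual jump $G-I$ vanishes linearly at the saddle points --- this is the factor $|\lambda-\lambda_{3,1}|/(1+c_3|\lambda-\lambda_{3,1}|)$ in (\ref{Y_jumps_corr_estim}) --- and is uniformly small, $O(|x|^{-5/24})$ in $L^2$ by (\ref{Y_jumps_corr_L2_estim}), on all of $\omega_0$ \emph{without} any exponential smallness. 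Your closing ``main obstacle'' paragraph misdiagnoses the difficulty as contour bookkeeping: strict monotonicity of $\Re F$ along the deformed rays can indeed be arranged, but smallness of the jump at the saddle cannot, and that is what must be engineered away.

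There is also a computational error that derails your extraction of (\ref{u_as_eval}). Conjugating $S_{\pm2}=I+s_{\pm2}\sigma_-$ by the outer model does \emph{not} give $I+s_{\pm2}e^{-2F}[\sigma_-+O(\tilde H_1)]$: by (\ref{tilde_S_pm2_def}) and (\ref{Bpm2_as}) the jump direction is $B_{\pm2}(\lambda)$ with leading part $\tfrac{1}{2}(\lambda_{3,1}-\lambda_5)^{-1/2}\sigma_+ +\tfrac{1}{2}\sigma_3-\tfrac{1}{2}(\lambda_{3,1}-\lambda_5)^{1/2}\sigma_-$. The $\tfrac{1}{2}\sigma_3$ component is essential: in (\ref{X_as}) the correction to $u$ sits in the $\sigma_3$ entry, and it is this component that produces the quadrature (\ref{H_u_as}) and hence, by steepest descent, the coefficient $\tfrac{1}{2\sqrt\pi}(F''(\lambda_{3,1}))^{-1/2}e^{-2F(\lambda_{3,1})}$ in (\ref{u_as_eval}). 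With a pure $\sigma_-$ perturbation, as you claim, the saddle contribution would enter only the $\sigma_-$ entry of $\chi_1$, which by (\ref{chi_as}) encodes a combination of $\tilde H_0$, $\tilde H_1$ and $u_x^{(0)}$ rather than $u$, and the ``diagonalization induced by the $(I-\tilde H_1\sigma_-)$ prefactor'' cannot rescue this since $\tilde H_1=O(x^{-1})$. Finally, note the paper perturbs around the solved $\Psi^{(0)}$ via (\ref{Phi_spm1_or_spm2=0}), not around the bare model $\tilde\Psi$; this makes the beyond-all-orders difference $u-u^{(0)}$ exact at the Riemann--Hilbert level. Your one-step variant could be salvaged in the interior of $\omega_0$ by comparing the perturbed and unperturbed $\chi$-problems and checking that the circle-jump contributions cancel in the difference to the stated relative accuracy, but you never perform this subtraction, and near $\partial\omega_0$ the argument must in any case be reorganized along the paper's lines.
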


\begin{rem}
According to Remark~\ref{multiple_branch}, the asymptotics
(\ref{u_as_eval}) has to be modified at the points where condition
(\ref{non-degeneracy_conditions}) is violated.
\end{rem}

\begin{rem}
Two terms in the asymptotic expansions of $F(\lambda_{1,3})$ and
$(F''(\lambda_{1,3}))^{-1/2}$ with respect to $tx^{-2/3}\to0$ are
given below in (\ref{F13_ddF13_as}).
\end{rem}

\begin{proof}
First observe that the jump graph of the RH problem
under condition (\ref{spm1=0}) can be transformed to the
graph shown in Figure~\ref{fig8}.
%%%%%%%%%%%%%%%%%%%%%%%%%%%
\begin{figure}[htb]
\begin{center}
\mbox{\epsfig{figure=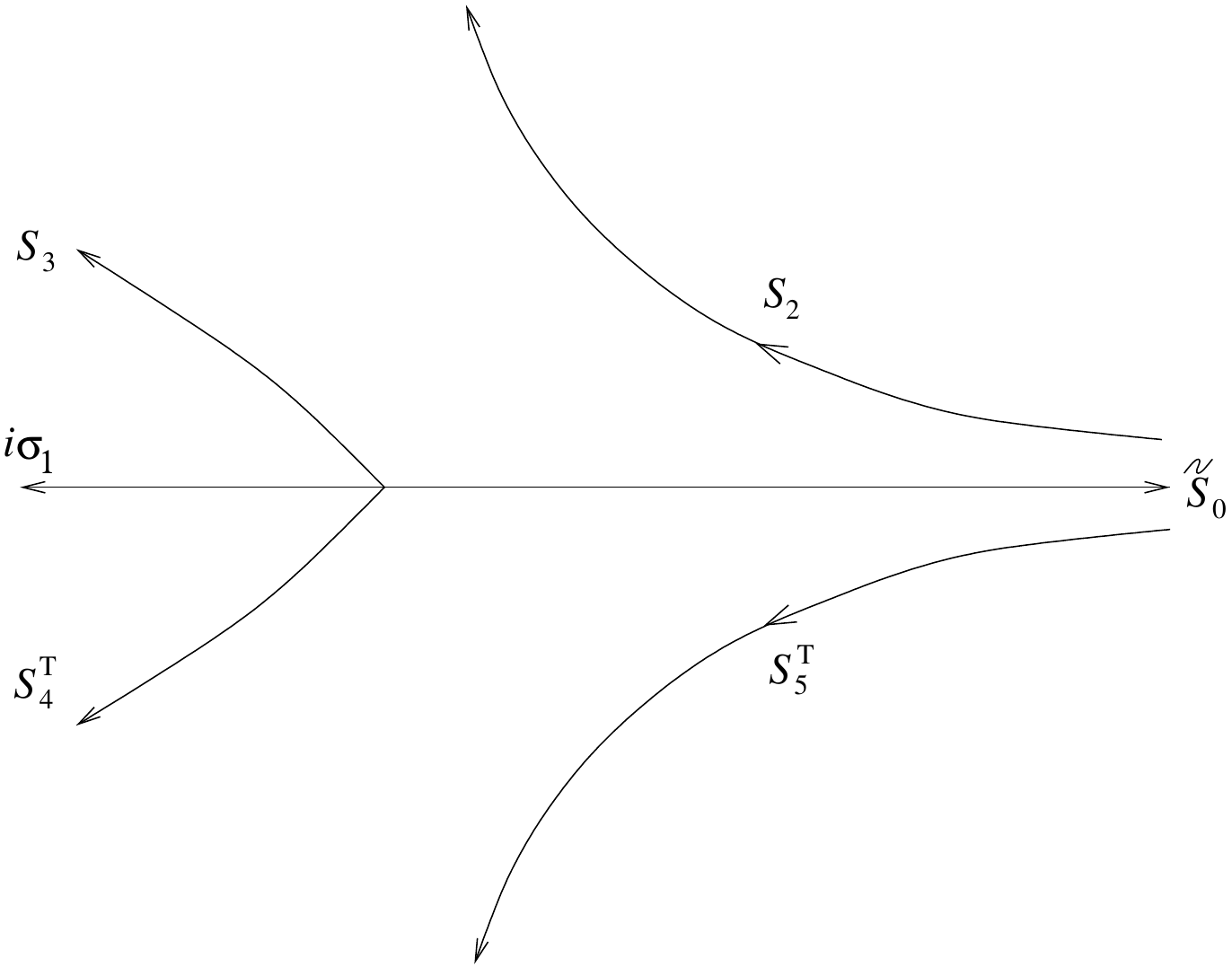,width=0.5\textwidth}}
\end{center}
\caption{The jump contour for the RH problem corresponding to
the degeneration $s_1=s_{-1}=0$. Here, $S_{\pm3}=I-i\sigma_+$,
$\tilde S_0=I-i\sigma_-$, $S_{\pm2}=I+s_{\pm2}\sigma_-$.}
\label{fig8}
\end{figure}
%%%%%%%%%%%%%%%%%%%%%%%%%%%
We look for the solution of the above problem in the form 
of the product,
\begin{equation}\label{Phi_spm1_or_spm2=0}
\Psi(\lambda)=
\bigl(I-(H_1-H^{(0)}_1)\sigma_-\bigr)
X(\lambda)\Psi^{(0)}(\lambda).
\end{equation}
Here $\Psi^{(0)}(\lambda)$ is the solution  constructed above
of the RH problem~\ref{initial_RHP} with the Stokes data
(\ref{spm2_spm1=0}), and $H^{(0)}_1$ denotes 
the Hamiltonian function (\ref{H1(0)_u(0)_as0}) computed on
$u^{(0)}(x,t)$.

The asymptotics of $X(\zeta)$ at infinity is as follows,
\begin{multline}\label{X_as_def}
X(\lambda)=
I
+\lambda^{-1}
\bigl[
-(H_1-H^{(0)}_1)\sigma_+
+\tfrac{1}{2}(u-u^{(0)}+(H_1-H^{(0)}_1)^2)\sigma_3
%\\
%+\bigl(
%-\tfrac{1}{3}(H_0-H^{(0)}_0)
%+u(H_1-H^{(0)}_1)
%+\tfrac{1}{3}(H_1-H^{(0)}_1)^3
%\bigr)\sigma_-
\bigr]
\\
+{\mathcal O}(\lambda^{-1}\sigma_-)
+{\mathcal O}(\lambda^{-3/2}),
\end{multline}
where $u^{(0)}$ is the solution to \PItwo\ (\ref{u_s12=0_as}).
$X(\lambda)$ is piece-wise holomorphic as being discontinuous
across the lines $\gamma_{2}$ and $\gamma_{-2}$ shown in 
Figure~\ref{fig10}.
%%%%%%%%%%%%%%%%%%%%%%%%%%%
\begin{figure}[htb]
\begin{center}
\mbox{\epsfig{figure=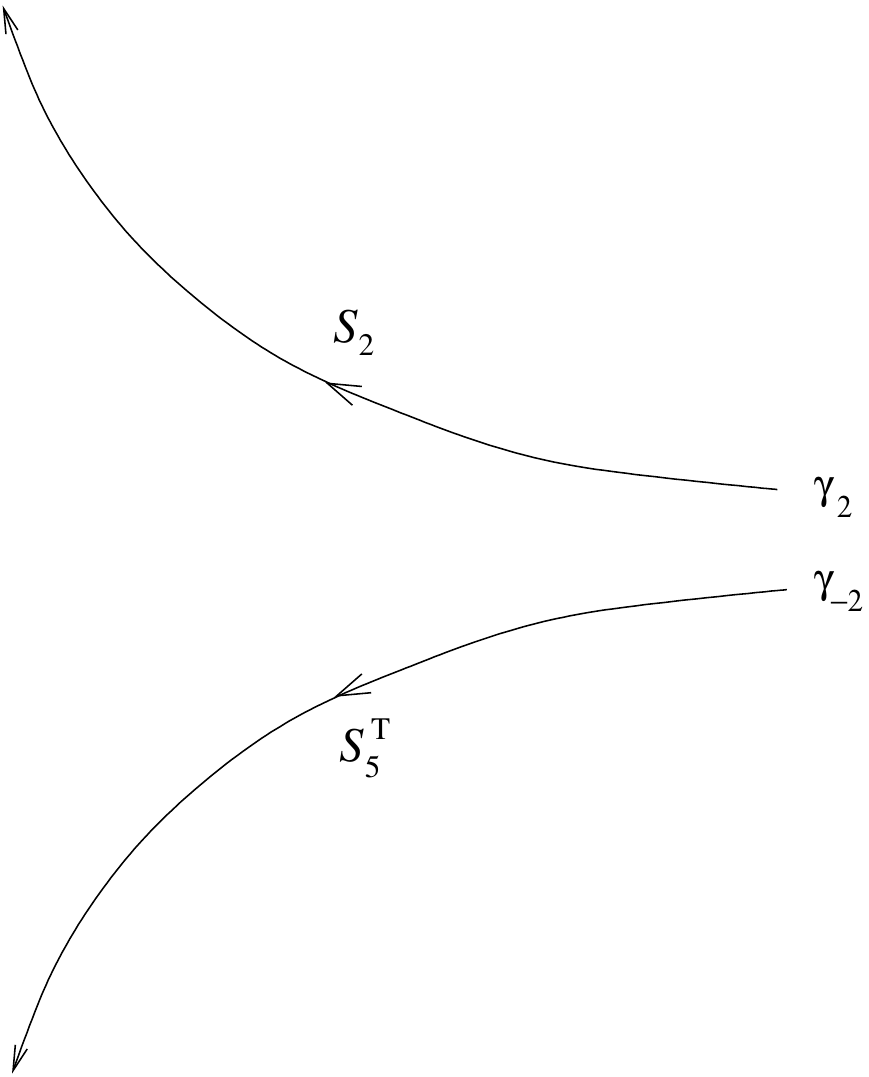,width=0.3\textwidth}}
\end{center}
\caption{The jump contour for $X(\lambda)$ as $s_{-1}=s_1=0$.}
\label{fig10}
\end{figure}
%%%%%%%%%%%%%%%%%%%%%%%%%%%
Its jumps are described by the equations
\begin{multline}\label{X_jump}
X_+(\lambda)=X_-(\lambda){\mathcal S}_k(\lambda),\quad
\lambda\in\gamma_k,
\\
\shoveleft{
{\mathcal S}_k(\lambda):=
\Psi^{(0)}(\lambda)S_k(\Psi^{(0)}(\lambda))^{-1},\quad
k=\pm2.
}\hfill
\end{multline}

Let us find the large $x$ asymptotics of the jump matrices 
${\mathcal S}_k(\lambda)$, $k=\pm2$. Using (\ref{chi_def}), 
(\ref{tilde_Psi_def}) for $|\lambda-\lambda_5|>R|x|^{1/3}$ 
and (\ref{chi_as_computed}),
\begin{multline}\label{tilde_S_pm2_def}
{\mathcal S}_{k}(\lambda):=
\Psi^{(0)}(\lambda)S_{k}(\Psi^{(0)}(\lambda))^{-1}=
I+s_{k}\Psi^{(0)}(\lambda)\sigma_-(\Psi^{(0)}(\lambda))^{-1}=
\\
=I+s_{k}
e^{-2F(\lambda)}
B_{k}(\lambda),\quad
\lambda\in\gamma_{k},\quad
k=\pm2,
\\
B_k(\lambda)=
\bigl(I-\tilde H_1\sigma_-\bigr)
\chi(\lambda)
(\lambda-\lambda_5)^{-\frac{1}{4}\sigma_3}
\tfrac{1}{2}(\sigma_3+\sigma_+-\sigma_-)
\times
\\
\times
(\lambda-\lambda_5)^{\frac{1}{4}\sigma_3}
\chi^{-1}(\lambda)
\bigl(I+\tilde H_1\sigma_-\bigr).
\end{multline}
Formula (\ref{tilde_S_pm2_def}) immediately yields 
the exponentially fast decay of the jump multiplier
as $x\to\infty$ in the interior of $\omega_0$ and therefore 
existence of $X(\lambda)$. The proof of the existence of 
$X(\lambda)$ for all $(x,t)\in\omega_0$ including its boundary
$\partial\omega_0$ requires more efforts.

Let $\gamma_2$, $\gamma_{-2}$ be the level lines
$\Im F(\lambda)=const$ passing through
$\lambda=\lambda_3$ and $\lambda=\lambda_1$, respectively.
Introduce the auxiliary functions $\hat X_k(\lambda)$,  $k=\pm2$ 
\begin{equation}
\label{hat_Xpm2_def}
\begin{split}
\hat X_k(\lambda)&=I
+\frac{s_kB_k}{2\pi i}
\int_{\gamma_k}
\frac{e^{-2F(z)}}{z-\lambda}\,dz\,,\quad
k={\pm2},
\\
B_2&:=B_2(\lambda_3),\quad
B_{-2}:=
\hat X_{2}(\lambda_1)
B_{-2}(\lambda_1)
\hat X_{2}^{-1}(\lambda_1)
\end{split}
\end{equation}
where $B_k(\lambda)$, $k=\pm 2$ have been defined in (\ref{tilde_S_pm2_def}).

We are looking for $X(\lambda)$ defined by (\ref{X_as_def}) and (\ref{X_jump})  in the form of the product
\begin{equation}\label{Y_def}
X(\lambda)=Y(\lambda)\hat X_{-2}(\lambda)\hat X_2(\lambda)
\end{equation}
with $\hat X_k(\lambda)$, $k=\pm2$ as in (\ref{hat_Xpm2_def}).
The correction function $Y(\lambda)$ is piece-wise holomorphic,
normalized at infinity to unity and, across $\gamma_{\pm2}$, 
satisfies the jump condition
\begin{equation}\label{Y_jump}
\begin{split}
Y_+(\lambda)&=
Y_-(\lambda)
G(\lambda),\quad
\lambda\in\gamma_{\pm2},
\\
G(\lambda)&=
(\hat X_{-2}(\lambda))_-
(\hat X_2(\lambda))_-
{\mathcal S}_k(\lambda)
(\hat X_{2}(\lambda))_+^{-1}
(\hat X_{-2}(\lambda))_+^{-1}, \;\;k=\pm2,
\end{split}
\end{equation}
where $(\hat X_{k}(\lambda))_{\pm}$ are the boundary values on the left and right of the contour $\gamma_k$.
Since the $B_k(\lambda)$, $k={\pm2}$, are conjugate to 
the nilpotent constant matrix $\tfrac{1}{2}(\sigma_3+i\sigma_2)$,
where $\sigma_2=-i\sigma_++i\sigma_-$,
the jump matrix $G(\lambda)$ satisfies the estimate,
\begin{equation}\label{Y_jumps_corr_estim}
\|G(\lambda)-I\|_{\lambda\in\gamma_{\pm2}}
\leq C_3
|s_{\pm2}|
\frac{|\lambda-\lambda_5|^{1/2}
|\lambda-\lambda_{3,1}|}{1+c_3|\lambda-\lambda_{3,1}|}
e^{-2|\scriptRe F(\lambda)|}
\end{equation}
for some positive constants $C_3$ and $c_3$.
This implies the $L^2$ estimate
\begin{equation}\label{Y_jumps_corr_L2_estim}
\|G-I\|_{L^2(\gamma_2\cup\gamma_{-2})}
\leq C_4 |x|^{-5/24}
(
|s_2|
e^{-2|\scriptRe F(\lambda_3)|}
+|s_{-2}|
e^{-2|\scriptRe F(\lambda_1)|}),
\end{equation}
for some $C_4>0$ 
and therefore the existence of $Y(\lambda)$ (and hence of $X(\lambda)$) 
as $(x,t)\in\omega_0$, cf.\ e.g.\ \cite{FIKN}. 
Furthermore, since the Cauchy operator is bounded
in $L^2(\gamma_2\cup\gamma_{-2})$,
the difference $Y-I$ admits the $L^2$-estimate similar to 
(\ref{Y_jumps_corr_L2_estim}). We also have the matrix norm estimate
\begin{equation}\label{Y_jumps_corr_matrix_norm_estim}
\Bigl\|
\int_{\gamma_2\cup\gamma_{-2}}(G(z)-I)\,dz
\Bigr\|\leq
C_5|x|^{-5/12}(
|s_2|
e^{-2|\scriptRe F(\lambda_3)|}
+|s_{-2}|
e^{-2|\scriptRe F(\lambda_1)|}),
\end{equation}
with some $C_5>0$, useful to estimate the contribution of $Y(\lambda)$
into the asymptotics of $X(\lambda)$ as $\lambda\to\infty$.

In leading order of $x$,  
the asymptotics (\ref{X_as_def})
of $X(\lambda)$ at infinity is computed with 
(\ref{Y_def}) and (\ref{Y_jumps_corr_matrix_norm_estim}) to be
\begin{multline}\label{X_as}
X(\lambda)=
I
+\lambda^{-1}
\bigl[
-(H_1-H^{(0)}_1)\sigma_+
+\tfrac{1}{2}(u-u^{(0)}+(H_1-H^{(0)}_1)^2)\sigma_3
%\\
%+\bigl(
%-\tfrac{1}{3}(H_0-H^{(0)}_0)
%+u(H_1-H^{(0)}_1)
%+\tfrac{1}{3}(H_1-H^{(0)}_1)^3
%\bigr)\sigma_-
\bigr]
\\
+{\mathcal O}(\lambda^{-1}\sigma_-)
+{\mathcal O}(\lambda^{-3/2})
=I
+\lambda^{-1}
\Bigl\{
-\frac{s_{-2}}{2\pi i}
\int_{\gamma_{-2}}
e^{-2F(z)}
\,dz\,\bigl(B_{-2}+{\mathcal O}(x^{-1/6})\bigr)
\\
-\frac{s_2}{2\pi i}
\int_{\gamma_2}
e^{-2F(z)}
\,dz\,\bigl(B_2+{\mathcal O}(x^{-1/6})\bigr)
\Bigr\}
+{\mathcal O}(\lambda^{-2}).
\end{multline}
By the definitions in (\ref{hat_Xpm2_def}) and (\ref{tilde_S_pm2_def}),
\begin{multline}\label{Bpm2_as}
B_{\pm2}=
\tfrac{1}{2}
(\lambda_{3,1}-\lambda_5)^{-1/2}\sigma_+
+\tfrac{1}{2}\sigma_3
-\tfrac{1}{2}(\lambda_{3,1}-\lambda_5)^{1/2}\sigma_-
\\
+{\mathcal O}(x^{-7/6}\sigma_{3})
+{\mathcal O}(x^{-1}\sigma_{-})
+{\mathcal O}(x^{-4/3}\sigma_+).
\end{multline}
From the $\sigma_+$-component of (\ref{X_as}), we find
\begin{equation}\label{H-H0_estim}
|H_1-H_1^{(0)}|\leq
C_6|x|^{-5/12}(
|s_2|
e^{-2|\scriptRe F(\lambda_3)|}
+|s_{-2}|
e^{-2|\scriptRe F(\lambda_1)|}).
\end{equation}
Using (\ref{Bpm2_as}) and (\ref{H-H0_estim}) in the $\sigma_3$-component 
of (\ref{X_as}), we obtain the leading order of $u(x,t)$ in terms of quadratures,
\begin{multline}\label{H_u_as}
% H_1-H^{(0)}_1
% =(\lambda_{3}-\lambda_5)^{-1/2}
% \frac{s_2}{4\pi i}
% \int_{\gamma_2}
% e^{-2(e^{-i\phi}x)^{7/6}g((e^{-i\phi}x)^{-1/3}z)}
% \,dz\,
% (1+{\mathcal O}(x^{-1/6}))
% \\
% +(\lambda_{1}-\lambda_5)^{-1/2}
% \frac{s_{-2}}{4\pi i}
% \int_{\gamma_{-2}}
% e^{-2(e^{-i\phi}x)^{7/6}g((e^{-i\phi}x)^{-1/3}z)}
% \,dz\,
% (1+{\mathcal O}(x^{-1/6})),
% \\
% \shoveleft
u-u^{(0)}=
-\tfrac{1}{2\pi i}\Bigl(
s_2
\int_{\gamma_2}
+s_{-2}
\int_{\gamma_{-2}}
\Bigr)
e^{-2F(z)}
\,dz\,
(1+{\mathcal O}(x^{-1/6})).
\end{multline}
Evaluating the asymptotics of the integrals via the classical
steepest descent method,
% Because
% \begin{multline*}
% \int_{\gamma_2}
% e^{-2(e^{-i\phi}x)^{7/6}g(\zeta)}
% \,dz\,=
% \\
% =-\sqrt{30\pi}(\zeta_3-\zeta_1)^{-1/2}(\zeta_3-\zeta_5)^{-1/4}
% (e^{-i\phi}x)^{-1/4}
% e^{-2(e^{-i\phi}x)^{7/6}g(\zeta_3)}
% \times
% \\
% \times
% (1+{\mathcal O}((e^{-i\phi}x)^{-7/12})),
% \\
% \shoveleft
% \int_{\gamma_{-2}}e^{-2(e^{-i\phi}x)^{7/6}g(\zeta)}\,dz\,=
% \\
% =-\sqrt{30\pi}(\zeta_1-\zeta_3)^{-1/2}(\zeta_1-\zeta_5)^{-1/4}
% (e^{-i\phi}x)^{-1/4}
% e^{-2(e^{-i\phi}x)^{7/6}g(\zeta_1)}
% \times
% \\
% \times
% (1+{\mathcal O}((e^{-i\phi}x)^{-7/12})),
% \end{multline*}
% where, as $t_0=0$ and $\phi=0$, we assume that
% $\arg(\zeta_3-\zeta_1)^{1/2}=\tfrac{\pi}{2}$
% and
% $\arg(\zeta_1-\zeta_3)^{1/2}=-\tfrac{\pi}{2}$,
we get (\ref{u_as_eval}).
% \begin{multline}\label{u2_as_eval}
% u=u^{(0)}
% \\
% -is_2\tfrac{\sqrt{30}}{2\sqrt{\pi}}
% (\zeta_3-\zeta_1)^{-1/2}(\zeta_3-\zeta_5)^{-1/4}
% (e^{-i\phi}x)^{-1/4}
% e^{-2(e^{-i\phi}x)^{7/6}g(\zeta_3)}
% (1+{\mathcal O}(x^{-1/6}))
% \\
% -is_{-2}\tfrac{\sqrt{30}}{2\sqrt{\pi}}
% (\zeta_1-\zeta_3)^{-1/2}(\zeta_1-\zeta_5)^{-1/4}
% (e^{-i\phi}x)^{-1/4}
% e^{-2(e^{-i\phi}x)^{7/6}g(\zeta_1)}
% (1+{\mathcal O}(x^{-1/6})).
% \end{multline}
\end{proof}

\subsubsection{Perturbation $s_3,s_{-3}\neq0$ of the RH 
problem~\ref{initial_RHP} satisfying (\ref{spm3_spm2=0})}

\begin{thm}
For the Stokes multipliers
\begin{equation}\label{spm2=0}
s_{-2}=s_2=0,\quad
s_0=s_1+s_3=s_{-1}+s_{-3}=-i,
\end{equation}
and $(x,t)\in\hat\omega_0$, the RH problem~\ref{initial_RHP}
is solvable. Assuming that $(x,t)\in\hat\omega_0$ are such that
$\lambda_{1,3}\neq\lambda_5$ and $\lambda_1\neq\lambda_3$, 
the large $x$ asymptotics of the corresponding solution 
$\hat u(x,t)$ of equation {\em\PItwo} is as follows,
\begin{multline}\label{u3_as_eval}
\hat u=
\hat u^{(0)}
-is_3\tfrac{1}{2\sqrt{\pi}}
(-F''(\lambda_1))^{-1/2}
e^{2F(\lambda_1)}
(1+{\mathcal O}(x^{-1/6}))
\\
-is_{-3}\tfrac{1}{2\sqrt{\pi}}
(-F''(\lambda_3))^{-1/2}
e^{2F(\lambda_3)}
(1+{\mathcal O}(x^{-1/6})),
\end{multline}
where $\hat u^{(0)}\simeq\sqrt[3]{6}(e^{-i3\pi}x)^{1/3}$
is the solution of {\em\PItwo\ } (\ref{u_s32=0_as})
corresponding to the Stokes multipliers
$s_{-3}=s_{-2}=s_2=s_3=0$,
$s_{-1}=s_0=s_1=-i$.
The function $F(\lambda)$ is defined in (\ref{F_def}).
The branches of $F(\lambda_{1,3})$ and
$(-F''(\lambda_{1,3}))^{-1/2}$ are fixed by their
values at $t=0$,
\begin{multline*}
F(\lambda_{1,3})\bigr|_{t=0}=
\tfrac{1}{2}
h_{\mp}
\tfrac{6}{7}
x^{7/6},
\\
\shoveleft{
(-F''(\lambda_{1,3}))^{-1/2}\bigr|_{t=0}=
-2\sqrt{\pi}
A_{\mp}
x^{-1/4},
}\hfill
\end{multline*}
where the upper (resp., lower) subscript on the right hand side
corresponds to $\lambda_1$ (resp., to $\lambda_3$) on the left
hand side, and the constants $h_{\pm}$ and $A_{\pm}$ are 
defined in (\ref{h_pm_def}).
\end{thm}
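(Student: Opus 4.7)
The plan is to mirror the proof of Theorem~\ref{s2_perturb_solvability} almost verbatim, exchanging the role of the lower-triangular Stokes factors $S_{\pm 2}$ (perturbations on $\gamma_{\pm 2}$) with that of the upper-triangular factors $S_{\pm 3}=I+s_{\pm 3}\sigma_+$ on $\gamma_{\pm 3}$, and exchanging the base RH problem from Theorem~\ref{phi=0_red_RHP12_solvability} for the one constructed in Theorem~\ref{phi=3pi_red_RHP32_solvability}. First I would use the hypothesis $s_{\pm 2}=0$ together with the cyclic relations (\ref{scalar_cyclic_rel}) to transform the jump graph so that all nontrivial Stokes factors outside $\{-i\sigma_-,-i\sigma_+\}$ live on $\gamma_{\pm 3}$ and on a short arc adjacent to $\lambda=0$, in direct analogy with Figure~\ref{fig8}.

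Next, I would write
\begin{equation*}
\Psi(\lambda)=\bigl(I-(H_1-\hat H_1^{(0)})\sigma_-\bigr)\hat X(\lambda)\hat\Psi^{(0)}(\lambda),
\end{equation*}
where $\hat\Psi^{(0)}$ is the 0-parameter solution from Theorem~\ref{phi=3pi_red_RHP32_solvability} and $\hat H_1^{(0)}$ is (\ref{H1}) evaluated at $\hat u^{(0)}$. Then $\hat X(\lambda)$ is piecewise holomorphic with jumps only across $\gamma_{\pm 3}$, namely
\begin{equation*}
\hat{\mathcal S}_{\pm 3}(\lambda)=\hat\Psi^{(0)}(\lambda)(I+s_{\pm 3}\sigma_+)(\hat\Psi^{(0)}(\lambda))^{-1}=I+s_{\pm 3}\,e^{2F(\lambda)}\hat B_{\pm 3}(\lambda),
\end{equation*}
with $\hat B_{\pm 3}(\lambda)$ a conjugate of the nilpotent matrix $\tfrac12(\sigma_3-i\sigma_2)$ by the bounded factor appearing in (\ref{chi_def})--(\ref{tilde_Psi_def}). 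The crucial sign change is that the exponent is now $+2F$ rather than $-2F$; this is compensated by the defining condition of $\hat\omega_0$, namely $\Re F(\lambda_{1,3})\le 0$, so that $e^{2F}$ is indeed uniformly small on the relevant contours. I would choose $\gamma_{\pm 3}$ as the level lines $\Im F(\lambda)=\Im F(\lambda_{1,3})$ through the double branch points.

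With the contours fixed, I would parallel (\ref{hat_Xpm2_def})--(\ref{Y_def}) and write $\hat X=\hat Y\hat X_{-3}\hat X_3$, where $\hat X_{\pm 3}$ are scalar Cauchy transforms of $e^{2F(z)}$ times constant matrices built from $\hat B_{\pm 3}$ at $\lambda=\lambda_1,\lambda_3$ respectively, and $\hat Y$ is the residual piecewise holomorphic correction. Bounds identical in form to (\ref{Y_jumps_corr_L2_estim}) and (\ref{Y_jumps_corr_matrix_norm_estim}) give solvability throughout $\hat\omega_0$, including its boundary (away from the degenerations of Remark~\ref{multiple_branch}). Extracting the $\sigma_3$-coefficient at $\lambda=\infty$ in the asymptotic expansion of $\hat X$ yields $\hat u-\hat u^{(0)}$ as a sum of two Gaussian-type integrals along $\gamma_{\pm 3}$.

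The only remaining step is a classical Laplace/steepest descent evaluation of these integrals near $\lambda_1$ and $\lambda_3$, where $F$ has a simple critical point because each is a double branch point of $\mu(\lambda)$ in (\ref{2+2_deg}); this produces the prefactor $(-F''(\lambda_{1,3}))^{-1/2}$ (the minus sign reflecting that $F$ has a local maximum of $\Re F$ there, as required by $\Re F(\lambda_{1,3})\le 0$ and the descent being along $\gamma_{\pm 3}$). The main obstacle, and the only place where I expect care is needed, is bookkeeping of branches: one must verify that the prescribed contour orientations, the choice of square root $(\lambda-\lambda_5)^{1/4}$ from (\ref{tilde_Psi_def}), and the replacement $\sigma_-\mapsto\sigma_+$ together convert the prefactor $\pm i\cdot 2\sqrt{\pi}A_{\mp}x^{-1/4}$ of Theorem~\ref{s2_perturb_solvability} into $-2\sqrt{\pi}A_{\mp}x^{-1/4}$, and that the association $s_3\leftrightarrow\lambda_1$, $s_{-3}\leftrightarrow\lambda_3$ (the opposite of what one has for $s_{\pm 2}$) is forced by the location of $\gamma_{\pm 3}$ relative to the branch cut. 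Once these sign and branch checks are made, formula (\ref{u3_as_eval}) follows directly.
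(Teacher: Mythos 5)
Your proposal follows essentially the same route as the paper's own proof: the same substitution $\Psi(\lambda)=\bigl(I-(H_1-H^{(0)}_1)\sigma_-\bigr)X(\lambda)\Psi^{(0)}(\lambda)$ around the 0-parameter solution of Theorem~\ref{phi=3pi_red_RHP32_solvability}, the same auxiliary Cauchy-transform factorization $X=Y\hat X_{-3}\hat X_3$ along the level lines of $\Im F$ through $\lambda_1$ and $\lambda_3$, the same $L^2$ and matrix-norm estimates (made uniform on $\hat\omega_0$ precisely because $\Re F(\lambda_{1,3})\le 0$ there, as you note), and the final quadrature plus classical steepest-descent evaluation. You also correctly anticipate the two delicate points --- the nilpotent structure $\tfrac12(\sigma_3-i\sigma_2)$ of the conjugated jump and the reversed association $\gamma_3\leftrightarrow\lambda_1$, $\gamma_{-3}\leftrightarrow\lambda_3$ --- so nothing essential is missing.
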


\begin{proof}
Observe first of all that the jump graph for the 
RH problem~\ref{initial_RHP} under the condition
(\ref{spm2=0}) can be transformed to the one shown in 
Figure~\ref{fig9}.
%%%%%%%%%%%%%%%%%%%%%%%%%%%%
\begin{figure}[htb]
\begin{center}
\mbox{\epsfig{figure=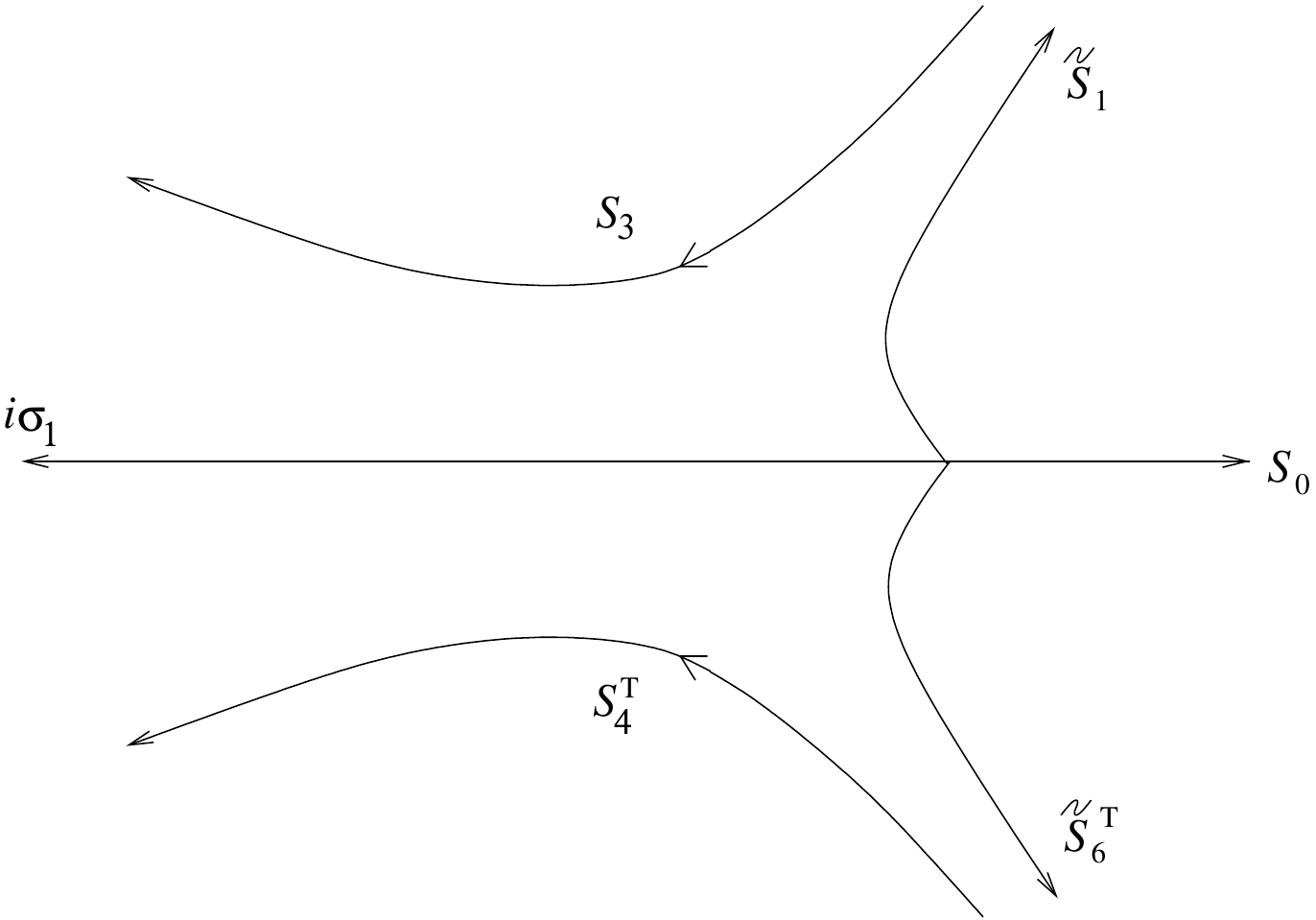,width=0.6\textwidth}}
\end{center}
\caption{The jump contour for the RH problem corresponding to
the degeneration $s_2=s_{-2}=0$. Here, 
$\tilde S_1=\tilde S_{-1}=I-i\sigma_+$,
$S_0=I-i\sigma_-$, $S_3=I+s_3\sigma_+$, $S_{-3}=I+s_{-3}\sigma_+$.}
\label{fig9}
\end{figure}
%%%%%%%%%%%%%%%%%%%%%%%%%%%%
Look for the solution of the RH problem in the form 
of the product (\ref{Phi_spm1_or_spm2=0}). The correction 
function $X(\lambda)$ has the asymptotics at infinity as 
in (\ref{X_as_def}). However in contrast to (\ref{X_jump}),
it is discontinuous across the jump graph 
$\gamma_3\cup\gamma_{-3}$
shown in Figure~\ref{fig11}, and the jumps are described by 
the formulas
\begin{multline}\label{X3_jump}
X_+(\lambda)=X_-(\lambda){\mathcal S}_k(\lambda),\quad
\lambda\in\gamma_k,
\\
\shoveleft{
{\mathcal S}_k(\lambda):=
\Psi^{(0)}_-(\lambda)S_k(\Psi_-^{(0)}(\lambda))^{-1},\quad
k=\pm3.
}\hfill
\end{multline}
%%%%%%%%%%%%%%%%%%%%%%%%%%%
\begin{figure}[htb]
\begin{center}
\mbox{\epsfig{figure=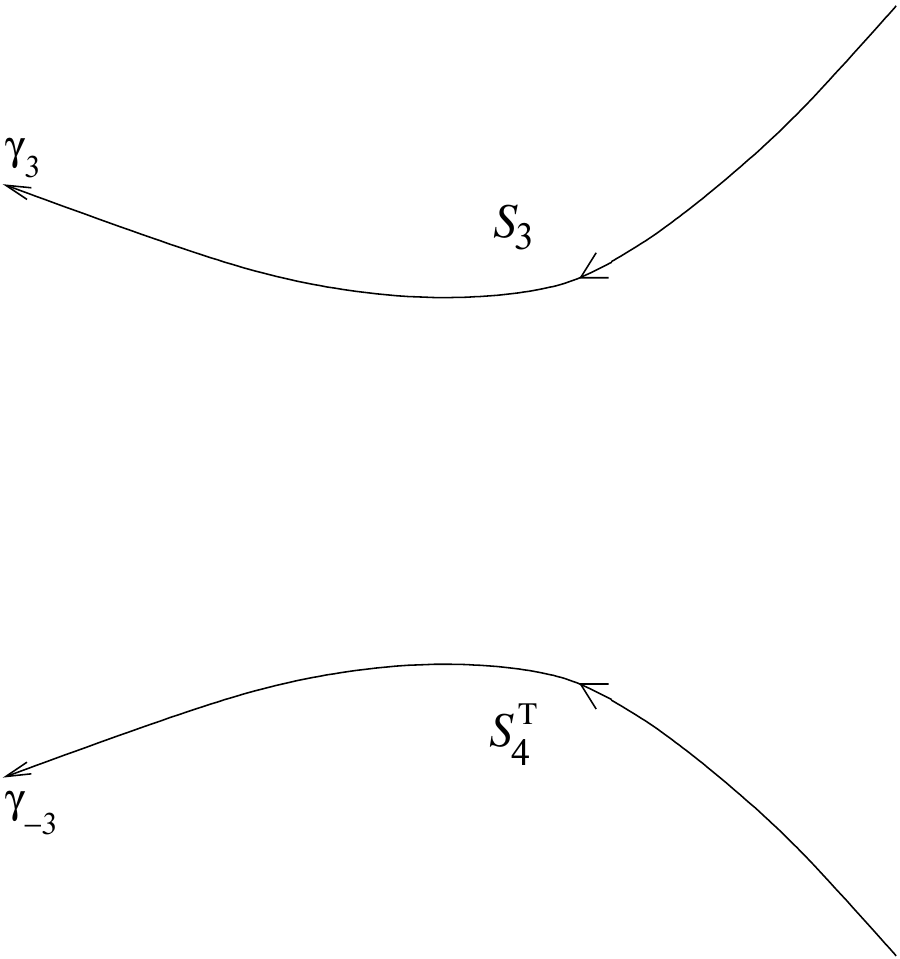,width=0.3\textwidth}}
\end{center}
\caption{The jump contour for the correction RH problem as 
$s_{-2}=s_2=0$.}
\label{fig11}
\end{figure}
%%%%%%%%%%%%%%%%%%%%%%%%%%%
To construct $X(\lambda)$, find the asymptotics of the jump 
matrix in (\ref{X3_jump}).
Using (\ref{chi_def}), (\ref{tilde_Psi_def}) for 
$|\lambda-\lambda_5|>R|x|^{1/3}$ and (\ref{chi_as_computed}), we find
\begin{multline}\nonumber
{\mathcal S}_k(\lambda):=
\Psi^{(0)}_-(\lambda)S_k(\Psi_-^{(0)}(\lambda))^{-1}=
I+s_k\Psi^{(0)}_-(\lambda)
\sigma_+
(\Psi_-^{(0)}(\lambda))^{-1}=
\\
=I
+s_k
e^{2F(\lambda)}
B_k(\lambda),
\\
B_k(\lambda)
=(I-\tilde H_1\sigma_-)
\chi(\lambda)
(\lambda-\lambda_5)^{-\frac{1}{4}\sigma_3}
\tfrac{1}{2}
(\sigma_3-\sigma_++\sigma_-)
\times
\\
\times
(\lambda-\lambda_5)^{\frac{1}{4}\sigma_3}
\chi^{-1}(\lambda)
(I+\tilde H_1\sigma_-),\quad
k={\pm3}.
\end{multline}
Let $\gamma_3$ and $\gamma_{-3}$ be
the level lines $\Im F(\lambda)=const$ passing through 
$\lambda=\lambda_1$ and $\lambda=\lambda_3$, respectively. 
Introduce the auxiliary function $\hat X_3(\lambda)$
and $\hat X_{-3}(\lambda)$,
\begin{multline}\nonumber
\hat X_k(\lambda)=I
+\frac{s_k}{2\pi i}
\int_{\gamma_k}
\frac{e^{2F(z)}}{z-\lambda}\,dz\,B_k,\quad
k={\pm3},
\\
B_3:=B_3(\lambda_3),\quad
B_{-3}:=
\hat X_3(\lambda_1)
B_{-3}(\lambda_1)\hat X_3^{-1}(\lambda_1).
\end{multline}
We are looking for $X(\lambda)$ in the form (\ref{Y_def}),
\begin{equation}\nonumber
X(\lambda)=Y(\lambda)\hat X_{-3}(\lambda)\hat X_3(\lambda).
\end{equation}
$Y(\lambda)$ is piece-wise holomorphic,
normalized at infinity to unity and, across $\gamma_{\pm3}$,
satisfying the jump condition as (\ref{Y_jump}),
\begin{multline}\nonumber
Y_+(\lambda)=
Y_-(\lambda)
G(\lambda),\quad
\lambda\in\gamma_k,\quad
k={\pm3},
\\
G(\lambda)=
(\hat X_{-3}(\lambda))_-
(\hat X_3(\lambda))_-
{\mathcal S}_k(\lambda)
(\hat X_3(\lambda))_+^{-1}
(\hat X_{-3}(\lambda))_+^{-1}.
\end{multline}
Similarly to (\ref{Y_jumps_corr_estim}),
the jump matrix $G(\zeta)$ satisfies
\begin{multline}\nonumber
\|G(\lambda)-I\|_{\gamma_k}
\leq C_7
|s_k|
\frac{|\lambda-\lambda_5|^{1/2}
|\lambda-\lambda_{j_k}|}{1+c_7|\lambda-\lambda_{j_k}|}
e^{-2|\scriptRe F(\lambda)|},\quad
\\
k=\pm3,\quad
j_3=1,\quad
j_{-3}=3.
\end{multline}
This implies the $L^2$-estimate
\begin{equation}\label{Y3_jumps_corr_L2_estim}
\|G-I\|_{L^2(\gamma_3\cup\gamma_{-3})}
\leq C_8
|x|^{-5/24}(
|s_3|e^{-2|\scriptRe F(\lambda_1)|}
+|s_{-3}|e^{-2|\scriptRe F(\lambda_3)|}),
\end{equation}
and thus the existence of $Y(\lambda)$ and 
of $X(\lambda)$ as $(x,t)\in\hat\omega_0$.
We also have the matrix norm estimate
\begin{equation}\nonumber
\Bigl\|
\int_{\gamma_3\cup\gamma_{-3}}
(G(z)-I)\,dz
\Bigr\|\leq
C_9|x|^{-5/12}(
|s_3|e^{-2|\scriptRe F(\lambda_1)|}
+|s_{-3}|e^{-2|\scriptRe F(\lambda_3)|}).
\end{equation}
In the leading order in $x$,  the asymptotics of $X(\lambda)$ 
as $\lambda\to\infty$ is computed with (\ref{Y_def}) and
(\ref{Y3_jumps_corr_L2_estim}) to be (cf.\ (\ref{X_as}))
\begin{multline}\nonumber
X(\zeta)=
I
+\lambda^{-1}\Bigl\{
-\frac{s_3}{2\pi i}
\int_{\gamma_3}
e^{2F(z)}\,dz\,(B_3+{\mathcal O}(x^{-1/6}))
\\
-\frac{s_{-3}}{2\pi i}
\int_{\gamma_{-3}}
e^{2F(z)}\,dz\,(B_{-3}+{\mathcal O}(x^{-1/6}))
\Bigr\}
+{\mathcal O}(\lambda^{-2}),
\\
B_{\pm3}=
-\tfrac{1}{2}(\lambda_{1,3}-\lambda_5)^{-1/2}\sigma_+
+\tfrac{1}{2}\sigma_3
+\tfrac{1}{2}(\lambda_{1,3}-\lambda_5)^{1/2}\sigma_-
\\
+{\mathcal O}(x^{-7/6}\sigma_{3})
+{\mathcal O}(x^{-1}\sigma_{-})
+{\mathcal O}(x^{-4/3}\sigma_+).
\end{multline}
This implies the quadrature formula for the leading order term
(cf.\ (\ref{H_u_as})),
\begin{multline}\nonumber
% H_1-H^{(0)}_1=
% -(\lambda_{1}-\lambda_5)^{-1/2}
% \frac{s_3}{4\pi i}
% \int_{\gamma_3}
% e^{2(e^{-i\phi}x)^{7/6}g((e^{-i\phi}x)^{-1/3}z)}\,dz\,
% (1+{\mathcal O}(x^{-1/6}))
% \\
% -(\lambda_{3}-\lambda_5)^{-1/2}
% \frac{s_{-3}}{4\pi i}
% \int_{\gamma_{-3}}
% e^{2(e^{-i\phi}x)^{7/6}g((e^{-i\phi}x)^{-1/3}z)}\,dz\,
% (1+{\mathcal O}(x^{-1/6})),
% \\
u-u^{(0)}=
-\tfrac{1}{2\pi i}
\Bigl(
s_3
\int_{\gamma_3}
+s_{-3}
\int_{\gamma_{-3}}
\Bigr)
e^{2F(z)}\,dz\,(1+{\mathcal O}(x^{-1/6}))
.
\end{multline}
Then formula (\ref{u3_as_eval}) follows using 
 classical steepest descent analysis.
\end{proof}

\section{Symmetries, tronqu\'ee solutions and
the quasi-linear Stokes phenomenon}
\label{um_proliferation}

\subsection{Rotational symmetry and families of 
degenerated solutions}

Using (\ref{sk_rot_symm}), we find that if $u=f(x,t,\{s_k\})$ 
is a solution of \PItwo\ corresponding to the Stokes multipliers
$\{s_k\}$, then 
\begin{equation}\label{u_rot_symm}
\tilde u(\tilde x,\tilde t)=
e^{-i\frac{4\pi}{7}n}
f(e^{-i\frac{2\pi}{7}n}\tilde x,
e^{-i\frac{6\pi}{7}n}\tilde t,
\{s_{k-2n}\}),\quad
\tilde x=e^{i\frac{2\pi}{7}n}x,\quad
\tilde t=e^{i\frac{6\pi}{7}n}t,\quad
%\tilde u=e^{-i\frac{4\pi}{7}n}u,
\end{equation}
is another solution of \PItwo\ corresponding
to the cyclically permuted multipliers $\{s_{k-2n}\}$.

First consider the case $t=0$ when the domains 
$\omega_0$ and $\hat\omega_0$ reduce to
the sectors (\ref{Re_int>=0_around_phi=0}) and 
(\ref{Re_int<=0_around_phi=3pi}), respectively.

Denoting the solution with the asymptotics (\ref{u_as_eval}) 
and (\ref{u3_as_eval}) at $t=0$ as $u_0(x)$ and $\hat u_0(x)$,
respectively, and applying to them the symmetry transformation (\ref{u_rot_symm})
with $n=3m$, $m\in{\mathbb Z}$, we find solutions $u_m(x)$ and 
$\hat u_m(x)$.
Solutions $u_m(x)$ correspond to the Stokes multipliers
\begin{equation}\label{um_Stokes}
s_{m-1}=s_{m+1}=0,\quad
s_{m-3}=s_{m+3}=-i,\quad
s_{m-2}+s_{m}+s_{m+2}=-i,
\end{equation}
and have the large $x$ asymptotics
\begin{multline}\label{um_rot_symm}
u_m(x)=
u^{(m)}(x)
-s_{m+2}
A_+
i^m
x^{-1/4}
e^{(-1)^{m+1}\frac{6}{7}x^{7/6}h_+}
(1+{\mathcal O}(x^{-1/6}))
\\
+s_{m-2}
A_-
i^m
x^{-1/4}
e^{(-1)^{m+1}\frac{6}{7}x^{7/6}h_-}
(1+{\mathcal O}(x^{-1/6})),
\\
x\to\infty,\quad
\arg x\in[
-\alpha_0+\tfrac{6\pi}{7}m,\alpha_0+\tfrac{6\pi}{7}m],\quad
\alpha_0=
\tfrac{3\pi}{7}-\tfrac{3}{7}\arctan\tfrac{1}{\sqrt5},
\end{multline}
where $u^{(m)}(x)=e^{-i\frac{12\pi}{7}m}
u^{(0)}(e^{-i\frac{6\pi}{7}m}x)
\simeq
-\sqrt[3]{6}\,x^{1/3}$
is the solution of \PItwo\ corresponding to 
$s_{m+1}=s_{m-1}=s_{m+2}=s_{m-2}=0$,
$s_m=s_{m+3}=s_{m-3}=-i$, and where the constants $h_{\sigma}$
and $A_{\sigma}$, $\sigma\in\{+,-\}$,
are defined in (\ref{h_pm_def}).

Respectively, the solutions $\hat u_m(x)$ correspond to 
the multipliers
\begin{equation}\label{hat_um_Stokes}
s_{m-2}=s_{m+2}=0,\quad
s_{m}=s_{m-1}+s_{m-3}=
s_{m+1}+s_{m+3}=-i,
\end{equation}
and have the asymptotics
\begin{multline}\label{hat_um_rot_symm}
\hat u_m(x)=
\hat u^{(m)}(x)
+s_{m+3}
A_-
i^{m+1}
x^{-1/4}
e^{(-1)^m\frac{6}{7}x^{7/6}h_-}
(1+{\mathcal O}(x^{-1/6}))
\\
+s_{m-3}
A_+
i^{m}
x^{-1/4}
e^{(-1)^{m+1}\frac{6}{7}x^{7/6}h_+}
(1+{\mathcal O}(x^{-1/6})),
\\
x\to\infty,\quad
\arg x\in\bigl[
3\pi-\beta_0
+\tfrac{6\pi}{7}m,
3\pi+\beta_0
+\tfrac{6\pi}{7}m\bigr],\quad
\beta_0=\tfrac{3}{7}\arctan\tfrac{1}{\sqrt5},
\end{multline}
where $\hat u^{(m)}(x)=
e^{-i\frac{12\pi}{7}m}
\hat u^{(0)}(e^{-i\frac{6\pi}{7}m}x)
\simeq-\sqrt[3]{6}\,x^{1/3}$
is the solution of \PItwo\ for the Stokes
multipliers $s_{m+2}=s_{m-2}=s_{m+3}=s_{m-3}=0$,
$s_{m}=s_{m+1}=s_{m-1}=-i$.

\subsection{2-parameter degenerated solutions as $t\neq0$}

The extension of the asymptotics (\ref{um_rot_symm}) and
(\ref{hat_um_rot_symm}) to $t\neq0$ and $x\to\infty$ is 
straightforward. Observing that the variables 
$t$ and $x$ can be expressed using (\ref{t_x_from_lambda_k}) 
in terms of the branch points of the model algebraic curve,
we can write $u(x,t)=g(\{\lambda_j\},\{s_k\})$ and
recast (\ref{u_rot_symm}) into the form
\begin{equation}\label{u_lambda_k_rot_symm}
\tilde u(\tilde x,\tilde t)=
e^{-i\frac{4\pi}{7}n}
g(\{e^{i\frac{4\pi}{7}n}\tilde\lambda_j\},\{s_{k-2n}\}),\quad
\tilde\lambda_j=e^{-i\frac{4\pi}{7}n}\lambda_j.
\end{equation}

Denoting the solution with the asymptotics (\ref{u_as_eval}) 
and (\ref{u3_as_eval}) as $u_0(x,t)$ where $(x,t)\in\omega_0$,
and $\hat u_0(x,t)$ with $(x,t)\in\hat\omega_0$,
respectively, and applying to them the symmetry transformation
(\ref{u_lambda_k_rot_symm}) with $n=3m$, $m\in{\mathbb Z}$,
we find solutions $u_m(x,t)$ and $\hat u_m(x,t)$ with
$(x,t)\in\omega_m$ and $(x,t)\in\hat\omega_m$, respectively.
The latter sectors are defined as the images of $\omega_0$ and 
$\hat\omega_0$ under rotation, see (\ref{u_rot_symm}): 
\begin{defn}\label{omega_m_def}
$(x,t)\in\omega_m$ iff
$(e^{-i\frac{6\pi}{7}m}x,
e^{-i\frac{18\pi}{7}m}t)\in\omega_0$.
The sectors $\hat\omega_m$ are defined similarly.
\end{defn}

Solutions $u_m(x,t)$ corresponding to the Stokes multipliers
(\ref{um_Stokes}) have the asymptotics
\begin{multline}\label{um_as_eval}
u_m(x,t)=
u^{(m)}(x,t)
-(-1)^m
\frac{is_{m+2}}{2\sqrt{\pi}}
(F''(\lambda_3))^{-1/2}
e^{-2F(\lambda_3)}
(1+{\mathcal O}(x^{-1/6}))
\\
-(-1)^m
\frac{is_{m-2}}{2\sqrt{\pi}}
(F''(\lambda_1))^{-1/2}
e^{-2F(\lambda_1)}
(1+{\mathcal O}(x^{-1/6})),
\\
x\to\infty,\quad
(x,t)\in\omega_m,
\end{multline}
where
$u^{(m)}(x,t)=
e^{-i\frac{12\pi}{7}m}
u^{(0)}(e^{-i\frac{6\pi}{7}m}x,e^{-i\frac{18\pi}{7}m}t)$.

The solutions $\hat u_m(x,t)$ for the
multipliers (\ref{hat_um_Stokes}) have the asymptotics
\begin{multline}\label{hat_um_as_eval}
\hat u_m(x,t)=
\hat u^{(m)}(x,t)
-(-1)^m
\frac{is_{m+3}}{2\sqrt{\pi}}
(-F''(\lambda_1))^{-1/2}
e^{2F(\lambda_1)}
(1+{\mathcal O}(x^{-1/6}))
\\
-(-1)^m
\frac{is_{m-3}}{2\sqrt{\pi}}
(-F''(\lambda_3))^{-1/2}
e^{2F(\lambda_3)}
(1+{\mathcal O}(x^{-1/6})),
\\
x\to\infty,\quad
(x,t)\in\hat\omega_m,
\end{multline}
where 
$\hat u^{(m)}(x,t)=
e^{-i\frac{12\pi}{7}m}
\hat u^{(0)}(e^{-i\frac{6\pi}{7}m}x,e^{-i\frac{18\pi}{7}m}t)$.

The domains $\omega_m$ and $\hat\omega_m$
in (\ref{um_as_eval}) and (\ref{hat_um_as_eval})
at $t=0$ become
\begin{multline}\nonumber
\omega_m=\bigl\{
x\in{\mathbb C}\colon\quad
\arg x\in[
-\alpha_0+\tfrac{6\pi}{7}m,\alpha_0+\tfrac{6\pi}{7}m],\quad
|x|>\rho_0
\bigr\},
\\
\shoveleft{
\hat\omega_m=\bigl\{
x\in{\mathbb C}\colon\quad
\arg x\in[
3\pi-\beta_0+\tfrac{6\pi}{7}m,3\pi+\beta_0+\tfrac{6\pi}{7}m],\quad
|x|>\rho_0
\bigr\},
}\hfill
\\
\alpha_0=
\tfrac{3\pi}{7}-\tfrac{3}{7}\arctan\tfrac{1}{\sqrt5},\quad
\beta_0=
\tfrac{3}{7}\arctan\tfrac{1}{\sqrt5}.
\end{multline}
The points $\lambda_j$, $j=1,3,5$, satisfy the conditions
(\ref{lambda13_eq_sol}), (\ref{lambda5_eq}). 
$F(\lambda)$ is defined in (\ref{F_def}). In (\ref{um_as_eval}) 
and (\ref{hat_um_as_eval}), the branches of $F(\lambda_{1,3})$,
$(F''(\lambda_{1,3}))^{-1/2}$ and $(-F''(\lambda_{1,3}))^{-1/2}$ 
are chosen as the branches with the asymptotics (\ref{um_rot_symm}) 
and (\ref{hat_um_rot_symm}) at $t=0$.

\subsection{``Bitronqu\'ee'' solutions}

The domains $\omega_m$ and $\hat\omega_{m-3}$ as well as
$\hat\omega_{m-3}$ and $\omega_{m+1}$ are adjacent 
at $t=0$, 
\begin{multline*}
\omega_m\cap\hat\omega_{m-3}=\bigl\{
x\in{\mathbb C}\colon\quad
\arg x=\tfrac{3\pi}{7}-\beta_0+\tfrac{6\pi}{7}m,\quad
|x|>\rho_0
\bigr\},
\\
\shoveleft{
\hat\omega_{m-3}\cap\omega_{m+1}=\bigl\{
x\in{\mathbb C}\colon\quad
\arg x=\tfrac{3\pi}{7}+\beta_0+\tfrac{6\pi}{7}m,\quad
|x|>\rho_0
\bigr\},
}\hfill
\end{multline*}
and remain adjacent for arbitrary $t$.

It is convenient to interpret solutions $u_m(x)$ and 
$\hat u_{m-3}(x)$ as the solution families parameterized
by $s_{m\pm2}$ and $s_{m\pm3}$, respectively. Intersections 
of these families yield 1-parameter families corresponding
to the Stokes multipliers
\begin{equation}\label{um_hat_u_m-3_Stokes}
s_{m+2}=s_{m-1}=s_{m+1}=0,\quad
s_{m-3}=s_{m+3}=-i,\quad
s_{m-2}+s_{m}=-i.
\end{equation}
The relevant large $x$ asymptotics is as follows,
\begin{multline}\label{um_hat_u_m-3_as}
u_m(x)=\hat u_{m-3}(x)=
\\
=\begin{cases}
u^{(m)}(x,t)
-(-1)^m
\frac{is_{m-2}}{2\sqrt{\pi}}
(F''(\lambda_1))^{-\frac{1}{2}}
e^{-2F(\lambda_1)}
(1+{\mathcal O}(x^{-1/6})),
\\
\hat u^{(m-3)}(x,t)
+(-1)^m
\frac{is_{m}}{2\sqrt{\pi}}
(-F''(\lambda_1))^{-\frac{1}{2}}
e^{2F(\lambda_1)}
(1+{\mathcal O}(x^{-1/6})),
\end{cases}
\\
\hfill
(x,t)\in\omega_m\cup\hat\omega_{m-3},
\\
\mathop{\mapsto}_{t=0}
\begin{cases}
u^{(m)}(x)
+s_{m-2}
A_-
i^m
x^{-\frac{1}{4}}
e^{(-1)^{m+1}\frac{6}{7}x^{7/6}h_-}
(1+{\mathcal O}(x^{-1/6})),
\\
\hat u^{(m-3)}(x)
-s_{m}
A_-
i^{m}
x^{-\frac{1}{4}}
e^{(-1)^{m+1}\frac{6}{7}x^{7/6}h_-}
(1+{\mathcal O}(x^{-1/6})),
\end{cases}
\\
\hfill
x\in\omega_m\cup\hat\omega_{m-3}.
\end{multline}

Similarly, the intersection of the families $\hat u_{m-3}(x)$
and $u_{m+1}(x)$ is the 1-parameter family corresponding to the 
Stokes multipliers
\begin{equation}\label{u_m+1_hat_u_m-3_Stokes}
s_{m}=s_{m+2}=s_{m-1}=0,\quad
s_{m-2}=s_{m-3}=s_{m+1}+s_{m+3}=-i,
\end{equation}
with the asymptotics as $x\to\infty$,
$x\in\hat\omega_{m-3}\cup\omega_{m+1}$,
\begin{multline}\label{hat_u_m-3_u_m+1_as}
\hat u_{m-3}(x)=u_{m+1}(x)=
\\
=
\begin{cases}
\hat u^{(m-3)}(x,t)
+(-1)^m
\frac{is_{m+1}}{2\sqrt{\pi}}
(-F''(\lambda_3))^{-1/2}
e^{2F(\lambda_3)}
(1+{\mathcal O}(x^{-1/6})),
\\
u^{(m+1)}(x,t)
+(-1)^m
\frac{is_{m+3}}{2\sqrt{\pi}}
(F''(\lambda_3))^{-1/2}
e^{-2F(\lambda_3)}
(1+{\mathcal O}(x^{-1/6})),
\end{cases}
\\
\hfill
(x,t)\in\hat\omega_{m-3}\cup\omega_{m+1},
\\
\mathop{\mapsto}_{t=0}
\begin{cases}
\hat u^{(m-3)}(x)
+s_{m+1}
A_+
i^{m+1}
x^{-\frac{1}{4}}
e^{(-1)^{m}\frac{6}{7}x^{7/6}h_+}
(1+{\mathcal O}(x^{-1/6})),
\\
u^{(m+1)}(x)
-s_{m+3}
A_+
i^{m+1}
x^{-\frac{1}{4}}
e^{(-1)^{m}\frac{6}{7}x^{7/6}h_+}
(1+{\mathcal O}(x^{-1/6})),
\end{cases}
\\
\hfill
x\in\hat\omega_{m-3}\cup\omega_{m+1}.
\end{multline}
\begin{rem}
Observe the different choice of the branches of $F(\lambda_{1,3})$
and $(F''(\lambda_{1,3}))^{-1/2}$ in the adjacent domains.
\end{rem}

\subsection{``Tritronqu\'ee" solutions}

A simple investigation of all possible intersections of three of the 
families $u_m(x,t)$ and $\hat u_n(x,t)$ shows that there exist 
two types of the 0-parameter solutions.

\subsubsection{Type I tritronqu\'ee solutions}

There are seven 0-pa\-ra\-me\-ter solutions $V_{m-3}(x,t)$
corresponding to the intersections of the 2-parameter families 
$u_m(x,t)$, $\hat u_{m-3}(x,t)$, $u_{m+1}(x,t)$. These intersections
are characterized by the Stokes multipliers
\begin{equation}\nonumber
s_{m-1}=s_{m}=s_{m+1}=s_{m+2}=0,\quad
s_{m-3}=s_{m-2}=s_{m+3}=-i,
\end{equation}
and their asymptotics are described as follows:
\begin{multline}\label{V_m-3_as}
V_{m-3}(x,t)=
\\
=
\begin{cases}
u^{(m)}(x,t)
-(-1)^m
\frac{1}{2\sqrt{\pi}}
(F''(\lambda_1))^{-1/2}
e^{-2F(\lambda_1)}
(1+{\mathcal O}(x^{-1/6})),
\\
\hat u^{(m-3)}(x,t),
\\
u^{(m+1)}(x,t)
+(-1)^m
\frac{1}{2\sqrt{\pi}}
(F''(\lambda_3))^{-1/2}
e^{-2F(\lambda_3)}
(1+{\mathcal O}(x^{-1/6})),
\end{cases}
\\
\hfill
(x,t)\in\omega_m\cup\hat\omega_{m-3}\cup\omega_{m+1},
\\
\mathop{\mapsto}_{t=0}
\begin{cases}
u^{(m)}(x)
-A_-
i^{m+1}
x^{-1/4}
e^{(-1)^{m+1}\frac{6}{7}x^{7/6}h_-}
(1+{\mathcal O}(x^{-1/6})),
\\
\hat u^{(m-3)}(x),
\\
u^{(m+1)}(x)
-
A_+
i^{m}
x^{-1/4}
e^{(-1)^{m}\frac{6}{7}x^{7/6}h_+}
(1+{\mathcal O}(x^{-1/6})),
\end{cases}
\\
\hfill
x\in\omega_m\cup\hat\omega_{m-3}\cup\omega_{m+1}.
\end{multline}
In Figure~\ref{fig12}, we present the sectors of the algebraic
asymptotic behavior of the tritronqu\'ee solutions of Type~I
at $t=0$. 
%%%%%%%%%%%%%%%%%%%%%%%%%%%
\begin{figure}[htb]
\begin{center}
\mbox{\epsfig{figure=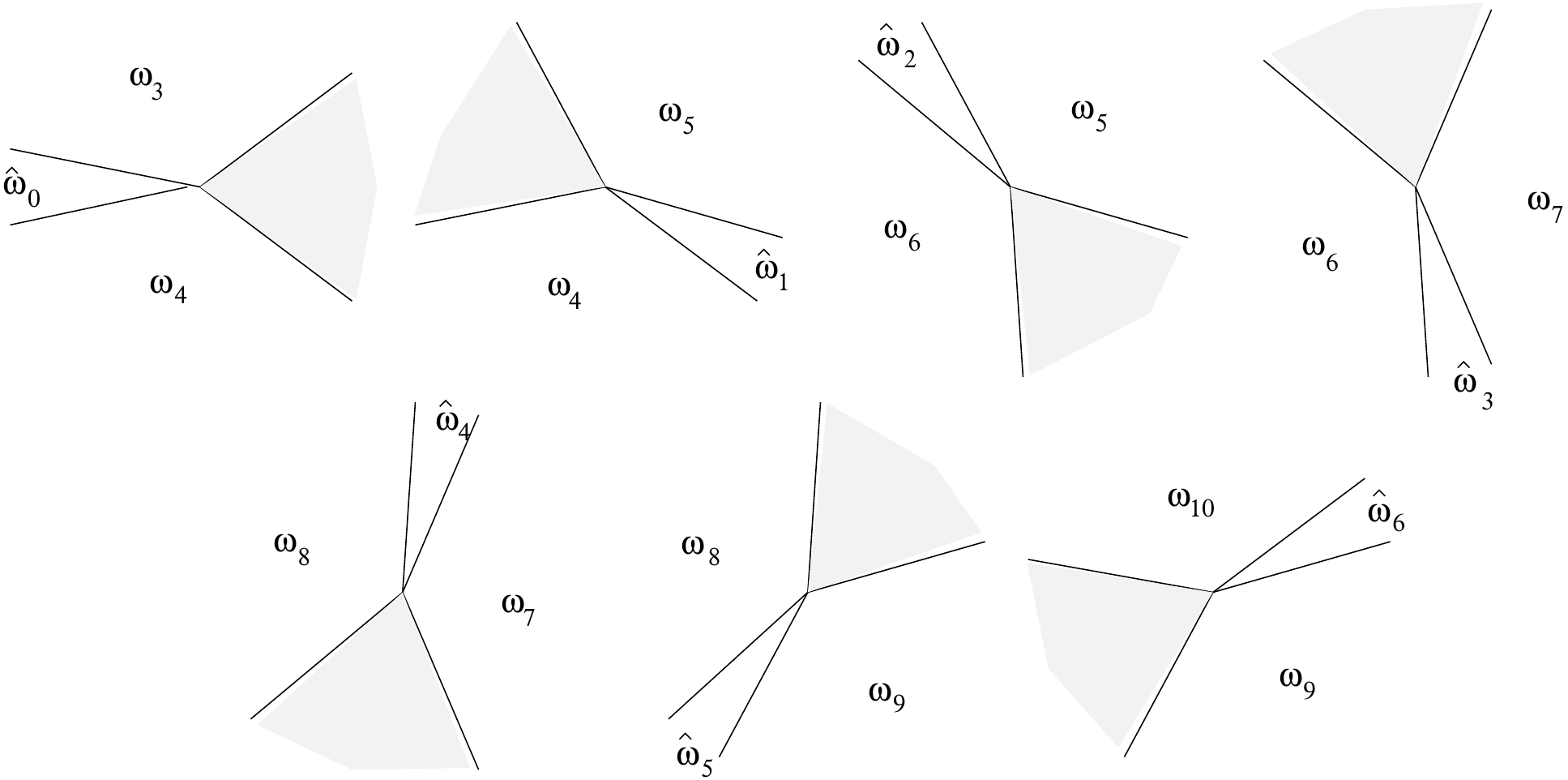,width=0.8\textwidth}}
\end{center}
\caption{Domains $\omega_m$, $\hat\omega_{m-3}$ and $\omega_{m+1}$
for the algebraic asymptotic behavior of tritronqu\'ee Type~I 
solutions for $t=0$. 
The sectors of the elliptic asymptotic behavior are shown in grey.}
\label{fig12}
\end{figure}
%%%%%%%%%%%%%%%%%%%%%%%%%%%

\subsubsection{Type II tritronqu\'ee solutions}
There exist seven intersections $U_{m}(x,t)$ of the families
$\hat u_{m-4}(x,t)$, 
$u_{m}(x,t)$ and $\hat u_{m-3}(x,t)$. These 0-parameter solutions 
correspond to the Stokes multipliers
\begin{equation}\nonumber
s_{m-2}=s_{m-1}=s_{m+1}=s_{m+2}=0,\quad
s_{m+3}=s_{m-3}=s_{m}=-i,
\end{equation}
and have the asymptotics
\begin{multline}\label{U_m_as}
U_{m}(x,t)=
\\
=
\begin{cases}
\hat u^{(m-4)}(x,t)
-(-1)^m
\frac{1}{2\sqrt{\pi}}
(-F''(\lambda_3))^{-1/2}
e^{2F(\lambda_3)}
(1+{\mathcal O}(x^{-1/6})),
\\
u^{(m)}(x,t),
\\
\hat u^{(m-3)}(x,t)
+(-1)^m
\frac{1}{2\sqrt{\pi}}
(-F''(\lambda_1))^{-1/2}
e^{2F(\lambda_1)}
(1+{\mathcal O}(x^{-1/6})),
\\
\hat u^{(m)}(x,t)
-(-1)^m
\frac{1}{2\sqrt{\pi}}
(-F''(\lambda_1))^{-1/2}
e^{2F(\lambda_1)}
(1+{\mathcal O}(x^{-1/6}))
\\
-(-1)^m
\frac{1}{2\sqrt{\pi}}
(-F''(\lambda_3))^{-1/2}
e^{2F(\lambda_3)}
(1+{\mathcal O}(x^{-1/6})),
\end{cases}
\\
\hfill
(x,t)\in\hat\omega_{m-4}\cup\omega_m
\cup\hat\omega_{m-3}\cup\hat\omega_m,
\\
\mathop{\mapsto}_{t=0}
\begin{cases}
\hat u^{(m-4)}(x)
-A_+
i^{m+1}
x^{-1/4}
e^{(-1)^{m+1}\frac{6}{7}x^{7/6}h_+}
(1+{\mathcal O}(x^{-1/6})),
\\
u^{(m)}(x),
\\
\hat u^{(m-3)}(x)
+A_-
i^{m+1}
x^{-1/4}
e^{(-1)^{m+1}\frac{6}{7}x^{7/6}h_-}
(1+{\mathcal O}(x^{-1/6})),
\\
\hat u^{(m)}(x)
+A_-
i^{m}
x^{-1/4}
e^{(-1)^m\frac{6}{7}x^{7/6}h_-}
(1+{\mathcal O}(x^{-1/6}))
\\
-A_+
i^{m+1}
x^{-1/4}
e^{(-1)^{m+1}\frac{6}{7}x^{7/6}h_+}
(1+{\mathcal O}(x^{-1/6})),
\end{cases}
\\
\hfill
x\in\hat\omega_{m-4}\cup\omega_m
\cup\hat\omega_{m-3}\cup\hat\omega_m.
\end{multline}
In Figure~\ref{fig13}, we present the domains for the algebraic
asymptotic behavior of the tritronqu\'ee solutions $U_m(x)$ of Type~II
at $t=0$.
%%%%%%%%%%%%%%%%%%%%%%%%%%%
\begin{figure}[htb]
\begin{center}
\mbox{\epsfig{figure=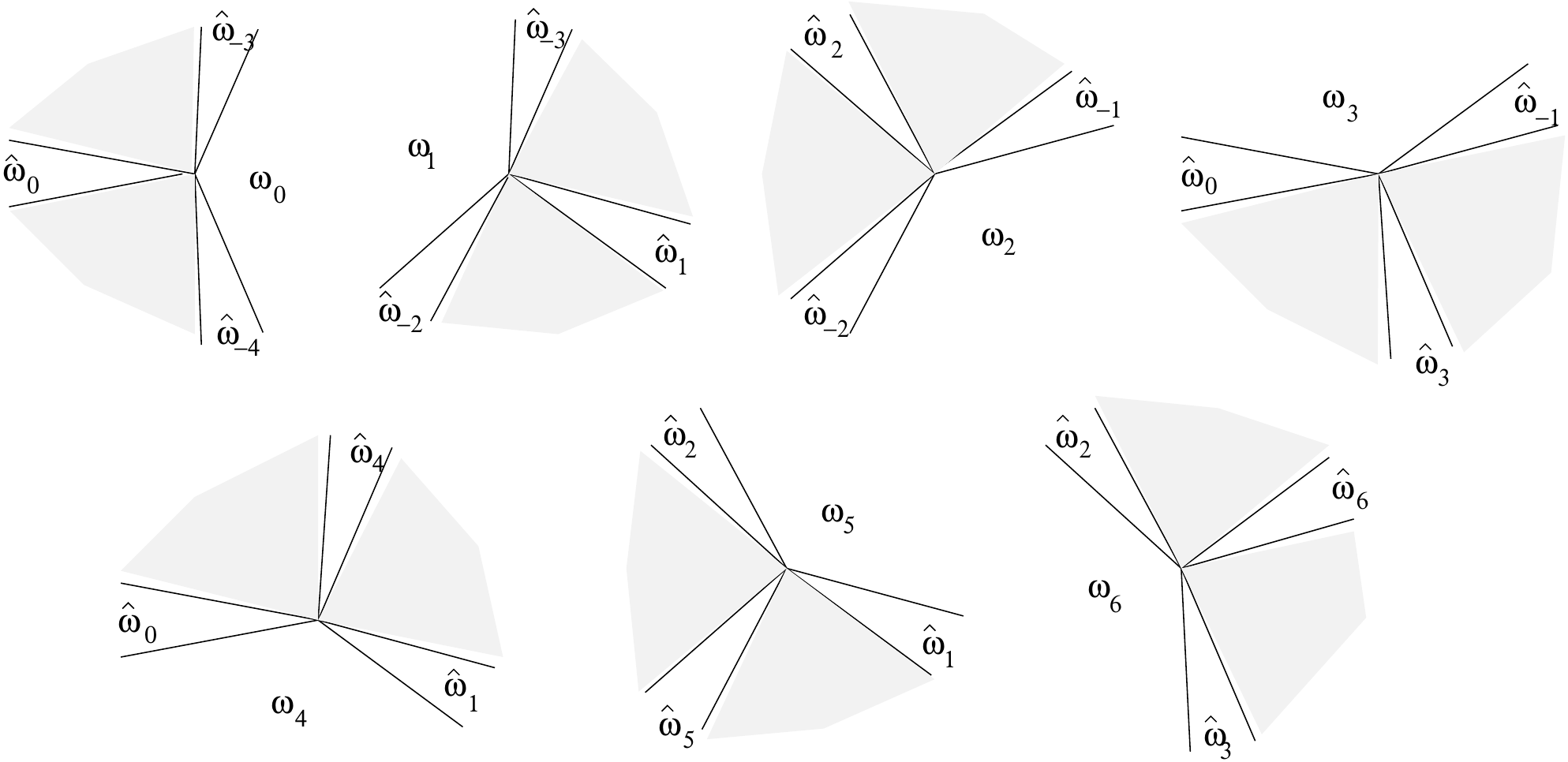,width=0.8\textwidth}}
\end{center}
\caption{Domains $\hat\omega_m$, $\hat\omega_{m-4}$, $\omega_m$
and $\hat\omega_{m-3}$ for the algebraic asymptotic behavior
of tritronqu\'ee Type~II solutions at $t=0$. The sectors of the elliptic
asymptotic behavior are shown in grey.}
\label{fig13}
\end{figure}
%%%%%%%%%%%%%%%%%%%%%%%%%%%

The  solution $U_0(x,t)$  real on the real line, with the algebraic asymptotics 
as $x\to\pm\infty$ was found in \cite{K} for  $t=0$ using the isomonodromic
deformation approach. The fact that this solution is regular on the real line  for any $t$
was proved in \cite{CV}.

In Figure~\ref{fig14}, we present the domains
$\hat\omega_0$, $\hat\omega_{-4}$, $\omega_0$
and $\hat\omega_{-3}$ for the algebraic asymptotic behavior
of the real and regular on the real line solution $U_0(x,t)$
as $t\to\mp\infty$ in the complex plane of 
the scaled variable $x|t|^{-3/2}$.
%%%%%%%%%%%%%%%%%%%%%%%%%%%
\begin{figure}[htb]
\begin{center}
\mbox{\epsfig{figure=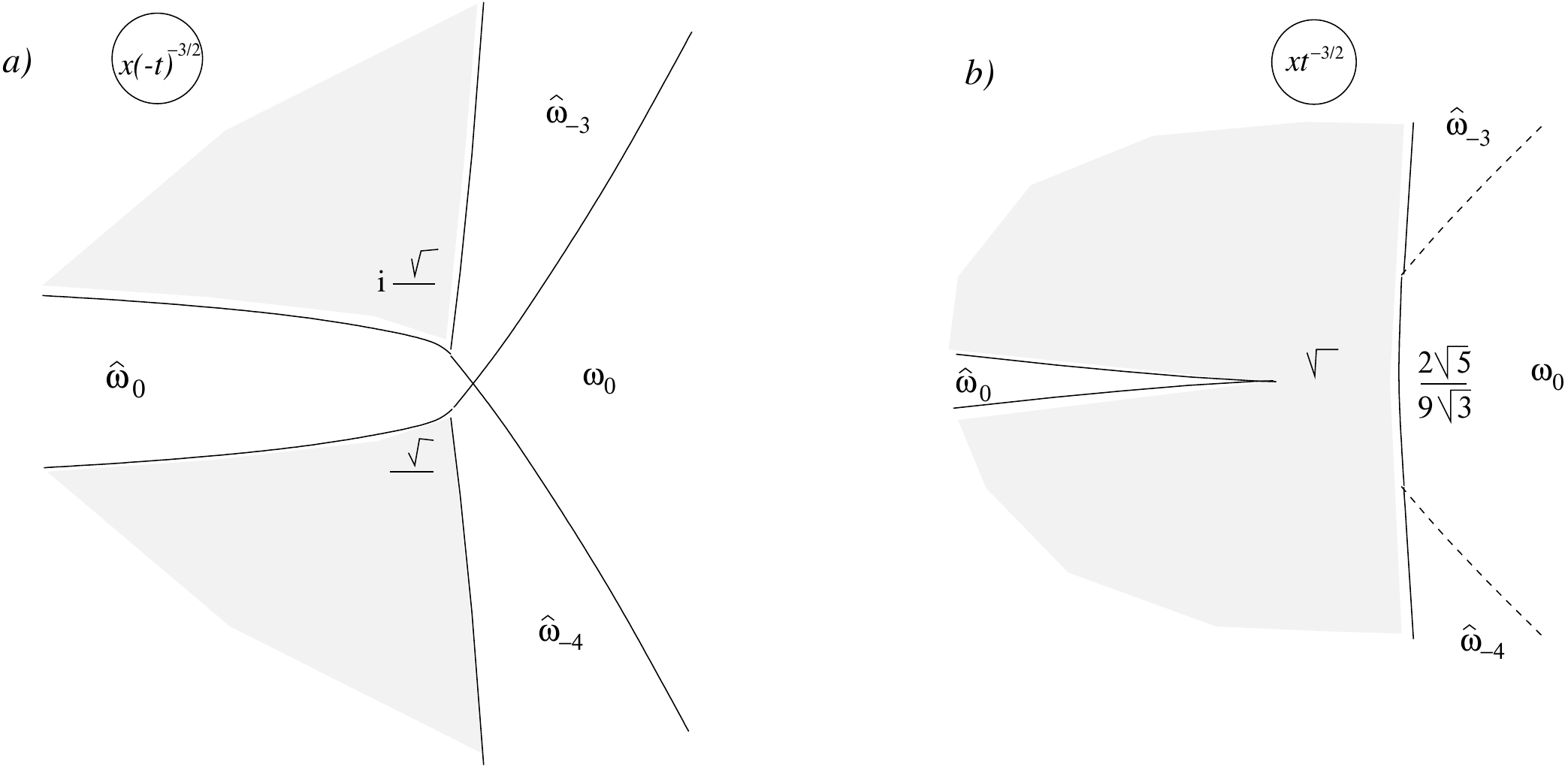,width=0.8\textwidth}}
\end{center}
\caption{The pole-free domains $\hat\omega_0$, $\hat\omega_{-4}$, 
$\omega_0$ and $\hat\omega_{-3}$ of the solution $U_0(x,t)$
in the plane of the scaled variable $x|t|^{-3/2}$: 
a) as $t\to-\infty$; b) as $t\to+\infty$. The sectors of 
the elliptic asymptotic behavior are shown in grey.}
\label{fig14}
\end{figure}
%%%%%%%%%%%%%%%%%%%%%%%%%%%
Observe some remarkable properties of the domains. 

As $t\to-\infty$, two domains of the elliptic behavior are separated by
a domain for the algebraic asymptotic behavior. 
In some neighborhood of the vertices $x=\pm i\tfrac{2\sqrt2}{3}(-t)^{3/2}$,
the asymptotics of $U_0(x,t)$ is given in terms of the tritronqu\'ee
solutions of the first Painlev\'e equation \PI.

As $t\to+\infty$, the sectors of the algebraic asymptotic behavior of 
$U_0(x,t)$ are separated by a connected domain of the elliptic behavior,
and in a neighborhood of the vertex $x=-2\sqrt3\,t^{3/2}$, the solution
$U_0(x,t)$ is approximated by the Hastings-McLeod solution of 
the second Painlev\'e equation \PII\ \cite{CG2}. More details on 
the pole distribution in the elliptic sector as $t\to+\infty$ can be 
found in \cite{DK}.

In Figure~\ref{fig14a}, we present the domains
$\omega_3$, $\hat\omega_0$ and $\omega_4$
for the algebraic asymptotic behavior of the real on 
the real line solution $V_0(x,t)$ as $t\to\mp\infty$  
in the complex plane of the scaled variable 
$x|t|^{-3/2}$.
%%%%%%%%%%%%%%%%%%%%%%%%%%%
\begin{figure}[htb]
\begin{center}
\mbox{\epsfig{figure=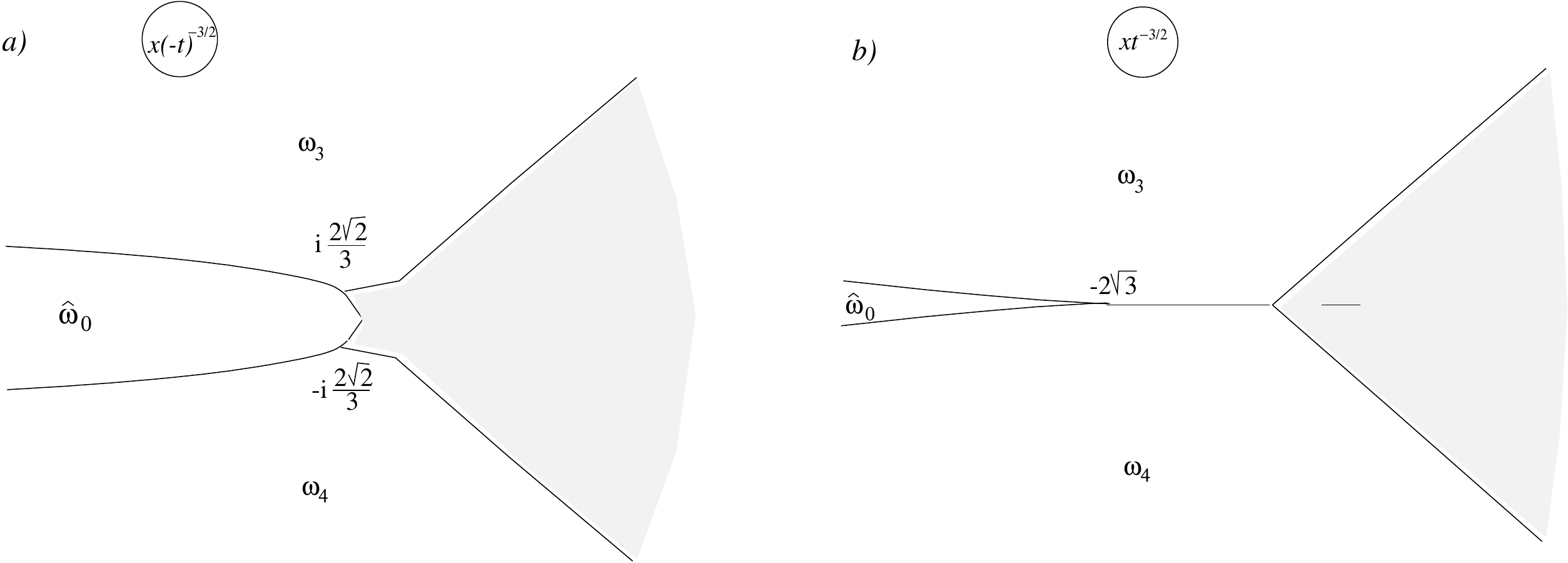,width=0.8\textwidth}}
\end{center}
\caption{The pole-free domains $\omega_3$, $\hat\omega_0$ 
and $\omega_4$ of the solution $V_0(x,t)$
in the plane of the scaled variable $x|t|^{-3/2}$: 
a) as $t\to-\infty$; b) as $t\to+\infty$. The sectors of 
the elliptic asymptotic behavior are shown in grey.}
\label{fig14a}
\end{figure}
%%%%%%%%%%%%%%%%%%%%%%%%%%%

In the neighborhood of the vertex $x=\tfrac{2\sqrt2}{3}\,t^{3/2}$
as $t\to+\infty$ and in the neighborhoods of the vertices 
$x=\pm i\tfrac{2\sqrt2}{3}(-t)^{3/2}$ as $t\to-\infty$, 
the asymptotics of $V_0(x,t)$ is given in terms of the tritronqu\'ee
solutions of the first Painlev\'e equation \PI.

\subsection{Quasi-linear Stokes phenomenon}

The notion of the quasi-linear Stokes phenomenon introduced in \cite{IK} 
refers to an exponentially small difference between two analytic functions 
with identical formal power expansions in a common sector of the complex 
plane. In the case of equation \PItwo\ at bounded values of $t$, 
all solutions $u^{(m)}(x,t)$ and $\hat u^{(n)}(x,t)$ have 
the same leading order asymptotics 
$u_f(x,t)\sim-\sqrt[3]{6}x^{1/3}$
as $x\to\infty$ in certain sectors of the complex $x$ plane
and the very same complete formal asymptotic power series 
expansion, $u_f(x,t)=\sum_{n=0}^{\infty}a_n x^{-\frac{1}{3}(n-1)}$,
uniquely determined by the leading order coefficient
$a_0=-\sqrt[3]{6}$ using the recurrence relation below
(the sum is assumed to be empty if the upper bound is less than the lower one),
\begin{multline}\label{um_formal}
u_f(x,t)=\sum_{n=0}^{\infty}a_nx^{-\frac{1}{3}(n-1)},\quad
a_k=
\begin{cases}
0,\quad
k<0,
\\
-\sqrt[3]{6},\quad
k=0,
\\
P_k(a_{n<k}),\quad
k>0,
\end{cases}
\\
%a_2=2t a_0^{-1},\quad
%a_1=a_3=a_4=a_5=0,\quad
%a_6=-\tfrac{2}{27}a_0t^3,
% \\
% a_k=
% 2ta_0^{-2} a_{k-2}
% -\sum_{m=2}^{k-2}
% \tfrac{1}{3}a_0^{-1}a_m a_{k-m}
% -\sum_{n=2}^{k-1}
% \sum_{m=0}^{k-n}
% \tfrac{1}{3}a_0^{-2}a_n a_m a_{k-n-m}
% \\
% -\sum_{n=0}^{k-7}
% \tfrac{(n-1)(k+n-4)}{108}
% a_0^{-2}a_n a_{k-n-7},\quad
% k=7,\dots,13,
%\\
%40x(a_0^3+6)
%+x^{\frac{2}{3}}
%120 a_0^2 a_1
%\\
%+x^{\frac{1}{3}}
%120a_0^2
%\Bigl[
%a_2
%-2t a_0^{-1}
%\Bigr]
%\\
%+120a_0^2a_3
%\\
%+x^{-\frac{1}{3}}
%120a_0^2
%\Bigl[
%a_4
%+a_0^{-1} a_2^2
%-2t a_0^{-2} a_{2}
%\Bigr]
%\\
%+x^{-\frac{2}{3}}
%120
%a_0^2 a_5
%\\
%+x^{-1}
%120a_0^2 
%\Bigl[
%a_6
%+\tfrac{1}{3}a_0^{-2}a_2^3
%\Bigr]
P_k(a_{n<k})=
a_0^{-2}
\Bigl(
2ta_{k-2}
-\sum_{m=2}^{k-2}
\tfrac{1}{3}a_0a_m a_{k-m}
-\sum_{n=2}^{k-1}
\sum_{m=0}^{k-n}
\tfrac{1}{3}a_n a_m a_{k-n-m}
\\
-\sum_{n=0}^{k-7}
\tfrac{(n-1)(k+n-4)}{108}
a_n a_{k-n-7}
-\tfrac{(k-15)(k-12)(k-9)(k-6)}{9720}
a_{k-14}
\Bigr),\quad
k>0.\quad
%R_k=
%\begin{cases}
%0,\quad
%k=7,\dots,13,
%\\
%1,\quad
%k\geq14.
%\end{cases}
\end{multline}
If $t=0$,  the formal series simplifies,
\begin{multline}\label{um_formal_t=0}
u_f(x,t)=\sum_{n=0}^{\infty}b_nx^{-\frac{1}{3}(7n-1)},\quad
b_0=-\sqrt[3]{6},\quad
b_1=\tfrac{1}{36},\quad
\\
b_{n+1}
=
-\sum_{m=0}^{n-1}
\tfrac{1}{3}
b_0^{-1}
b_{n-m}
b_{m+1}
-\sum_{l=0}^{n-1}
\sum_{m=0}^{n-l}
\tfrac{1}{3}
b_0^{-2}
b_{n-m-l}
b_{l+1}
b_m
\\
-\sum_{m=0}^{n}
\tfrac{(7m-1)(7n+7m+3)}{3^3\cdot2^2}
b_0^{-2}
b_m
b_{n-m}
-\tfrac{(7n-8)(7n-5)(7n-2)(7n+1)}{5\cdot3^5\cdot2^3}
b_0^{-2}
b_{n-1}
,\quad
n\geq1.
\end{multline}

The exponentially small differences between the solutions
$u^{(m)}(x)$ and $\hat u^{(n)}(x)$ with the identical expansions 
(\ref{um_formal}) can be extracted from the asymptotics of 
the bitronqu\'ee (\ref{um_hat_u_m-3_as}), (\ref{hat_u_m-3_u_m+1_as})
or tritronqu\'ee (\ref{V_m-3_as}), (\ref{U_m_as}) solutions above. 
Indeed, these formulas can be understood as the mutual analytic 
continuations of $u_m(x)$ and $\hat u_n(x)$ into adjacent sectors
of the complex $x$ plane. For instance, taking into account the 
relation $s_{m-2}+s_m=-i$ in (\ref{um_hat_u_m-3_Stokes}), the 
asymptotics (\ref{um_hat_u_m-3_as}) is rewritten as
\begin{multline}\label{m-3_m_quasi_Stokes}
\hat u^{(m-3)}(x,t)-u^{(m)}(x,t)=
\tfrac{(-1)^{m+1}}{2\sqrt{\pi}}
(F''(\lambda_1))^{-\frac{1}{2}}
e^{-2F(\lambda_1)}
(1+{\mathcal O}(x^{-1/6}))
\\
\mathop{\mapsto}_{t=0}
A_-i^{m-1}x^{-\frac{1}{4}}e^{(-1)^{m+1}\frac{6}{7}x^{7/6}h_-}
(1+{\mathcal O}(x^{-1/6})),\quad
x\in\omega_m\cup\hat\omega_{m-3}.
\end{multline}
Similarly, (\ref{hat_u_m-3_u_m+1_as}) with the use of
the relation $s_{m+1}+s_{m+3}=-i$ from
(\ref{u_m+1_hat_u_m-3_Stokes}) yield
\begin{multline}\label{m_m+1_quasi_Stokes}
u^{(m+1)}(x,t)-\hat u^{(m-3)}(x,t)
=\tfrac{(-1)^{m+1}}{2\sqrt{\pi}}
(F''(\lambda_3))^{-\frac{1}{2}}
e^{-2F(\lambda_3)}
(1+{\mathcal O}(x^{-1/6}))
\\
\mathop{\mapsto}_{t=0}
A_+
i^{m}
x^{-\frac{1}{4}}
e^{(-1)^{m}\frac{6}{7}x^{7/6}h_+}
(1+{\mathcal O}(x^{-1/6})),\quad
x\in\omega_{m+1}\cup\hat\omega_{m-3}.
\end{multline}

\section{Coefficient asymptotics for $t=o(x^{2/3})$}

The intimate relation between the Stokes phenomenon and the coefficient
asymptotics in the formal solutions to linear and nonlinear ODEs is well
explained in \cite{FIKN}, see also \cite{IK, kap_p1_quasi}. The most recent
developments of the coefficient asymptotics evaluation method 
including numerical tests can be found in \cite{GIKM}. 
Here we develop similar ideas for the formal expansion 
(\ref{um_formal}) which admits an asymptotic interpretation for 
$tx^{-2/3}\to0$ as $x\to\infty$. 
Our approach follows the above mentioned papers and
significantly differs from the method developed in the framework of the resurgent analysis, see e.g.\ \cite{Mar, SchV} and references mentioned
therein.

Introduce the piece-wise meromorphic function
as the collection of the 14 tritronqu\'ee solutions
defined above,
\begin{multline}\label{W_coeff_def}
W(\tau,t)=
\begin{cases}
u^{(m)}(\tau^3,t),\quad
\arg\tau\in[-\frac{\pi}{7}+\frac{1}{3}\beta_0+\frac{2\pi}{7}m,
\frac{\pi}{7}-\frac{1}{3}\beta_0+\frac{2\pi}{7}m],
\\
\hat u^{(m-3)}(\tau^3,t),\quad
\arg\tau\in[\frac{\pi}{7}-\frac{1}{3}\beta_0+\frac{2\pi}{7}m,
\frac{\pi}{7}+\frac{1}{3}\beta_0+\frac{2\pi}{7}m],
\end{cases}
\\
\beta_0=\tfrac{3}{7}\arctan\tfrac{1}{\sqrt5},\quad
m=0,1,\dots,6.
\end{multline}
This function has the uniform asymptotic expansion as 
$\tau\to\infty$,
\begin{equation}\nonumber
W(\tau,t)=\sum_{n=0}^{M}a_n(t)\tau^{-n+1}
+{\mathcal O}(\tau^{-M}),
\end{equation}
with the coefficients $a_n$ defined in (\ref{um_formal}).
Note that $W(\tau)$ can have a {\em finite} number of poles 
all of which are located in the interior of the disc $|\tau|<r_0$.

Integrating the product
\begin{equation}\nonumber
\tau^{N-2}W(\tau,t)=
P_{N-1}(\tau,t)
+a_{N}(t)\tau^{-1}
+{\mathcal O}(\tau^{-2}),\quad
P_{N-1}(\tau,t)=\sum_{n=0}^{N-1}a_n\tau^{N-1-n},
\end{equation}
over a counter-clock-wise oriented circle of large radius 
$|\tau|=r\gg r_0$, we find
\begin{equation}\nonumber
a_{N}(t)=
\tfrac{1}{2\pi i}
\oint_{C_r}\tau^{N-2}W(\tau,t)\,d\tau
+{\mathcal O}(r^{-1}).
\end{equation}
Contracting the arcs of the circle $C_r$ in the interior of the sectors
in (\ref{W_coeff_def}) to the circle of radius $|\tau|=r_0$, 
and using (\ref{m_m+1_quasi_Stokes}), (\ref{m-3_m_quasi_Stokes}), we compute
\begin{multline*}
a_{N}(t)
=\tfrac{1}{2\pi i}
\sum_{m=0}^6
\int_{e^{-i\frac{1}{3}\alpha_0+i\frac{2\pi}{7}m}(r_0,r)}
\tau^{N-2}(\hat u^{(m-4)}(\tau^3,t)-u^{(m)}(\tau^3,t))\,
d\tau
\\
+\tfrac{1}{2\pi i}
\sum_{m=0}^6
\int_{e^{i\frac{1}{3}\alpha_0+i\frac{2\pi}{7}m}(r_0,r)}
\tau^{N-2}(u^{(m)}(\tau^3,t)-\hat u^{(m-3)}(\tau^3,t))\,
d\tau
\\
+\tfrac{1}{2\pi i}
\oint_{C_{r_0}}\tau^{N-2}W(\tau,t)\,d\tau
+{\mathcal O}(r^{-1})=
\\
=-\tfrac{1}{4\pi^{3/2} i}
\sum_{m=0}^6
e^{i\frac{2\pi}{7}Nm}
\int_{e^{-i\frac{1}{3}\alpha_0}(r_0,r)}
\tau^{N-2}
(F''(\lambda_3))^{-\frac{1}{2}}
e^{-2F(\lambda_3)}
(1+{\mathcal O}(\tau^{-\frac{1}{2}}))
\bigr|_{\genfrac{}{}{0pt}{}{x\mapsto\tau^3}{t\mapsto t\exp(-i 18\pi m/7)}}
\,
d\tau
\\
+\tfrac{1}{4\pi^{3/2}i}
\sum_{m=0}^6
e^{i\frac{2\pi}{7}Nm}
\int_{e^{i\frac{1}{3}\alpha_0}(r_0,r)}
\tau^{N-2}
(F''(\lambda_1))^{-\frac{1}{2}}
e^{-2F(\lambda_1)}
(1+{\mathcal O}(\tau^{-\frac{1}{2}}))
\bigr|_{\genfrac{}{}{0pt}{}{x\mapsto\tau^3}{t\mapsto t\exp(-i18\pi m/7)}}
\,
d\tau
\\
+{\mathcal O}(r_0^{N-1})
+{\mathcal O}(r^{-1})
.
%=
%\\
%\mathop{\mapsto}_{t=0}
%\tfrac{A_+}{2\pi}
%\sum_{m=0}^6
%e^{i\frac{2\pi}{7}Nm}
%\int_{e^{i\psi_0}(r_0,r)}
%\tau^{N-\frac{11}{4}}
%e^{-\frac{6}{7}\tau^{7/2}h_+}
%(1+{\mathcal O}(\tau^{-\frac{1}{2}}))
%\,
%d\tau
%\\
%+\tfrac{A_-}{2\pi}
%\sum_{m=0}^6
%e^{i\frac{2\pi}{7}Nm}
%\int_{e^{i\phi_0}(r_0,r)}
%\tau^{N-\frac{11}{4}}
%e^{-\frac{6}{7}\tau^{7/2}h_-}
%(1+{\mathcal O}(\tau^{-\frac{1}{2}}))
%\,
%d\tau
%\\
%+{\mathcal O}(r_0^{N-1})
%+{\mathcal O}(r^{-1}).
\end{multline*}
As $xt^{-3/2}\to\infty$, we find the asymptotics
\begin{multline}\label{F13_ddF13_as}
%\lambda_{1,3}=
%\tfrac{3^{1/3}(1\mp i\sqrt5)}{2^{2/3}}x^{1/3}
%+\tfrac{(1\pm2i\sqrt5)}{6^{1/3}}tx^{-1/3}
%\pm\tfrac{i\sqrt5}{2}\,t^2x^{-1}
%+\tfrac{2^{1/3}(-1\pm10i\sqrt5)}{3^{8/3}}t^3x^{-5/3}
%\\
%+\tfrac{(8\pm187i\sqrt5)}{2^{7/3}3^{10/3}}t^4x^{-7/3}
%\pm\tfrac{3i\sqrt5}{2}\,t^5x^{-3}
%+\tfrac{(-64\pm8371i\sqrt5)}{2^{8/3}3^{7/3}}t^6x^{-11/3}
%+{\mathcal O}(x^{-13/3}),
%\\
%\lambda_5=
%-2\sqrt[3]{6}x^{1/3}
%-\tfrac{2^{5/3}t}{3^{1/3}}x^{-1/3}
%+\tfrac{2^{7/3}t^3}{3^{8/3}}x^{-5/3} 
%-\tfrac{2^{8/3}t^4}{3^{10/3}}x^{-7/3}
%+\tfrac{2^{16/3}t^6}{3^{17/3}}x^{-11/3}
%+{\mathcal O}(x^{-13/3}),
%\\
F(\lambda_{1,3})=
\tfrac{1}{2}
\tfrac{6}{7}x^{7/6}
h_{\mp}
\pm i(\tfrac{15}{2})^{1/4}
e^{\mp i\frac{3}{2}\arctan\frac{1}{\sqrt5}}
t
x^{1/2}
+{\mathcal O}(x^{-1/6}),
\\
\shoveleft{
(F''(\lambda_{1,3}))^{-1/2}=
\pm i2\sqrt{\pi}
A_{\mp}
x^{-1/4}
(1+{\mathcal O}(tx^{-2/3})),
}\hfill
\end{multline}
where $h_{\pm}$ and $A_{\pm}$ are defined in (\ref{h_pm_def}),
letting $r=\infty$ and using conventional steepest descent 
analysis of the integrals,
\begin{multline}\label{aN_as}
a_N(t)
=\tfrac{1}{2\pi}
\sum_{\sigma\in\{+,-\}}
A_{-\sigma}
\sum_{m=0}^6
e^{i\frac{2\pi}{7}Nm}
\int_{e^{\sigma i\frac{1}{3}\alpha_0}(r_0,\infty)}
\exp\bigl\{
-\tfrac{6}{7}\tau^{7/2}
h_{-\sigma}
\\
-\sigma i2^{3/4}15^{1/4}
e^{-\sigma i\frac{3}{2}\arctan\frac{1}{\sqrt5}}
te^{-i\frac{18\pi}{7}m}
\tau^{3/2}
+(N-\tfrac{11}{4})\ln\tau
\bigr\}
(1+{\mathcal O}(\tau^{-\frac{1}{2}}))
\,
d\tau
\\
+{\mathcal O}(r_0^{N-1})=
% \\
% =\tfrac{1}{2\pi}
% e^{\frac{2}{7}(N-\frac{7}{4})\ln(N-\frac{11}{4})}
% \sum_{\sigma\in\{+,-\}}
% A_{-\sigma}
% e^{\sigma i\frac{1}{3}\alpha_0(N-\frac{7}{4})}
% \times
% \\
% \times
% \sum_{m=0}^6
% e^{i\frac{2\pi}{7}Nm}
% \int_{r_0(N-\frac{11}{4})^{-2/7}}^{\infty}
% \exp\bigl\{
% -(N-\tfrac{11}{4})
% \bigl(
% h_{-\sigma}
% \tfrac{6}{7}
% e^{\sigma i\frac{7}{6}\alpha_0}
% z^{7/2}
% \\
% +\sigma i2^{3/4}15^{1/4}
% e^{-\sigma i\frac{3}{2}\arctan\frac{1}{\sqrt5}}
% te^{-i\frac{18\pi}{7}m}
% e^{\sigma i\frac{1}{2}\alpha_0}
% (N-\tfrac{11}{4})^{-4/7}
% z^{3/2}
% \\
% -\ln z
% \bigr)
% \bigr\}
% (1+{\mathcal O}(N^{-1/7}z^{-\frac{1}{2}}))
% \,
% dz
% +{\mathcal O}(r_0^{N-1})=
\\
=\tfrac{1}{2\sqrt{7\pi}}
e^{(\frac{2}{7}N-1)\ln(N-\frac{11}{4})}
e^{-(\frac{2}{7}N-\frac{11}{14})}
\bigl(
5^{\frac{1}{2}}
3^{\frac{17}{6}}
2^{\frac{11}{6}}
\bigr)^{-\frac{1}{7}N+\frac{1}{4}}
\times
\\
\times
\sum_{\sigma\in\{+,-\}}
A_{-\sigma}
e^{\sigma i(\frac{1}{7}N-\frac{1}{4})\arctan\frac{1}{\sqrt5}}
\sum_{m=0}^6
e^{i\frac{2\pi}{7}Nm
-it
N^{3/7}
b_{\sigma,m}
}
(1+{\mathcal O}(N^{-1/7}))
+{\mathcal O}(r_0^{N-1}),
% \\
% G(z)=
% az^{7/2}
% +b(N-\tfrac{11}{4})^{-4/7}z^{3/2}
% -\ln z,
% \\
% G''(z)=
% \tfrac{35}{4}az^{3/2}
% +\tfrac{3}{4}b(N-\tfrac{11}{4})^{-4/7}z^{-1/2}
% +z^{-2},
% \\
% z^{7/2}
% =\tfrac{2}{7a}
% \bigl(1-\tfrac{3}{2}b(N-\tfrac{11}{4})^{-4/7}z^{3/2}\bigr),
% \\
% z_0
% =(\tfrac{2}{7a})^{2/7}
% -\tfrac{3b}{7}(N-\tfrac{11}{4})^{-4/7}
% (\tfrac{2}{7a})^{5/7}
% +\tfrac{1}{4}(\tfrac{2}{7a})^{8/7}
% (\tfrac{3b(N-\tfrac{11}{4})^{-4/7}}{7})^2
% +{\mathcal O}((N-\tfrac{11}{4})^{-12/7})
% \bigr),
% \\
% G_0=
% \tfrac{2}{7}(\ln\tfrac{7ae}{2})
% +(N-\tfrac{11}{4})^{-4/7}b(\tfrac{2}{7a})^{3/7}
% +{\mathcal O}(N^{-8/7})
% ,
% \\
% G_0''=
% \tfrac{7}{2}
% (\tfrac{7a}{2})^{4/7}
% +{\mathcal O}(b^2),
% \\
% (G_0'')^{-1/2}=
% (\tfrac{2}{7})^{1/2}
% (\tfrac{2}{7a})^{2/7}
% +{\mathcal O}(N^{-8/7}),
% \\
% \tfrac{7}{2}a
% =
% 3h_{-\sigma}
% e^{\sigma i\frac{7}{6}\alpha_0}
% =
% 5^{\frac{1}{4}}
% 3^{\frac{17}{12}}
% 2^{\frac{11}{12}}
% e^{\sigma i\frac{\pi}{2}}
% e^{-\sigma i\arctan\frac{1}{\sqrt5}}
% ,
% \\
% b(\tfrac{2}{7a})^{3/7}=
% \sigma i
% 5^{\frac{1}{7}}
% 3^{-\frac{5}{14}}
% 2^{\frac{5}{14}}
% e^{-\sigma i\frac{9}{7}\arctan\frac{1}{\sqrt5}}
% e^{-i\frac{18\pi}{7}m}t
% ,
\\
b_{\sigma,m}=
\sigma 
5^{\frac{1}{7}}
3^{-\frac{5}{14}}
2^{\frac{5}{14}}
e^{-\sigma i\frac{9}{7}\arctan\frac{1}{\sqrt5}}
e^{-i\frac{18\pi}{7}m},\quad
N\to\infty.
\end{multline}
Observe the agreement of (\ref{aN_as}) with the triviality
of the coefficients $a_n$ unless $n\equiv0\mod(7)$ at $t=0$,
see (\ref{um_formal_t=0}).

\section{Numerical evaluation of regular solutions to the \PItwo\
equation}

In this section we present a numerical approach to pole-free 
solutions to \PItwo\ to illustrate some of the results of 
the previous sections.

\subsection{Numerical Methods}
We will study here the special solutions to the equation \PItwo\ 
called tritronqu\'ee in the previous sections. The type~I solution 
denoted by $V_{0}(x,t)$ is similar to the tritronqu\'ee solution 
of the \PI\ equation (see \cite{DubrovinGravaKlein} for figures). 
The type~II solution denoted by $U_{0}(x,t)$ is real and pole-free 
on the real axis. Here we are interested in the sectors of the
complex $x$ plane where these solutions are regular and exhibit
the algebraic asymptotic behavior,
\begin{equation}
u\sim-\sqrt[3]{6}\,x^{1/3}
\quad
\mbox{ as }
\quad
|x|\to\infty,
\label{asym}
\end{equation}
in different sectors of the complex plane,
see Figures~\ref{fig13}, \ref{fig14} and
Figures~\ref{fig12}, \ref{fig14a}.

In the literature, it is possible to find various numerical approaches 
to solutions of Painlev\'e-type equations. For instance,
if a Painlev\'e transcendent can be represented in terms of
a Fredholm determinant, it is possible to apply the methods of 
\cite{fredholm}. Unfortunately such an expression does not yet 
exist for the solution $U_0(x,t)$. 

It is known \cite{shimomura} that equation \PItwo\ possesses
the Painlev\'e property, thus all its solutions are meromorphic
functions of $x$ and $t$. A convenient approach to study numerically 
meromorphic functions are Pad\'e approximants, see \cite{FW} for the 
tritronqu\'ee solution of \PI. A disadvantage of this approach is a 
lack of error control for the Pad\'e approximants. 
It might be more promising to solve numerically 
the  Riemann-Hilbert problem as in \cite{olver}.

Here we concentrate on the pole free sectors for solutions to \PItwo\ as 
in \cite{GK1}. The idea of our numeric method is the formulation
of a boundary value problem for \PItwo\ on a {\em finite} interval
consistent with the asymptotic condition (\ref{asym}).
First we construct the series (\ref{um_formal}) in the form
\begin{equation}\label{pain3}
u=Y+\sum_{n=1}^{\infty}c_{n}Y^{-n},\quad
Y=-\sqrt[3]{6}\,x^{1/3}.
\end{equation}
We find the non-zero coefficients
$c_1=2t$, $c_2=c_3=c_4=0$, $c_5=-\tfrac{8}{3}t^3$, $c_6=1$,
$c_7=\tfrac{16}{3}t^4$, $c_8=-\tfrac{10}{3}t$, $c_9=0$,
$c_{10}=-\tfrac{28}{3}t^2$, $c_{11}=-\tfrac{256}{9}t^6$,  
$c_{12}=96t^3$, $c_{13}=\tfrac{640}{9}t^7-21$, 
$c_{14}=-\tfrac{1936}{9}t^4$, \dots 
We truncate this formal series at the $n$-th term for which
$|c_nY^{-n}|<10^{-6}$ at the boundary of the computational 
domain $x\in[x_{l},x_{r}]$. At the values $x_{l}$ and $x_{r}$, 
the truncated series (\ref{pain3}) provides us with the necessary 
boundary data, and we obtain a  {\em boundary} value problem 
which replaces the original {\em asymptotic} value problem.

The standard approach to boundary value problems for an ODE 
is to choose a suitable discretization of the dependent variable, see 
for instance \cite{trefethen,chebop}. 
This leads to an approximation of the derivatives in terms of 
so called differentiation matrices. In \cite{GK1}, we used 
a collocation method with cubic splines distributed as \emph{bvp4} 
with Matlab. In \cite{GK12}, we applied a Chebyshev collocation 
method on Chebyshev collocation points $x_{j}$, $j=0,\ldots,N_{c}$. 
This is related to an expansion of the solution in terms of Chebyshev 
polynomials. Since the derivative of a Chebyshev polynomial can be 
again expressed in terms of a linear combination of Chebyshev 
polynomials, this leads to the well known Chebyshev differentiation 
matrices, see for instance \cite{trefethen}. The \PItwo\ equation 
(\ref{p12}) is thus replaced by $N_{c}+1$ algebraic equations. 
The boundary data are included via the so-called $\tau$-method: 
the equations for $j=0,1,N_{c}-1,N_{c}$ are replaced by the 
boundary conditions following from (\ref{pain3}). 

The resulting system of algebraic equations is solved using 
Newton's method with the initial iterate 
$u=-6^{1/3}x/(1+x^{2})^{1/3}$, a smooth function which satisfies 
for large $x$ the asymptotic conditions, or a linear interpolation 
between the boundary data. For highly oscillatory 
solutions, the iteration in general fails to converge. 
Thus we use a Newton-Armijo method, see \cite{kelley, armijo} 
and references therein for details. 

The normal Newton iteration for the solution of an equation 
$F(v_{n})=0$ takes the form
\begin{equation*}
v_{n+1}=v_{n}-(\mbox{Jac}F(v_{n}))^{-1}F(v_{n}),
\end{equation*} 
where $\mbox{Jac} F$ is the Jacobian of $F$. The basic idea is 
to check at each step of the iteration whether the new iterate 
$v_{n+1}$ satisfies the equation better than the previous one, 
i.e.\ whether $\|F(v_{n+1})\|<\|F(v_{n})\|$. If this is not 
the case, a so called line search is performed, i.e.\ the new 
iterate is taken as 
$v_{n+1}=v_{n}-\lambda(\mbox{Jac}F(v_{n}))^{-1}F(v_{n})$ where 
$0<\lambda<1$. In practice we choose a quadratic model for 
$F(v_{n-1})$, $F(v)$ and $F(v_{n+1})$ to optimize the choice of 
$\lambda$ as discussed in \cite{kelley}. For highly oscillatory 
solutions it can happen that there is no $\lambda$ satisfying 
the condition. In this case we take a $\lambda$ of the order of 
$10^{-5}$ and continue the 
iteration. In the shown examples below, the solution will converge 
after some iterations even if the line search failed at one point. 
The precision is mainly limited by the conditioning of the Chebyshev 
differentiation matrix which is of the order of $N_{c}^{2}$, see the 
discussion in \cite{trefethen} for the second order differentiation 
matrices. We use here $N_{c}=512$ or $N_{c}=1024$ and reach an 
accuracy of the order of $10^{-6}$.

To study solutions in the complex plane, i.e.\ as a holomorphic function 
of the complex coordinate $x$ in a given sector, we proceed as in 
\cite{DubrovinGravaKlein}: we determine the solution as discussed above 
on a line given by $x=\xi e^{i\phi}$ with $\xi,\phi\in\mathbb{R}$. 
This solution is then used as the boundary data for the two-dimensional 
Laplace equation which is solved as discussed in 
\cite{DubrovinGravaKlein,trefethen}. Due to the coordinate 
singularity of the solution for $|x|=0$, the precision is much lower 
than for the solution on a line in the complex plane and serves
mainly for visualization. 

\subsection{Type~I tritronqu\'ee solutions}
We first study the type~I tritronqu\'ee solution $V_{0}$ which is a 
straightforward generalization of the well known real tritronqu\'ee
solution of \PI\ studied numerically in \cite{DubrovinGravaKlein,FW}.
This similarity allows us to concentrate on the most interesting details. 

We first determine the solution on the imaginary axis where we put 
$x=\exp(i\phi)\xi$, $\xi\in\mathbb{R}$ and $\phi=5\pi/2$. We get the
solution shown in Fig.~\ref{pI2triphi5pi2b0t0}
\begin{figure}[htb!]
\begin{center}
\includegraphics[width=0.46\textwidth]{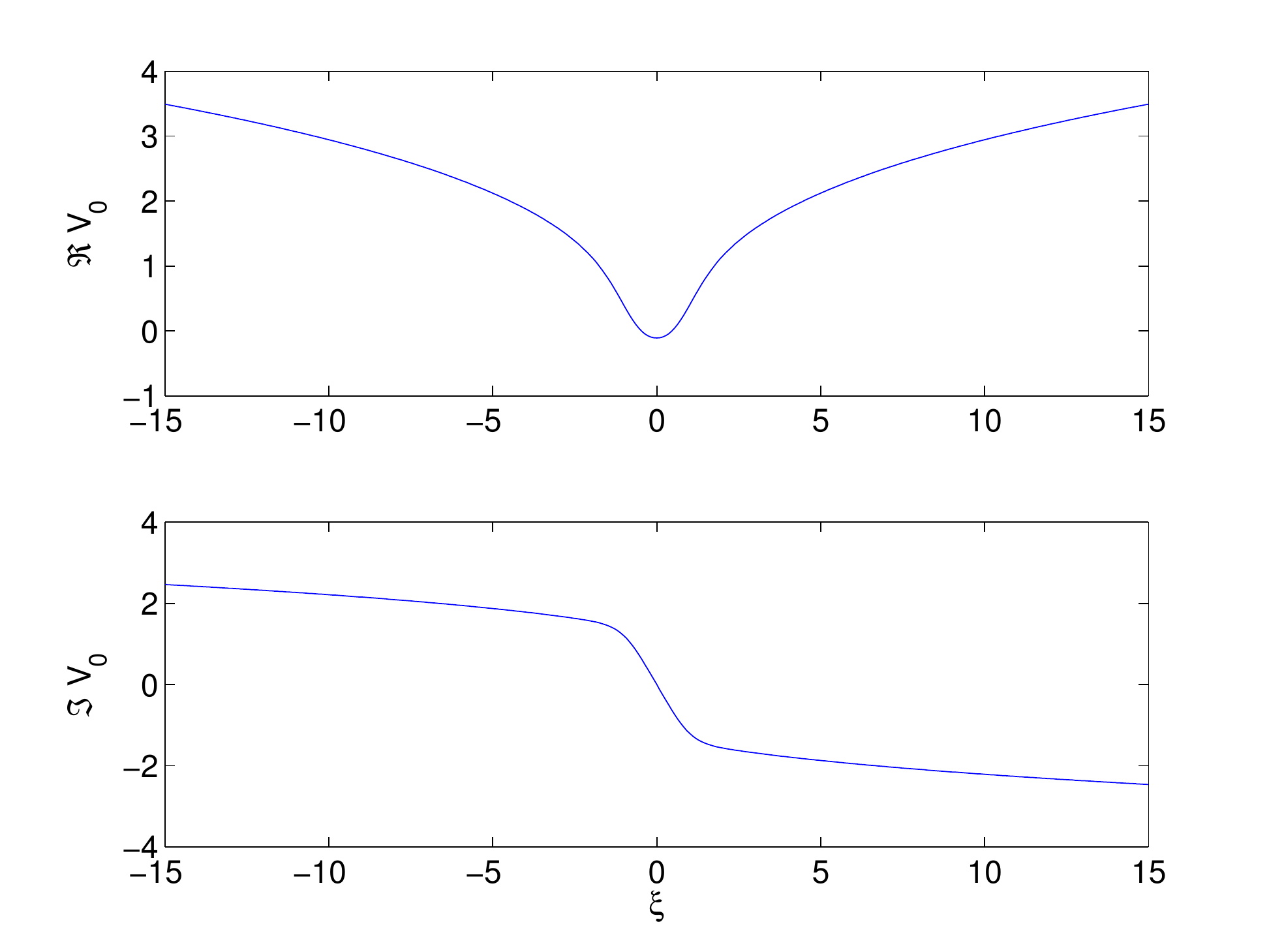}
\includegraphics[width=0.46\textwidth]{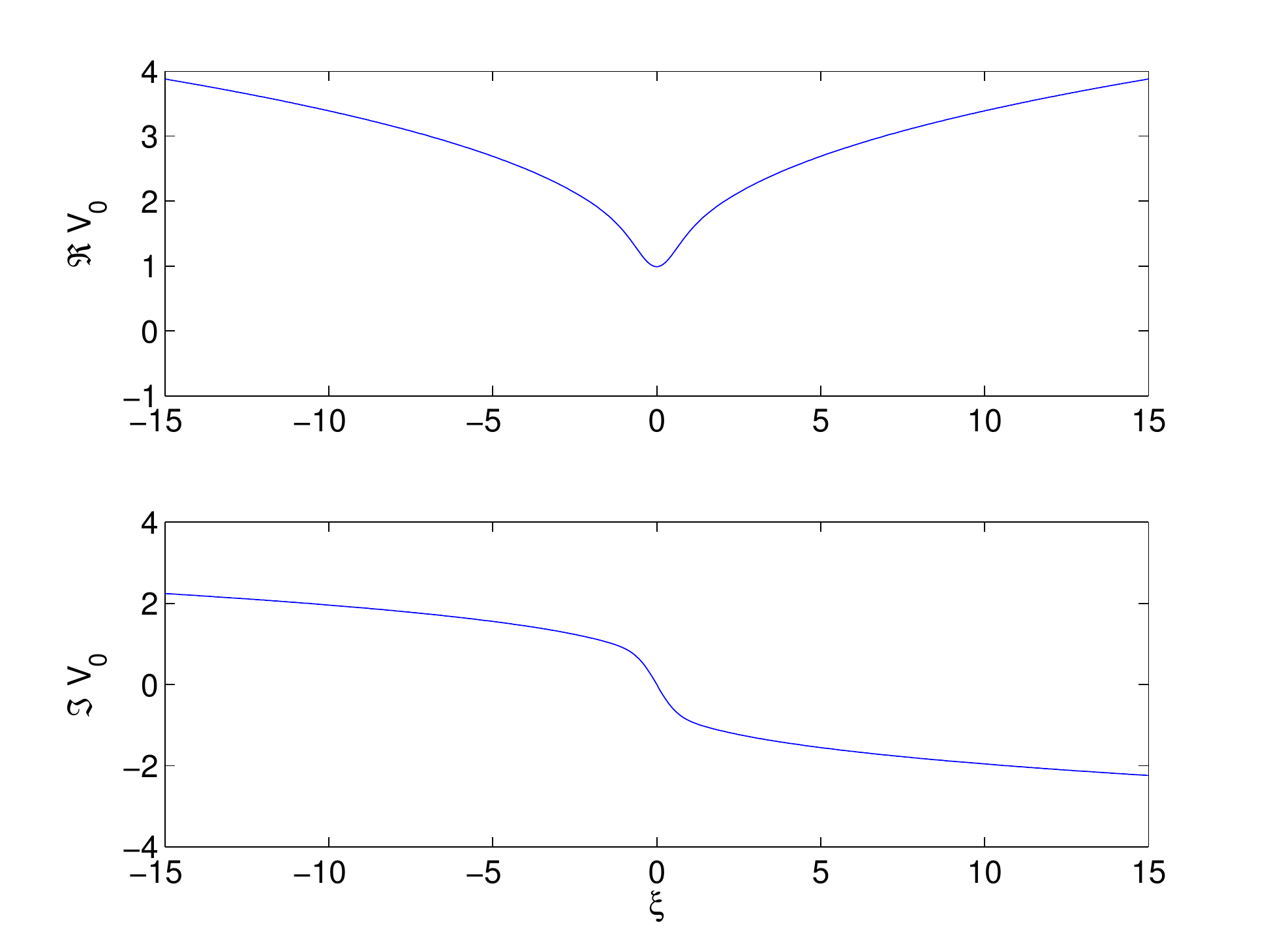}
\end{center}
 \caption{Solution $V_0(x,t)$ to \PItwo\ satisfying the asymptotic 
condition (\ref{asym}) for  $x=\exp(i\phi)\xi$, 
$\xi\in\mathbb{R}$ and $\phi=5\pi/2$; on the left for $t=-1$, on the right for $t=0$.}
\label{pI2triphi5pi2b0t0}
\end{figure}

If we consider the same solution for $t=0$ on a parallel to the imaginary axis, 
$x=\exp(i\phi)\xi+b$, $\xi,b\in\mathbb{R}$ and $\phi=5\pi/2$, the 
solution becomes more peaked for positive $b$ as can be seen in 
Fig.~\ref{pI2triphi0b55t0}. It appears as if this peak will become 
a pole for larger $b$, but this cannot be determined with the used 
numerical methods. 
\begin{figure}[htb!]
\begin{center}
\includegraphics[width=0.7\textwidth]{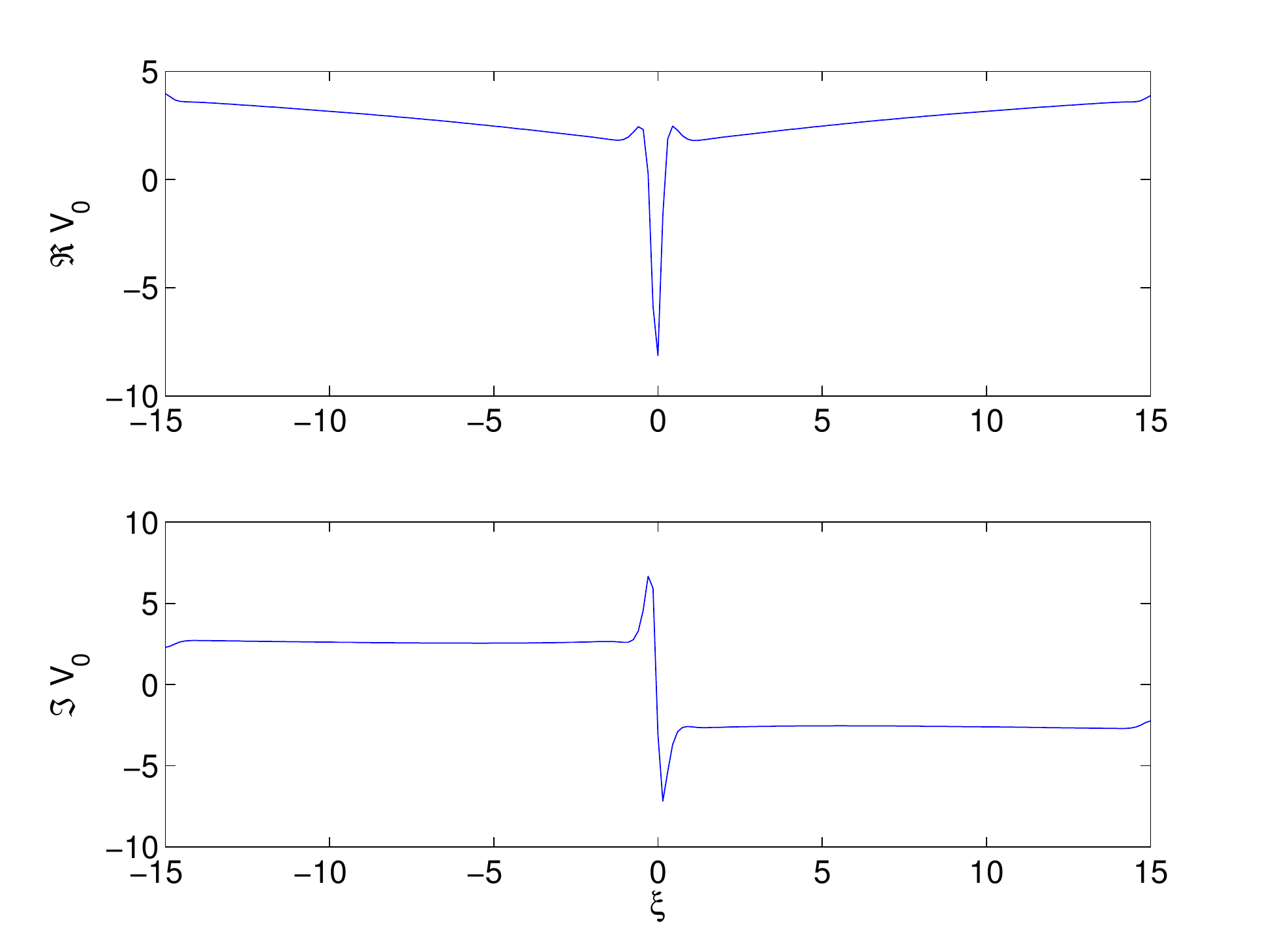}
\end{center}
 \caption{Solution $V_0(x,t)$ to \PItwo\ satisfying the asymptotic 
condition (\ref{asym}) for $t=0$ and $x=\exp(i\phi)\xi+b$, 
$\xi\in\mathbb{R}$ and $\phi=5\pi/2$ for $b=5.5$.}
\label{pI2triphi0b55t0}
\end{figure}

If one changes the angle $\phi$ for $b=0$, the solution will become 
trigonometric on the lines with 
$\phi=15\pi/7+3/7\arctan(1/\sqrt{5})\sim0.6290+2\pi$. This behavior is 
illustrated in Fig.~\ref{pI2triphi6932b0t0} for a slightly larger 
$\phi$. 
\begin{figure}[htb!]
\begin{center}
\includegraphics[width=0.7\textwidth]{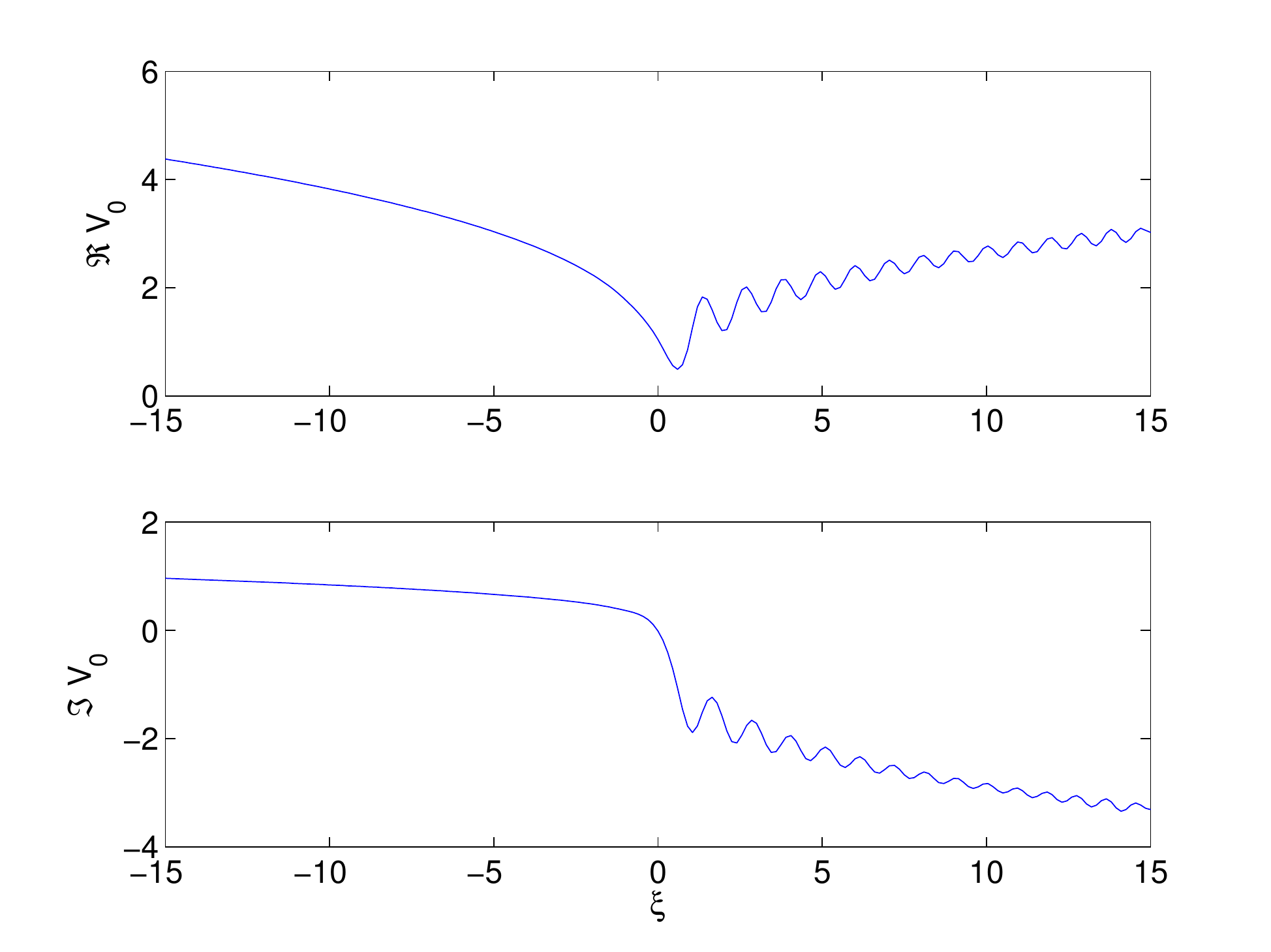}
\end{center}
 \caption{Solution $V_0(x,t)$ to \PItwo\ satisfying the asymptotic 
condition (\ref{asym}) for $t=0$ and $x=\exp(i\phi)\xi$, 
$\xi\in\mathbb{R}$ and $\phi=2\pi+0.6932$.}
\label{pI2triphi6932b0t0}
\end{figure}

As in the case of the tritronqu\'ee solutions of \PI\ in 
\cite{DubrovinGravaKlein}, it is 
possible to study the \PItwo\ solution $V_{0}(x,t)$ with the condition (\ref{asym}) in 
the sector of the complex plane with $\mbox{arg}x\in[-\phi, \phi]$  in the cases 
shown in Figure~\ref{figPI2phit}. We analytically continue  the cubic 
root not to be branched in the shown sector close to the negative real axis. 
For $t=0$, one obtains Figure~\ref{PI2trireal_typeI} for the real part of the 
solution and  Figure~\ref{PI2triimag_typeI} for its imaginary part. It 
appears that there are no poles in the shown sectors even for 
finite $|x|$. 
\begin{figure}[htb!]
\begin{center}
    \includegraphics[width=0.7\textwidth]{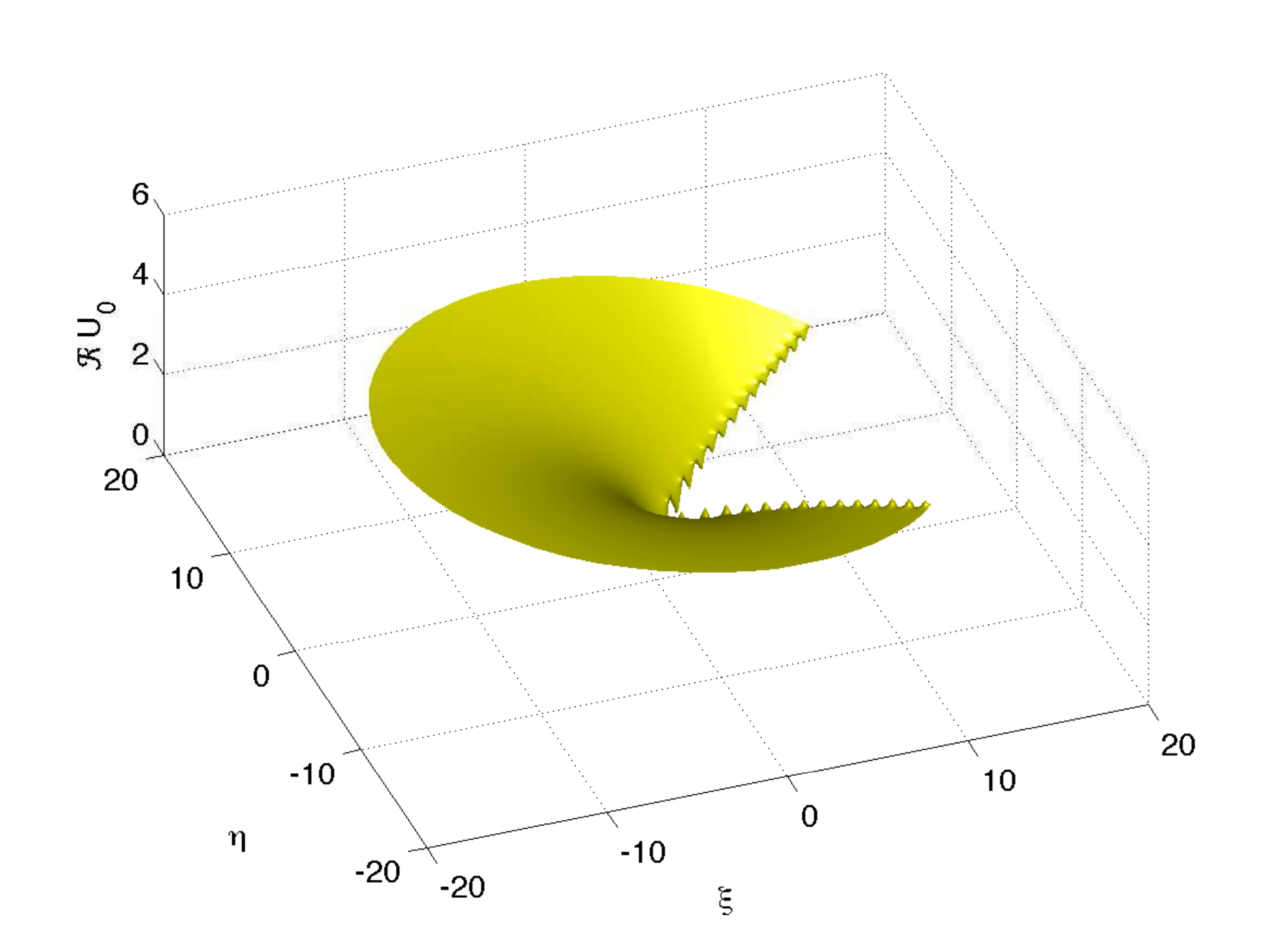}
\end{center}
 \caption{Real part of the type I tritronqu\'ee solution to the \PItwo\ equation (\ref{p12}) with 
 asymptotic condition (\ref{asym}) in the complex plane for $t=0$.}
   \label{PI2trireal_typeI}
\end{figure}
\begin{figure}[htb!]
\begin{center}
    \includegraphics[width=0.7\textwidth]{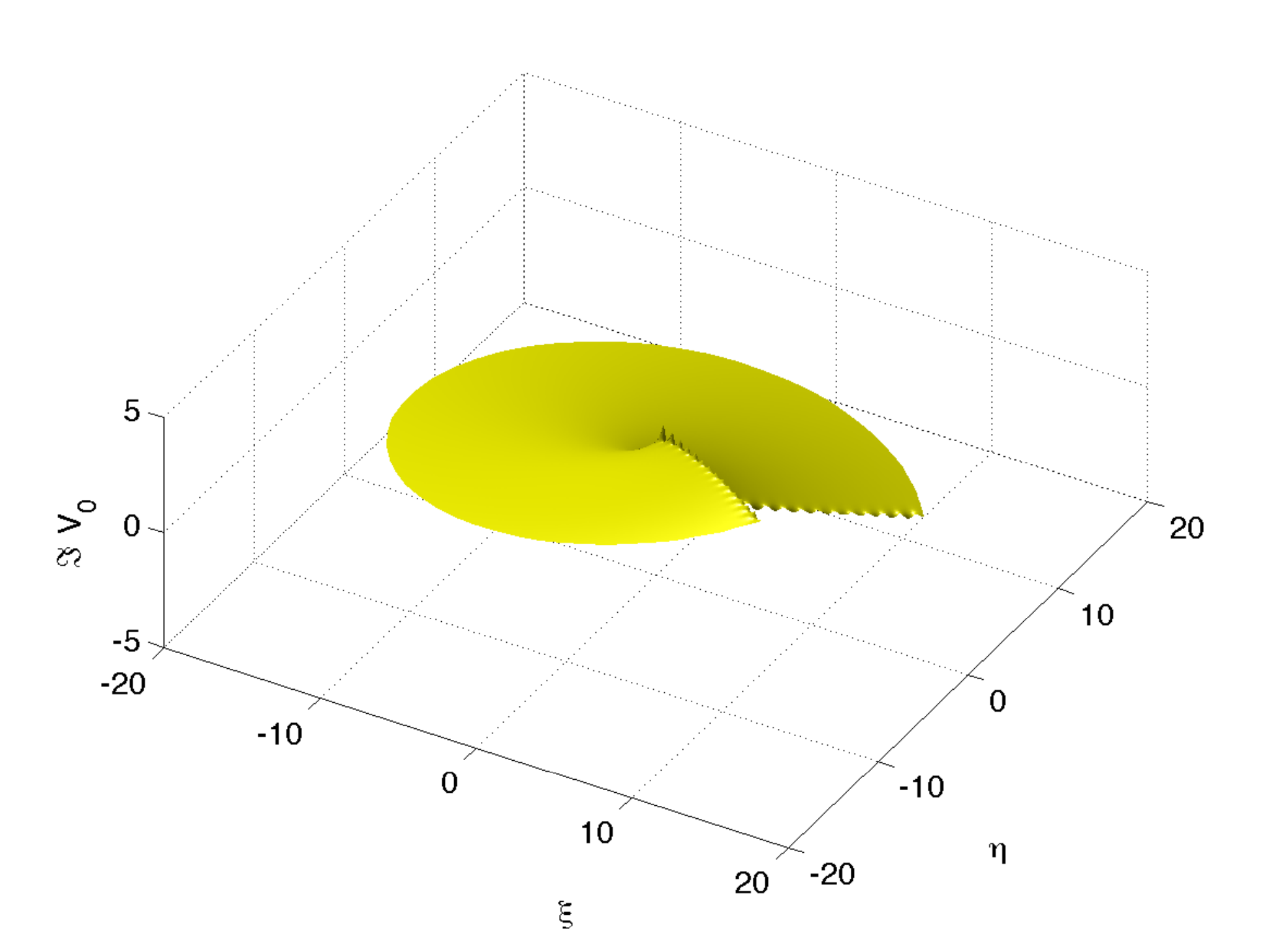}
\end{center}
 \caption{Imaginary part of the type I tritronqu\'ee solution to the \PItwo\ equation (\ref{p12}) with 
 asymptotic condition (\ref{asym}) in the complex plane for $t=0$.}
   \label{PI2triimag_typeI}
\end{figure}

\subsection{Type II tritronqu\'ee solutions}
The solution $U_0(x,t)$ characterized by the asymptotic 
behavior (\ref{asym}) is real and pole-free on the real line 
for all $t\in{\mathbb R}$ \cite{CV}. For large negative $t$, the solution
has no oscillations, but they appear for $t\sim0$, 
see Figure~\ref{figPI2t}. 
\begin{figure}[htb!]
\begin{center}
\includegraphics[width=0.8\textwidth]{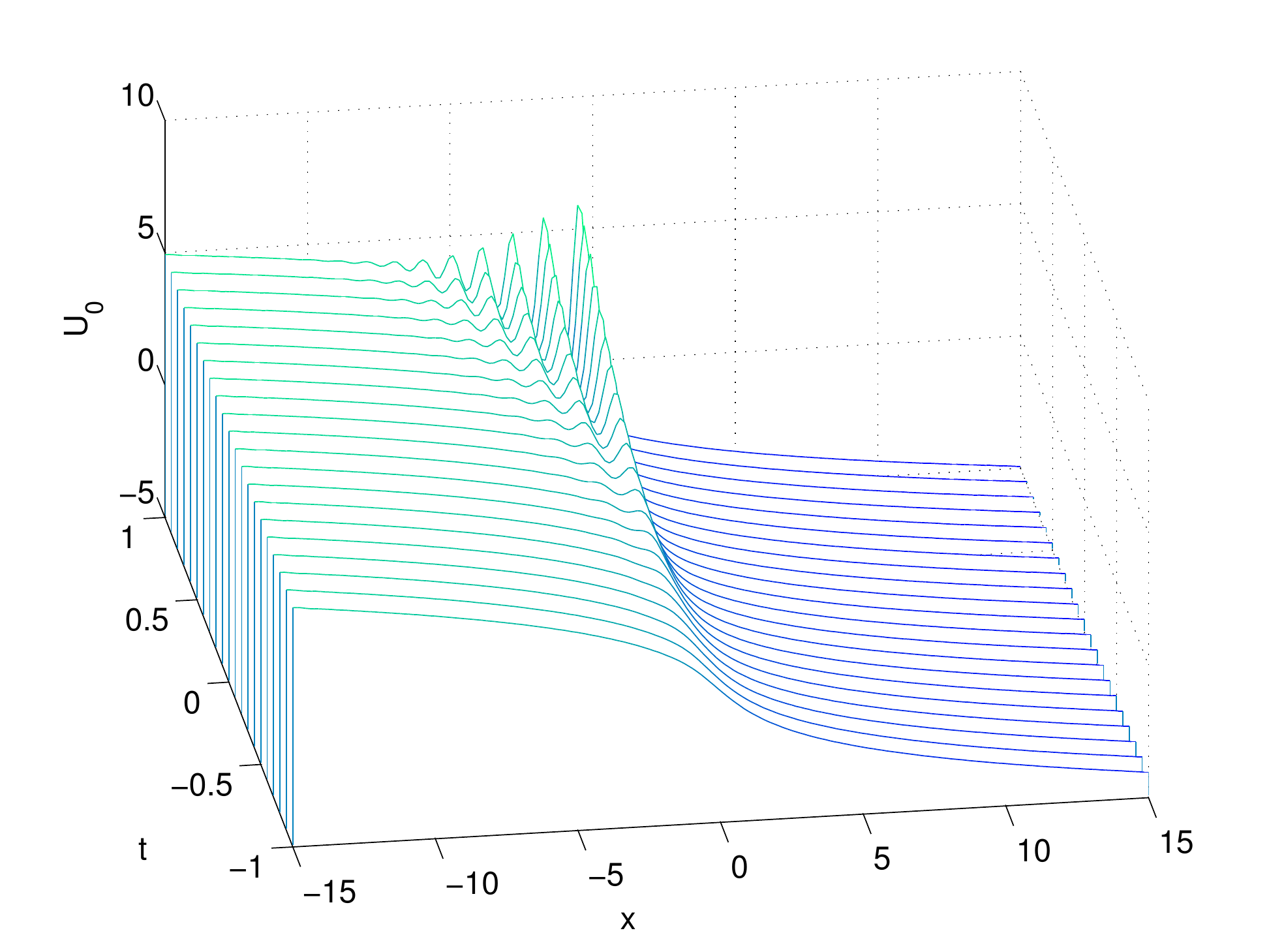}
\end{center}
 \caption{Solution $U_0(x,t)$ for $x\in\mathbb{R}$ to \PItwo\ satisfying the asymptotic 
condition (\ref{asym}) for several values of $t$.}
\label{figPI2t}
\end{figure}
For positive $t$, the oscillations rapidly develop, see
Figure~\ref{figPI2t2}.
\begin{figure}[htb!]
\begin{center}
\includegraphics[width=0.8\textwidth]{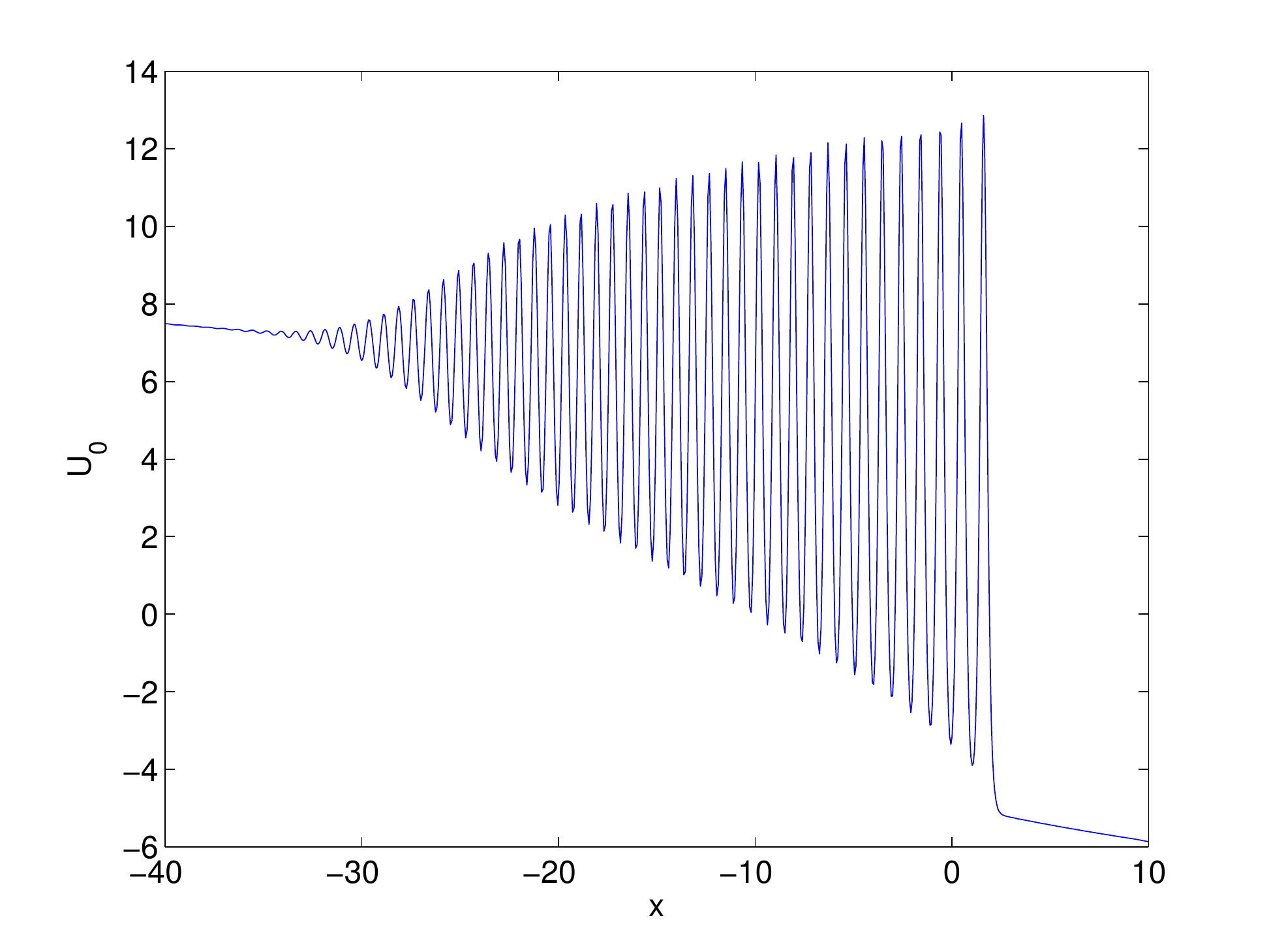}
\end{center}
\caption{Solution $U_0(x,t)$ to \PItwo \ for $t=4$.}
\label{figPI2t2}
\end{figure}

The location of the oscillations for $t>0$ in Figure~\ref{figPI2t} 
agrees with the theoretical prediction of the interval
$\bigl(-2\sqrt3\,t^{3/2},\tfrac{2\sqrt5}{9\sqrt3}t^{3/2}\bigr)$, 
see Figure~\ref{fig14} which belongs to the sector of the elliptic asymptotic 
behavior of $U_0(x,t)$. This indicates the presence of poles in 
the complex plane approaching the real axis \cite{Cl},\cite{GaSul}. 

To test this we solve 
\PItwo\ with the asymptotic 
condition (\ref{asym}) on a line parallel to the real axis parameterized by 
$x=\xi+b$ with $\xi\in\mathbb{R}$ and $b\in i\mathbb{R}$. We vary $b$ 
gradually from zero and stop when observing a strong increase in 
the absolute value of the solution. This can be seen in Figure~\ref{figPI2bt} 
for two values of $t$. For positive $t$, the presumed singularity moves 
very close to the real axis. 
\begin{figure}[htb!]
\begin{center}
    \includegraphics[width=0.45\textwidth]{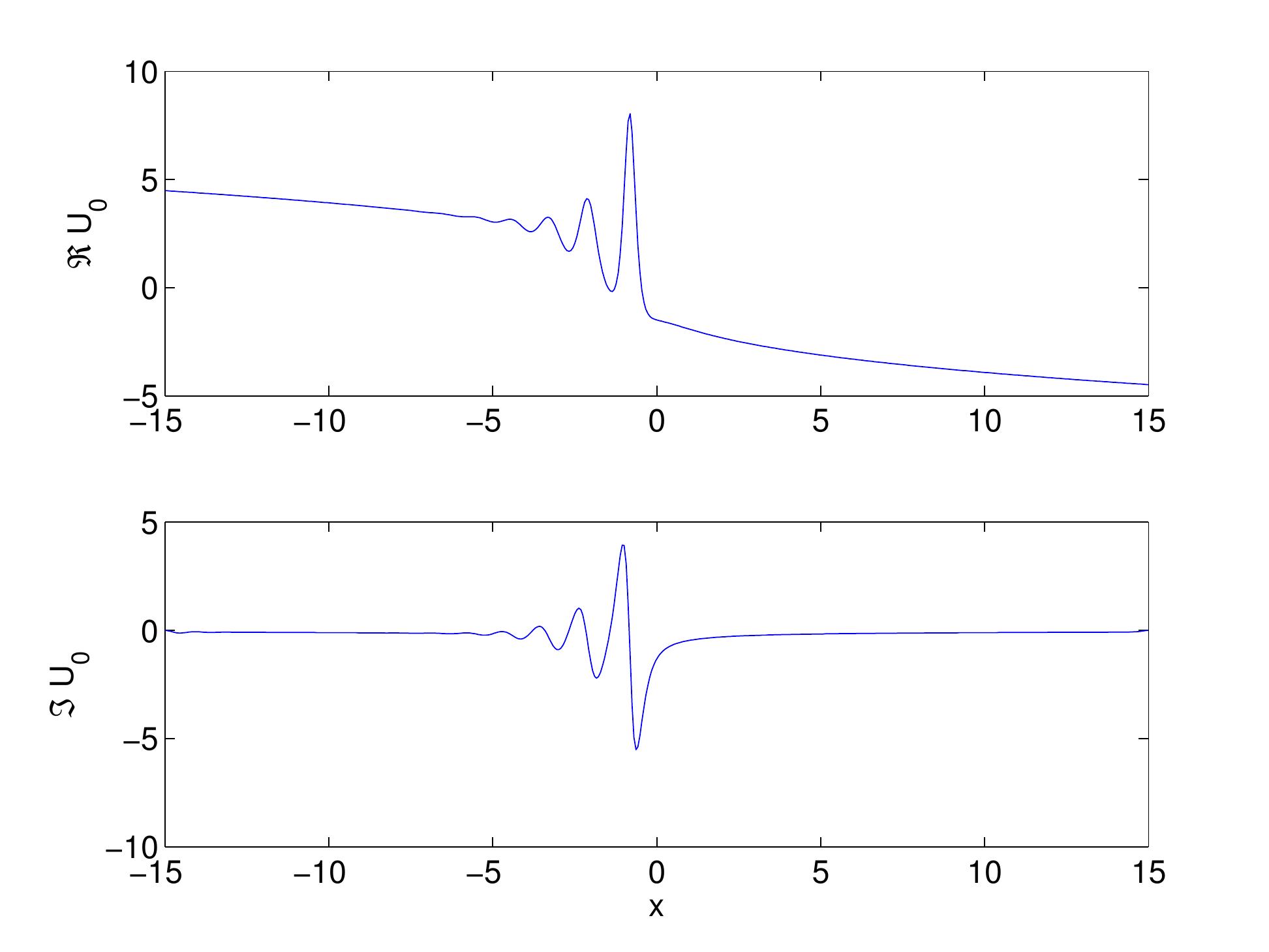}
    \includegraphics[width=0.45\textwidth]{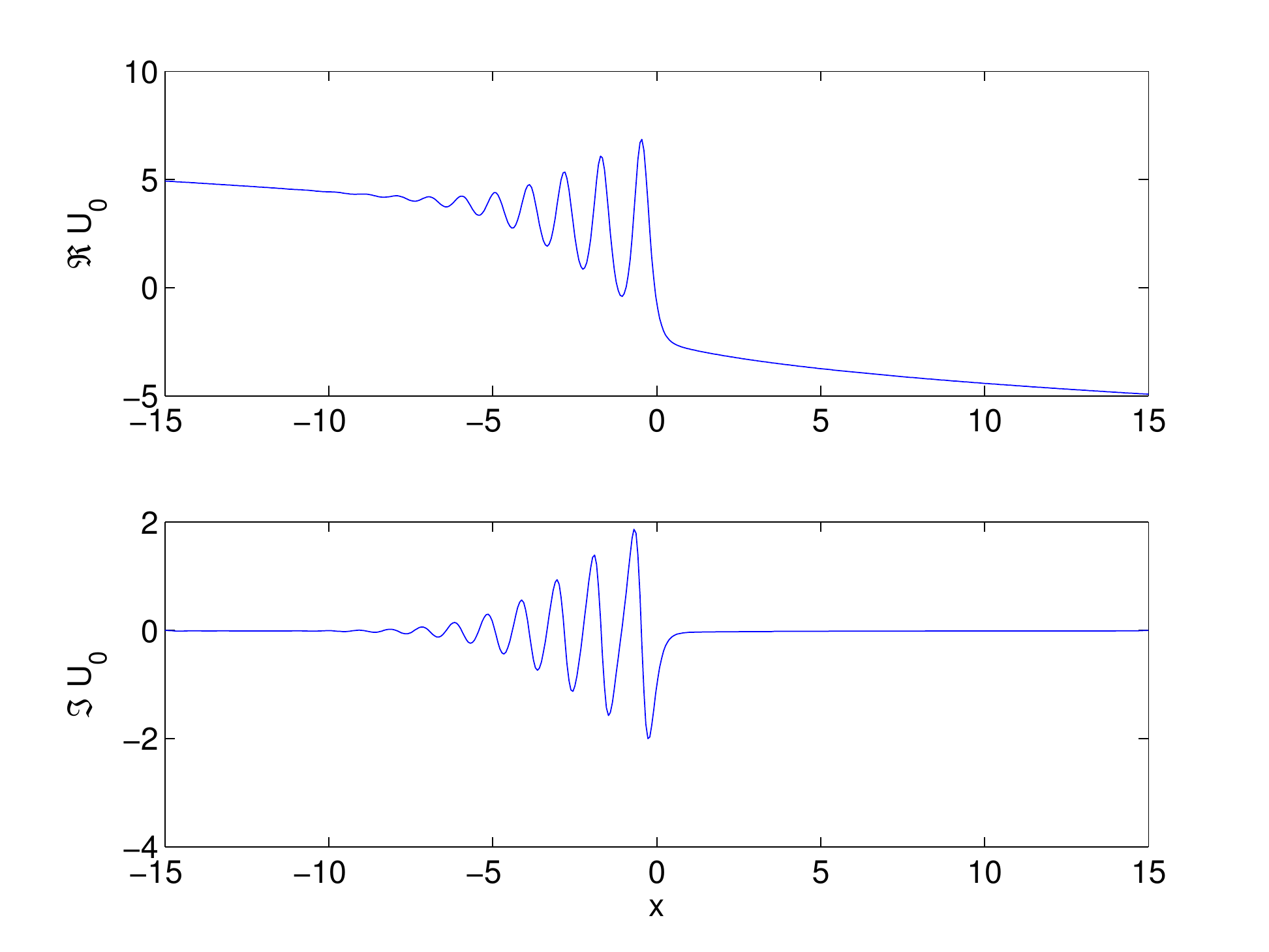}
\end{center}
 \caption{Solution to the \PItwo\ equation (\ref{p12}) with 
 asymptotic condition (\ref{asym}) on a line parallel to the real 
 axis, $x=\xi+b$, $\xi\in\mathbb{R}$; on the left the case $t=0$ and $b=0.8i$, on the 
 right $t=1$ and $b=0.1i$. We are plotting the maximum values of $|b|$ for which the numerical code does not break.
   In the second case $|b|$ is smaller because the poles of the solution are approaching the real axis as $t$ is getting bigger.}
   \label{figPI2bt}
\end{figure}

In the same way it is possible to study \PItwo\ (\ref{p12}) with the 
condition (\ref{asym}) on lines in the complex plane given by
$x=e^{i\phi}\xi$ with $\xi,\phi\in\mathbb{R}$ with the goal to 
identify numerically the sectors in the complex plane, 
where the solution has no poles.
 
In Figure~\ref{figPI2phit} we show the solution for $t=0$ close to the 
real axis for $\phi=0.15$. It can be seen that the amplitude of the 
oscillations decreases slower than on the real axis. We expect to observe
the trigonometric behavior of the solution on the line with
$\phi=3/7\arctan(1/\sqrt{5})\sim0.1802$ for $\xi<0$. 
The closer one comes to this line, the less reliable the numerical 
solution is since the asymptotic series yielding after truncation 
the boundary data 
converges more and more slowly for a finite value of $|x|$. 
\begin{figure}[htb!]
\begin{center}
    \includegraphics[width=0.7\textwidth]{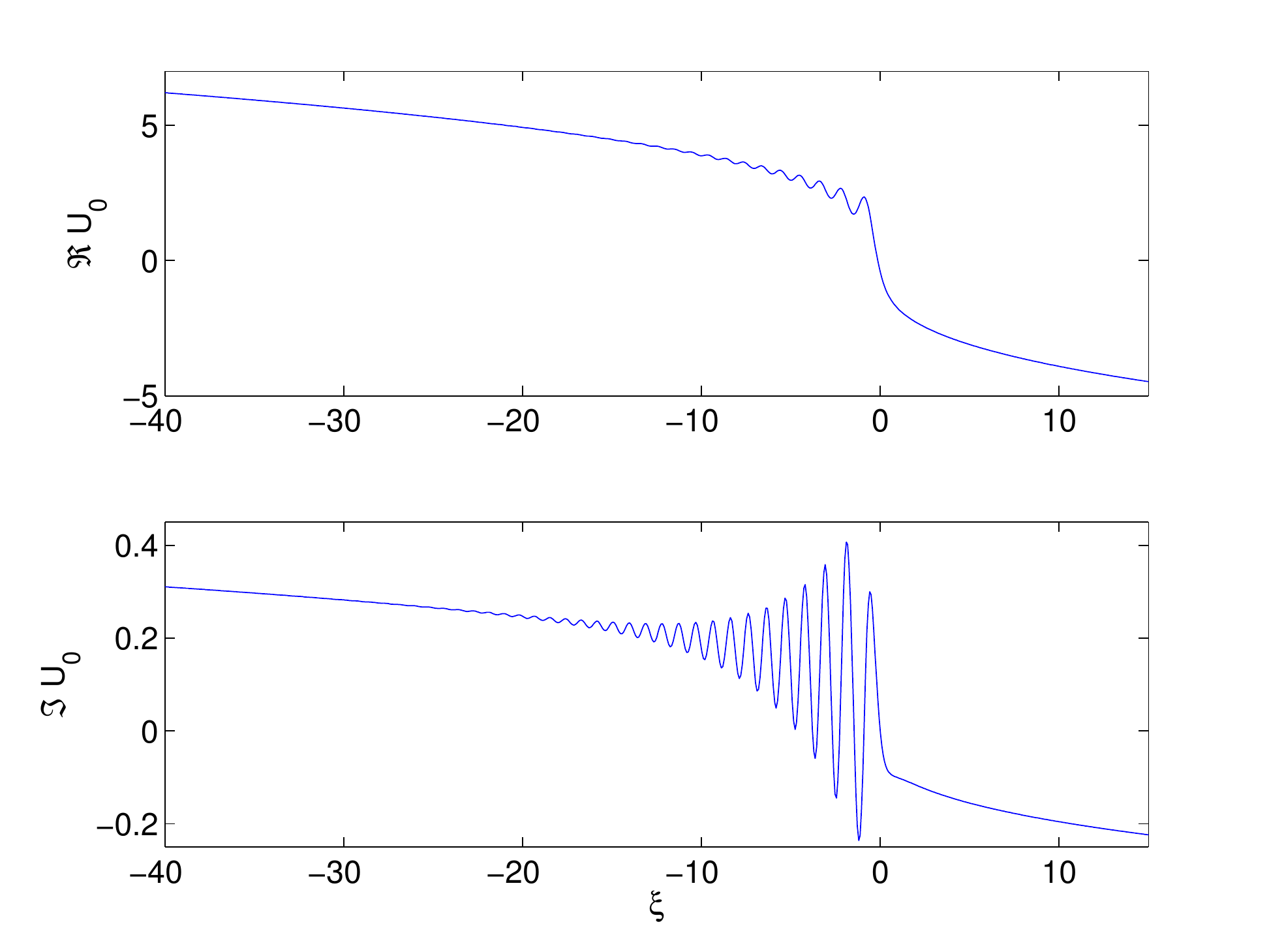}
\end{center}
\caption{Solution to the \PItwo\ equation (\ref{p12}) with 
asymptotic condition (\ref{asym}) on a line in the complex plane 
given by $x=e^{i\phi}\xi$ with $\xi,\phi\in\mathbb{R}$ for  
$t=0$ and $\phi=0.15$.}
\label{figPI2phit}
\end{figure}

As discussed in the previous sections, the regular sector is 
considerably larger in the vicinity of the positive real axis. 
In Figure~\ref{figPI2phitp} we show the solution on the half line 
$x=\xi e^{i\phi}$ for $\phi=1.5$ and $\xi>0$. As predicted the 
solution shows oscillations with asymptotically decreasing amplitude 
close to the line with  $\phi=3\pi/7+3/7\arctan(1/\sqrt{5})\sim1.5266$,
the dependence of $U_0(x,t)$ on $|x|$ becomes trigonometric. 
We are able to reach the value $\phi=1.5$. The regular sectors (angles between $\pi-0.18$ and $\pi+.18$ near the negative real axis and between $-1.5$ and $1.5$ near the positive real axis) are illustrated in 
Figures~\ref{figPI2phi} and~\ref{figPI2phiim} below. The oscillations near the boundaries of these sectors can be clearly recognized and are even more pronounced for the imaginary part.
\begin{figure}[htb!]
\begin{center}
    \includegraphics[width=0.6\textwidth]{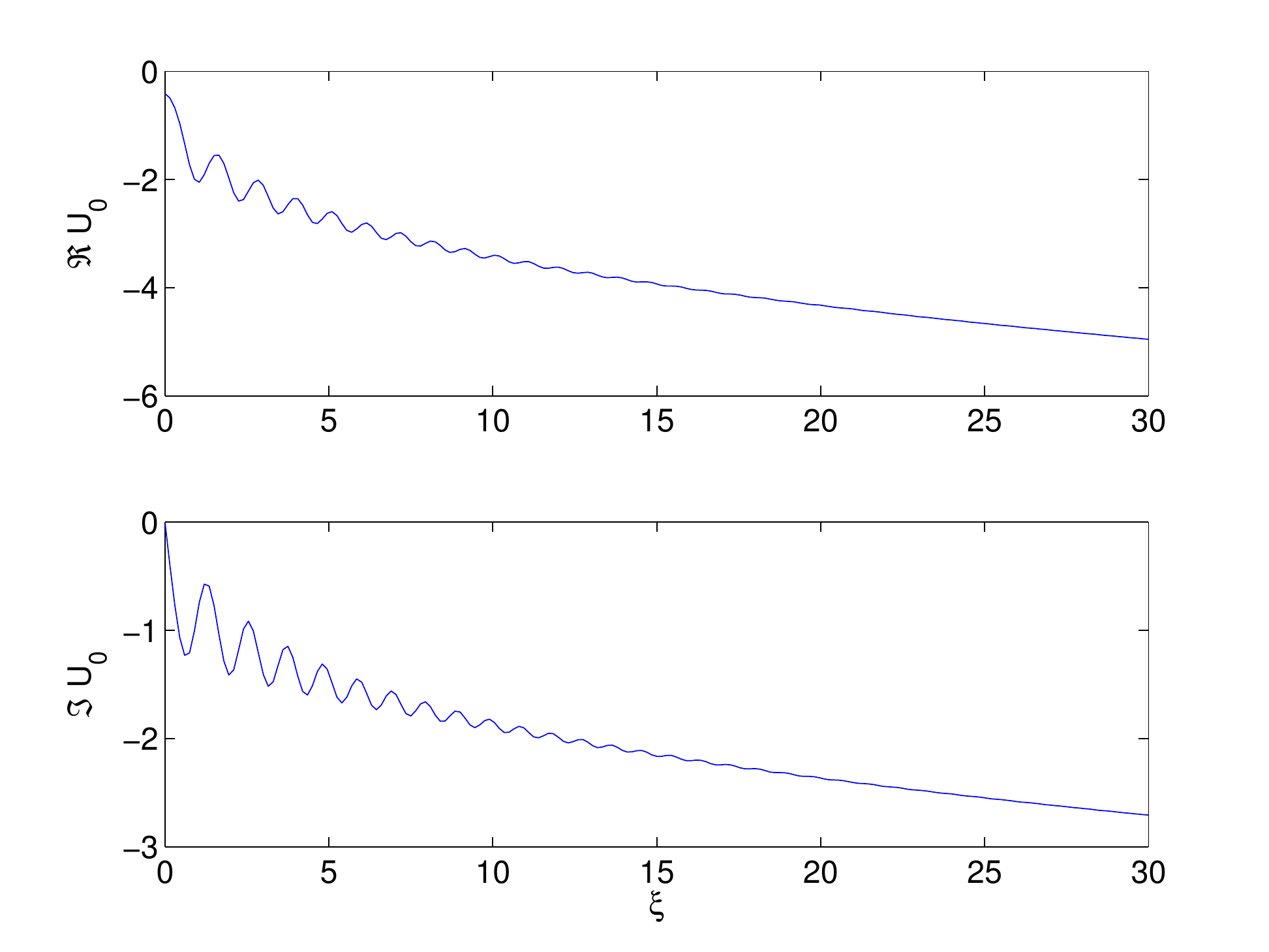}
\end{center}
 \caption{Solution to the PI2 equation (\ref{p12}) with 
 asymptotic condition (\ref{asym}) on a halfline in the complex plane 
 given by $x=e^{i\phi}\xi$ with $\xi>0$ for $t=0$ 
 and $\phi=1.5$.}
   \label{figPI2phitp}
\end{figure}

We again study the \PItwo\ solution with the condition (\ref{asym}) in 
the sectors of the complex plane between $-\phi$ and $\phi$ for the cases 
shown in Figure~\ref{figPI2phit}. We analytically continue  the cubic 
root not to be branched in the shown sectors close to the real axis. 
For $t=0$, one obtains Figure~\ref{figPI2phi} for the real part of the 
solution and  Figure~\ref{figPI2phiim} for the imaginary part. It 
appears that there are no poles in the shown sectors even for 
finite $|x|$. 
\begin{figure}[htb!]
\begin{center}
    \includegraphics[width=0.7\textwidth]{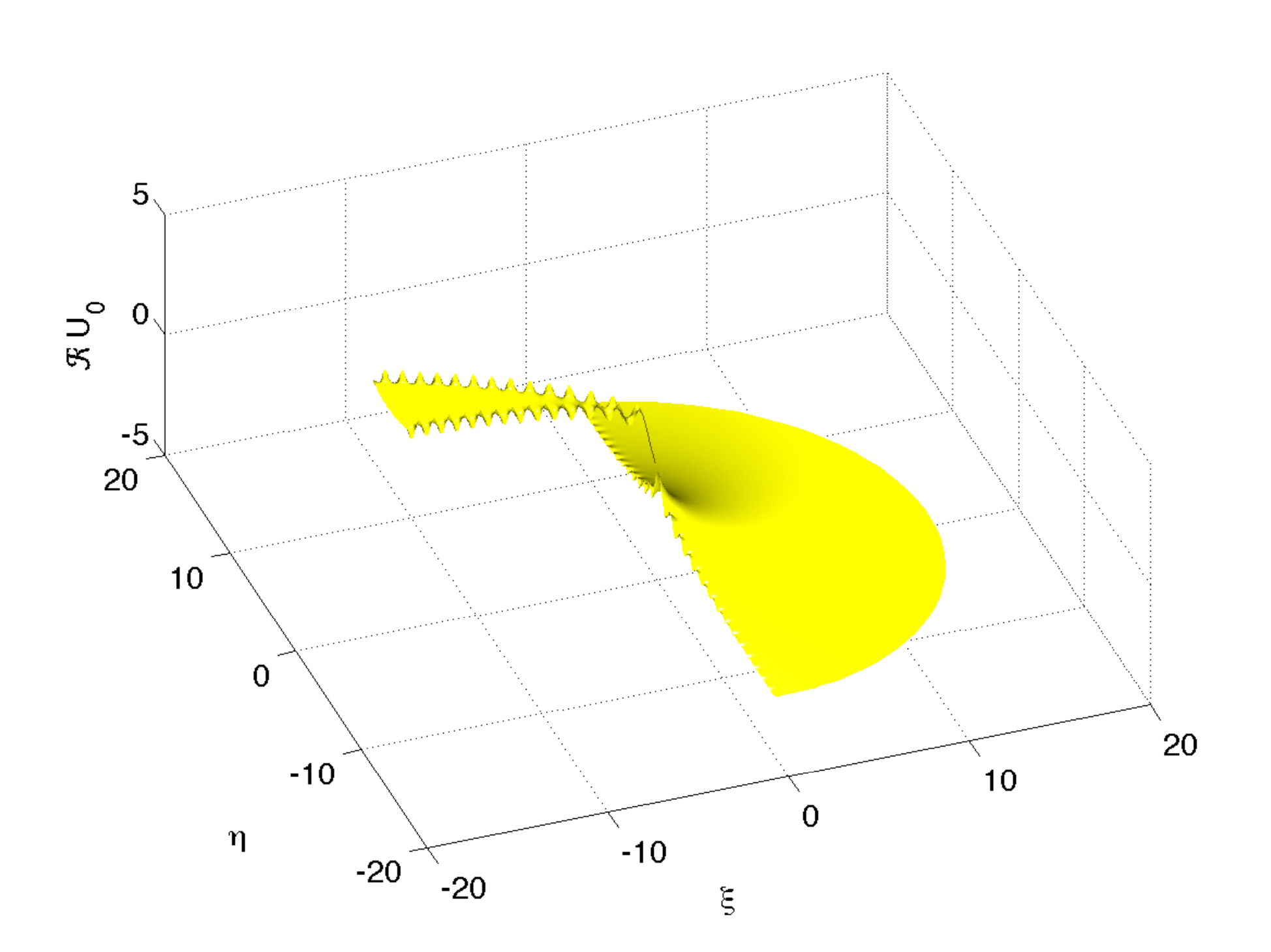}
\end{center}
 \caption{Real part of the solution to the \PItwo\ equation (\ref{p12}) with 
 asymptotic condition (\ref{asym}) in the complex plane for $t=0$.}
   \label{figPI2phi}
\end{figure}
\begin{figure}[htb!]
\begin{center}
    \includegraphics[width=0.7\textwidth]{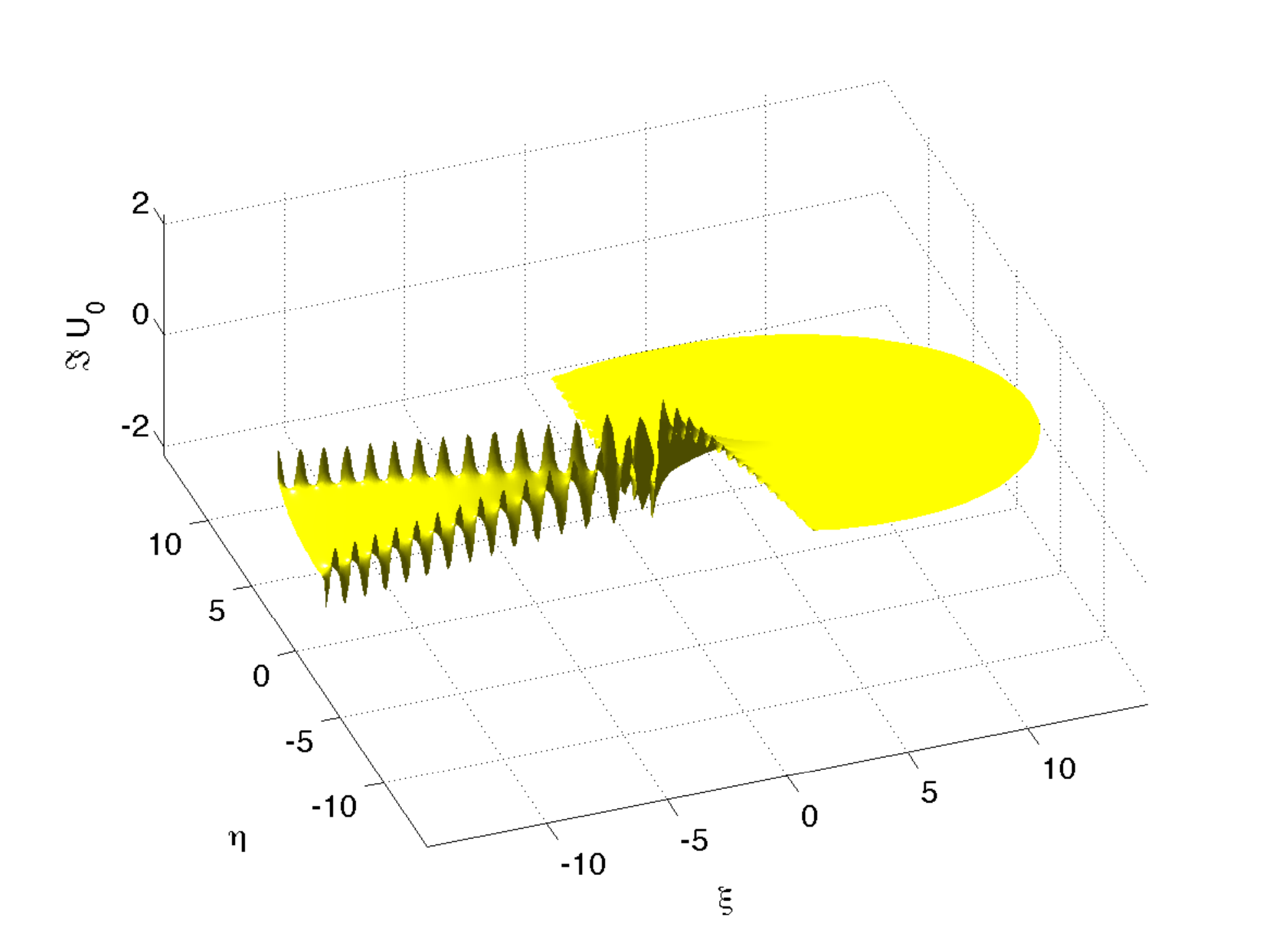}
\end{center}
 \caption{Imaginary part of the solution to the \PItwo\ equation (\ref{p12}) with 
 asymptotic condition (\ref{asym}) in the complex plane for $t=0$.}
   \label{figPI2phiim}
\end{figure}

It was shown in the previous sections, see e.g.\ (\ref{U0_sector}) 
and (\ref{V0_sector}) that the difference type I and type II 
solutions  is exponentially small  on the negative real axis. This 
can be seen in Fig.~\ref{figdiff}. The type II solution shows oscillations close to the origin which are not present for the type I solution, but then they agree in a way that no difference can be seen. In the left part of Fig.~\ref{figdiff} we therefore show the logarithmic plot of the absolute value of the difference between both solutions which decreases as expected exponentially with increasing $-x$, i.e., lineary in a logarithmic plot until the difference is of the order of the rounding error.
\begin{figure}[htb!]
\begin{center}
    \includegraphics[width=0.45\textwidth]{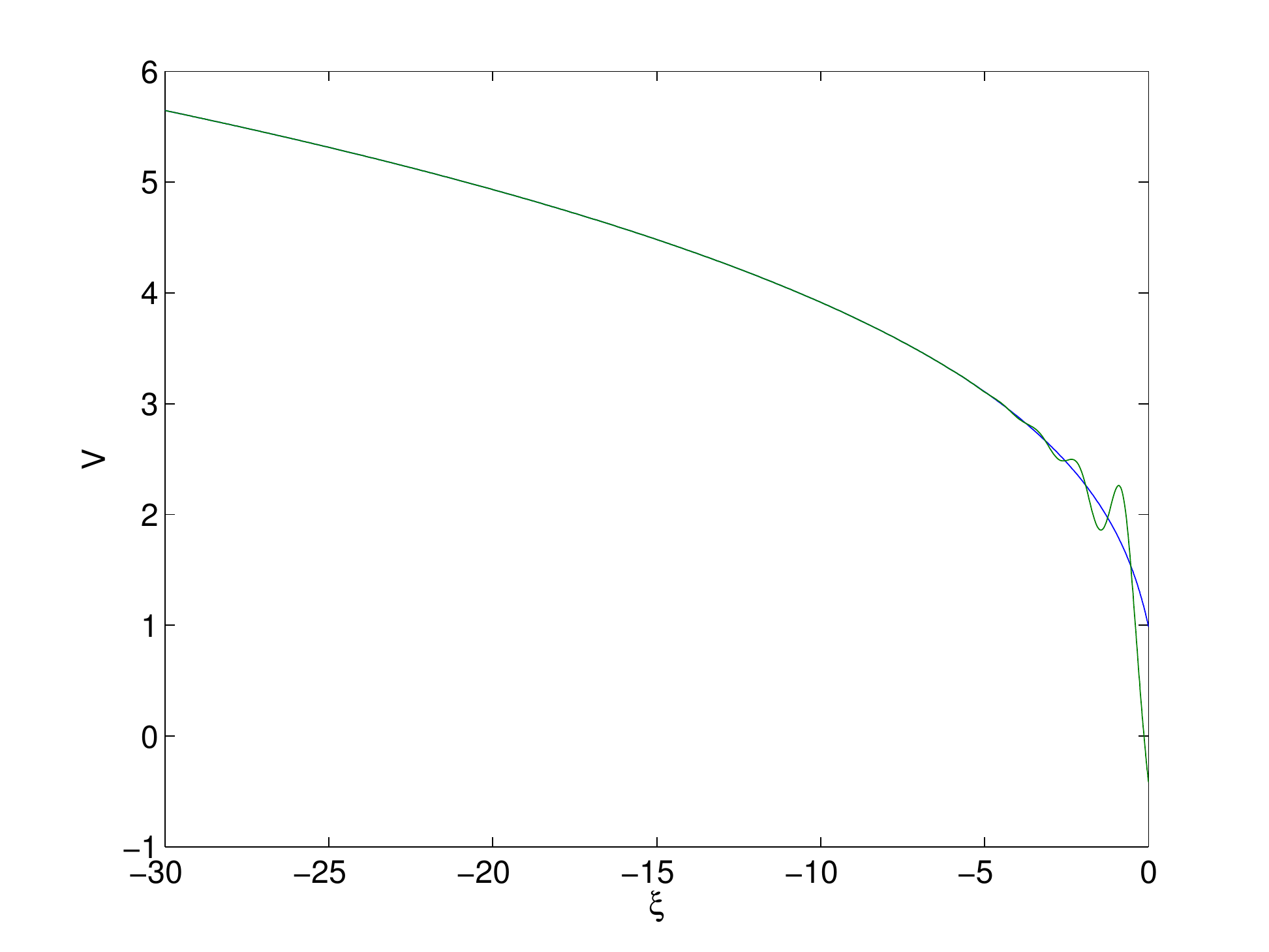}
    \includegraphics[width=0.45\textwidth]{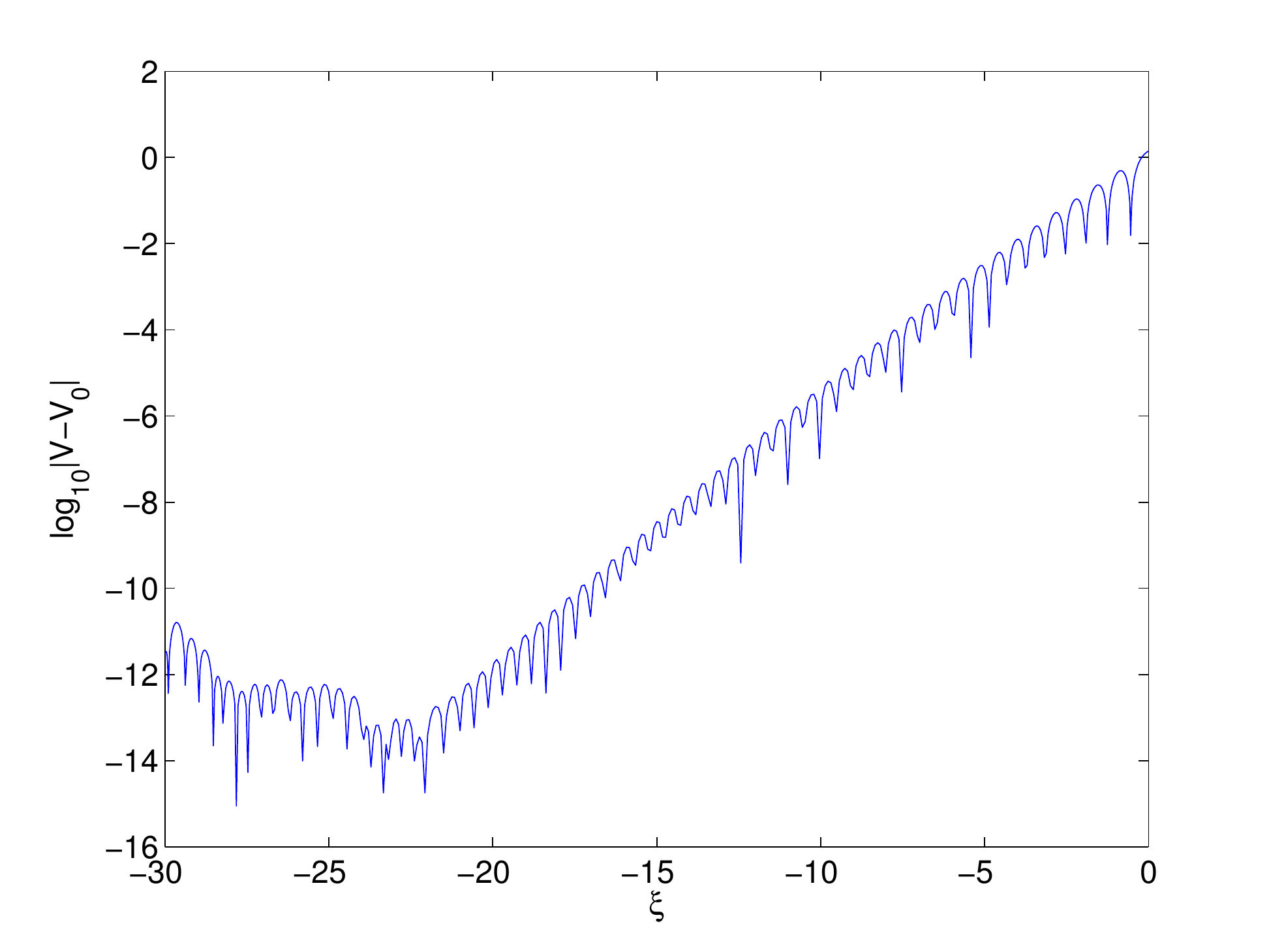}
\end{center}
 \caption{Type I and type II solution to the \PItwo\ equation (\ref{p12}) with 
 asymptotic condition (\ref{asym}) on the negative real axis for 
 $t=0$; on the left the type I solution in blue and the type II solution in green, on the 
 right the difference between both.}
   \label{figdiff}
\end{figure}

\section*{Acknowledgements} 
The work of A.K. was partially supported by 
the project SPbGU N 11.38.215.2014. He also thanks the ERC grant FroMPDEs for the support during his stay 
in  SISSA  when the part of the work  was done.\\

TG  was partially supported by  PRIN  Grant ÒGeometric and analytic theory of Hamiltonian systems in finite and infinite dimensionsÓ of Italian Ministry of Universities and Researches and by the FP7 IRSES grant RIMMP ÒRandom and Integrable Models in Mathematical PhysicsÓ. 

%%%%%%%%%%%%%%%%%%%%%%%%%%%%%
\bibliographystyle{plain}
%\bibliography{}
\ifx\undefined\bysame
\newcommand{\bysame}{\leavevmode\hbox to3em{\hrulefill}\,}
\fi

\end{document}